\documentclass[a4paper,onecolumn,11pt,unpublished]{quantumarticle}
\pdfoutput=1

\usepackage{amsmath}
\usepackage{amsfonts}
\usepackage{mathtools}
\usepackage{makecell}
\usepackage{multirow}
\usepackage{amsthm}

\usepackage{enumerate}
\usepackage{amssymb}
\usepackage{xparse}
\usepackage[acronym]{glossaries}
\usepackage{quantikz}
\usepackage[inline,shortlabels]{enumitem}
\usepackage[numbers,sort&compress]{natbib}
\usepackage{hyperref}

\hypersetup{linktocpage,allcolors=blue}

\usepackage[noabbrev,capitalize]{cleveref}

\usepackage{array}
\usepackage{booktabs}
\usepackage{tabularx}

\newcolumntype{L}[1]{>{\raggedright\arraybackslash}p{#1}}
\newcolumntype{Y}{>{\raggedright\arraybackslash}X}

\newtheorem{theorem}{Theorem}
\newtheorem{corollary}{Corollary}[theorem]
\newtheorem{lemma}{Lemma}
\newtheorem{definition}{Definition}
\newtheorem{proposition}{Proposition}
\newtheorem{remark}{Remark}

\newtheorem{problem}{Problem}
\crefname{problem}{problem}{problems}
\Crefname{problem}{Problem}{Problems}

\newcommand{\R}{\mathbb{R}}
\newcommand{\N}{\mathbb{N}}
\newcommand{\C}{\mathbb{C}}

\DeclarePairedDelimiter\abs{\lvert}{\rvert}
\DeclarePairedDelimiter\norm{\lVert}{\rVert}

\DeclarePairedDelimiter\roundBracket{\lparen}{\rparen}
\newcommand{\round}{\roundBracket}
\DeclarePairedDelimiter\multiBracket{[}{]}
\newcommand{\multi}{\multiBracket}

\NewDocumentCommand{\sign}{o}{\IfNoValueTF {#1} {\operatorname{sign}} {\operatorname{sign}\round{#1}} }

\newcommand{\ball}{\mathcal B}
\newcommand{\ballbrack}{\round}

\newcommand{\opn}[1]{\operatorname{#1}}

\DeclarePairedDelimiter{\setbrack}{\{} {\}}
\newcommand{\set}{\setbrack*}

\NewDocumentCommand{\B}{o m m}{\IfNoValueTF {#1} {\ball_{#2} \ballbrack {#3}} {\ball^{#1}_{#2} \ballbrack {#3}}}

\NewDocumentCommand{\F}{o}{\IfNoValueTF {#1} {\operatorname{F}} {\operatorname{F}\multi{#1}} }

\NewDocumentCommand{\reach}{o}{%
    \ensuremath{%
        \IfNoValueTF{#1}{%
            \operatorname{R}%
        }{%
            \operatorname{R}\left[#1\right]%
        }%
    }%
}

\NewDocumentCommand{\chainreach}{o}{%
    \ensuremath{%
        \IfNoValueTF{#1}{%
            \operatorname{CR}%
        }{%
            \operatorname{CR}\left[#1\right]%
        }%
    }%
}

\NewDocumentCommand{\jac}{mmo}{%
    \ensuremath{%
        \IfNoValueTF{#3}{%
            \frac{\partial #1}{\partial #2}%
        }{%
            \left.
            \frac{\partial #1}{\partial #2}
            \right\rvert_{#3}%
        }%
    }%
}

\newcommand{\dd}[1]{\hspace{2pt}d#1}

\usepackage{xcolor}
\usepackage[utf8]{inputenc}

\colorlet{juncolor}{orange}

\newcommand{\junemail}[1]{\email{#1}}

\begin{document}

\title{Quantum query complexity of Hurwitz stability testing}

\author{Maxwell Fitzsimmons}
\orcid{0000-0002-3764-4542}
\email[Corresponding author: ]{mfitzsim@uwaterloo.ca}
\affiliation{University of Waterloo, Waterloo, Ontario N2L 3G1, Canada}

\author{Mohamed Serry}
\junemail{mserry@uwaterloo.ca}
\affiliation{University of Waterloo, Waterloo, Ontario N2L 3G1, Canada}

\author{Meenu Kumari}
\junemail{mkumari@uwaterloo.ca}
\affiliation{{Digital Technologies, National Research Council Canada, Ottawa, Ontario, Canada}}
\affiliation{{Institute for Quantum Computing$,$ University of Waterloo$,$ Ontario N2L 3G1$,$ Canada}}

\author{Jun Liu}
\junemail{j.liu@uwaterloo.ca}
\affiliation{University of Waterloo, Waterloo, Ontario N2L 3G1, Canada}

\keywords{quantum algorithms, stability testing, Hurwitz matrices, block-encoding, matrix sign function, quantum query complexity}

\maketitle

\begin{abstract}
Determining whether every trajectory of the linear system \(\dot{x}=Ax\) converges to zero—or, equivalently, whether \(A\) is Hurwitz stable—is a fundamental problem in control theory. 
It arises, for example, in verifying the stability of feedback controllers and state observers, as well as of linearized dynamical models.
In this work, we study the quantum query complexity of deciding whether a matrix \(A\in\mathbb{C}^{n\times n}\) is Hurwitz stable, given an exact normalized block-encoding of \(A\). 
The query complexity is characterized in terms of the system dimension \(n\) and a supplied lower bound \(0<\nu\leq \omega_A\) on the imaginary-axis distance
\(
\omega_A:=
\min_{t\in\mathbb{R}}
\sigma_{\min}(A-\mathrm{i}tI)\), where $\sigma_{\min}(A)$ is the smallest singular value of $A$.
We prove a query-complexity lower bound of  \(\Omega(\sqrt{n}/\nu)\), which holds even when \(A\) is restricted to be real and diagonal. 
For general matrices with nonzero $\omega_A$, we encode an approximation of the right-half-plane spectral projector \(\sign^+(A):=(I+\operatorname{sign}(A))/2\) as a block of a matrix inverse, where $A$ is Hurwitz if and only if the projector is zero. 
To test this, we produce a quantum algorithm which determines if a $(\beta,b,1/(4\sqrt{n}))$-block-encoded projector matrix (like $\sign^+(A)$) is zero or not; for constant success probability, it has complexity $\Theta(\beta\sqrt{n})$. 

When the zero test is applied to specific approximations of $\sign^+(A)$, this yields an upper bound of \(O(\sqrt{n}\,\nu^{-2}\log(\sqrt{n}/\nu))\) queries in the general case, and \(O(\sqrt{n}\,\nu^{-1}\log n)\) when $A$ is Hermitian or normal. 
The latter upper bound matches the lower bound up to a logarithmic factor. 
Thus the dimension dependence of quantum Hurwitz testing is nearly optimal, while the remaining factor in the non-normal case isolates an open conditioning gap.
\end{abstract}
\section{Introduction}\label{sec:into}

Do all trajectories of a dynamical system converge to zero from every initial condition? For the continuous-time linear system
\begin{equation}
    \dot{x}=Ax,
    \label{eq:intro-linear-system}
\end{equation}
the answer is affirmative exactly when every eigenvalue of \(A\) lies
in the open left half-plane. In this case,  \(A\) is said to be \textit{Hurwitz stable}, or simply Hurwitz.
Verifying Hurwitz stability is a fundamental task that appears consistently in systems and control. 
A  feedback gain matrix \(K\)  stabilizes the linear control system $\dot{x}=\bar{A}x+Bu$ under the state-feedback law $u=Kx$  precisely
when the closed-loop matrix \(\bar{A}+BK\) is Hurwitz. Also, for the system \(\dot{x}=\bar{A}x\) with measurement
\(y=Cx\), a proposed observer gain matrix \(L\) in
\(\dot{\tilde{x}}=\bar{A}\tilde{x}+L(y-C\tilde{x})\) ensures that the
estimation error \(x-\tilde{x}\) converges to zero precisely when
\(\bar{A}-LC\) is Hurwitz \cite{AstromMurray2021}. Furthermore, verifying Hurwitz stability is central to the local analysis of
nonlinear systems. For a nonlinear system \(\dot{x}=f(x)\) with an
equilibrium \(f(x^\star)=0\), the eigenvalues of the Jacobian
\(Df(x^\star)\) determine the local asymptotic stability of \(x^\star\)
whenever the equilibrium is hyperbolic \cite{AstromMurray2021}.
In this work, we focus on the decision problem of verifying asymptotic stability: whether \(A\), or \(Df(x^\star)\) at an equilibrium \(x^\star\), is Hurwitz.



{  Quantum algorithms for analyzing dynamical systems, such as those modeled by
differential equations, typically encode a solution over a finite time
horizon. Stability can improve the complexity of such simulations and may
even permit time fast-forwarding \cite{JenningsEtAl2024}. Hurwitz testing
poses a different question: it produces a qualitative decision that applies
to every initial condition, without requiring the preparation or simulation
of a particular trajectory. We ask:
\begin{quote}
\emph{How many queries to a block-encoding of matrix \(A\) are necessary and
sufficient to decide whether \(A\) is Hurwitz?}
\end{quote}
We answer this question given a known lower bound, or a promise, $\nu>0$ on the imaginary-axis
distance
\[\omega_A :=
    \min_{t\in\mathbb{R}}
    \sigma_{\min}(A-\mathrm{i}tI).
\]
Our results apply to arbitrary matrices \(A\), including nonnormal and
nondiagonalizable matrices.
}

{ 
\subsection{Main results, algorithmic strategy, and scope}

To determine asymptotic stability without explicitly inspecting the
eigenvalues of \(A\), our algorithms use the spectral projector onto the
right-half-plane invariant subspace.  Define the scalar sign by
\[
    \opn{sign}(z)
    :=
    \begin{cases}
        1, & \Re(z)>0,\\
        -1, & \Re(z)<0,
    \end{cases}
\]
and consider a Jordan decomposition
\[
    A=V\operatorname{diag}\bigl(
        J_{r_1}(\lambda_1),\ldots,J_{r_q}(\lambda_q)
    \bigr)V^{-1}.
\]
Then, the matrix sign of $A$ is defined as:
\[
    \opn{sign}(A)
    :=
    V\operatorname{diag}\bigl(
        \opn{sign}(\lambda_1)I_{r_1},\ldots,
        \opn{sign}(\lambda_q)I_{r_q}
    \bigr)V^{-1}.
\]
The matrix
\begin{equation}
  \label{eq:intro-projector}
    \sign^+(A):=\frac{I+\opn{sign}(A)}{2} 
\end{equation}
projects onto the direct sum of the generalized eigenspaces associated
with eigenvalues in the open right half-plane. Therefore, 
\[
    A\text{ is Hurwitz}
    \iff
    \sign^+(A)=0.
\]
Thus, the decision problem reduces to testing whether a block-encoded
idempotent is zero.
}
{ By Proposition~\ref{thm:omega_A}, we may assume without loss of generality
that \(A\) is encoded exactly; that is, the input oracle \(U_A\) is a
\((1,a,0)\)-block-encoding of \(A\). The main results on the query complexity of deciding Hurwitz stability are
summarized in Table~\ref{tab:hurwitz-query-complexity}, which are given in terms of the system dimension $n$ and the lower bound $\nu$ on the imaginary-axis distance $\omega_{A}$. Most notably,
Theorem~\ref{thm:diagonal-Hurwitz-complexity} establishes the lower bound
\[
\Omega(\sqrt{n}/\nu),
\]
which holds even when \(A\) is restricted to be real and diagonal. For
constant success probability and general \(A\), we prove the upper bound
(\cref{thm:determine_stable_encoding_ours})
\[
O(\sqrt{n}\,\nu^{-2}\log(\sqrt{n}/\nu)),
\]
whereas, for Hermitian or normal \(A\), we prove the upper bound
(\cref{thm:determine_stable_encoding_normal})
\[
O(\sqrt{n}\,\nu^{-1}\log n).
\]
The latter matches the lower bound up to a logarithmic factor. Thus, the dimension dependence of quantum Hurwitz-stability testing is nearly
optimal. For general nonnormal matrices, an additional factor of
\(\nu^{-1}\) remains between the upper and lower bounds; whether this
dependence on \(1/\nu\) is necessary is an open question.
}

	\begin{table*}[t]
		\centering
		\small
		\renewcommand{\arraystretch}{1.25}
		\begin{tabular}{@{}llll@{}}
			\toprule
			\parbox[t]{0.22\textwidth}{\raggedright Input oracle and setting}
			& \parbox[t]{0.25\textwidth}{\raggedright Upper bound}
			& \parbox[t]{0.18\textwidth}{\raggedright Lower bound}
			& \parbox[t]{0.25\textwidth}{\raggedright Comparison} \\
			\midrule

			\parbox[t]{0.22\textwidth}{\raggedright Supplied \((\beta,b,\epsilon_+)\)-block-encoding \(U_P\) of
			\(\sign^+(A)\), with
			\(b\geq1\) and \(\beta+\epsilon_+\geq1\)
			\par\smallskip
			Oracle: \(U_P\)}
			&
			\parbox[t]{0.25\textwidth}{\raggedright \(\displaystyle O(\beta\sqrt n)\)
			\par\smallskip
			\cref{thm:determine_stable_encoding_supplied}}
			&
			\parbox[t]{0.18\textwidth}{\raggedright \(\displaystyle \Omega(\beta\sqrt n)\)
			\par\smallskip
			\cref{thm:idempotent-zero-lower-bound}}
			&
			\parbox[t]{0.25\textwidth}{\raggedright Optimal dependence on \(n\) and \(\beta\).} \\

			\addlinespace

			\parbox[t]{0.22\textwidth}{\raggedright Normal \(A\) (including Hermitian \(A\))
			\par\smallskip
			Oracle: \(U_A\)}
			&
			\parbox[t]{0.25\textwidth}{\raggedright \(\displaystyle
			O\left(\frac{\sqrt n}{\nu}\log n\right)
			\)
			\par\smallskip
			\cref{thm:determine_stable_encoding_normal}}
			&
			\parbox[t]{0.18\textwidth}{\raggedright \(\displaystyle
			\Omega\left(
			\frac{\sqrt n}{\nu}
			\right)
			\)
			\par\smallskip
			\cref{thm:diagonal-Hurwitz-complexity}}
			&
			\parbox[t]{0.25\textwidth}{\raggedright Bounds match up to a factor \(\log n\).} \\

			\addlinespace

			\parbox[t]{0.22\textwidth}{\raggedright General \(A\) (possibly non-normal)
			\par\smallskip
			Oracle: \(U_A\)}
			&
			\parbox[t]{0.25\textwidth}{\raggedright \(\displaystyle
			O\left(
			\frac{\sqrt n}{\nu^2}
			\log\frac{\sqrt n}{\nu}
			\right)
			\)
			\par\smallskip
			\cref{thm:determine_stable_encoding_ours}}
			&
			\parbox[t]{0.18\textwidth}{\raggedright \(\displaystyle
			\Omega\left(
			\frac{\sqrt n}{\nu}
			\right)
			\)
			\par\smallskip
			\cref{thm:diagonal-Hurwitz-complexity}}
			&
			\parbox[t]{0.25\textwidth}{\raggedright A factor \(1/\nu\) remains between the bounds, apart
			from logarithms.} \\

			\bottomrule
		\end{tabular}
		\par\smallskip
		\begin{minipage}{\textwidth}
			\footnotesize
			\textit{Conventions.} Here \(n\geq2\), \(A\) has no purely
			imaginary eigenvalues, and the failure probability is at most \(1/3\).
			Queries are to the indicated oracle or its adjoint;
			controlled calls also count as one query.
			\par\smallskip
			\textit{Supplied projector.} In the first row,
			\(0\leq\epsilon_+\leq1/(4\sqrt n)\).
			If \(b=0\) or \(\beta+\epsilon_+<1\), the optimal query count
			is \(0\).
			\par\smallskip
			\textit{Input matrix.} In the last two rows, \(U_A\) is an
			exact \((1,a,0)\)-block-encoding of \(A\), \(\norm{A}\leq1\),
			and a known \(0<\nu\leq\omega_A\) is supplied after the input reduction.
			These lower bounds already hold for real diagonal \(A\)
			with \(a=1\).
		\end{minipage}

		\caption{Worst-case oracle-query bounds for Hurwitz decision
			with success probability at least \(2/3\).}
		\label{tab:hurwitz-query-complexity}
	\end{table*}

{  The upper bounds are obtained in two steps. First, assuming access to a \((\beta,b,\epsilon_{+})\)-block encoding of the
projector \(\sign^{+}(A)\), we establish the sharp query-complexity bound
\[
\Theta(\beta\sqrt{n});
\]
see \cref{thm:determine_stable_encoding_supplied}. The algorithm uses
amplitude estimation to decide whether this projector is the zero matrix. For non-normal matrices, the projector
block encoding is constructed using a Pad\'e approximation implemented
through a structured matrix inverse; see
\cref{sec:matrix_sign}. The advantage of Pad\'e-based
construction over alternative implementations of matrix functions is outside the scope of this work.} 

The remainder of this paper is organized as follows. 
\Cref{sec:related_work} discusses related work.
\Cref{sec:problem} introduces the decision problem and access model.
\Cref{sec:worst_complexity} establishes the query-complexity lower bounds.
\Cref{sec:SignEncodingSupplied} analyzes the determination of stability given a block encoding of the spectral projector. 
\cref{sec:determining_stability} provides an end-to-end query-complexity analysis for our algorithm which decides if a given matrix is Hurwitz. 
\Cref{sec:Discussion} discusses the implications and limitations of the results, and \cref{sec:conclusion} concludes the paper.

\subsection{Related work}\label{sec:related_work}

For a Hurwitz matrix, $\omega_A$ is its complex, unstructured stability radius \cite{HinrichsenPritchard1986}.
It captures perturbation sensitivity that an eigenvalue gap alone does not describe for non-normal matrices \cite{TrefethenEtAl1993}.
The comparisons below concern the guarantees obtainable from normalized block-encoding access and a supplied margin $\nu\leq\omega_A$.

\subsubsection{Quantum stability testing and eigenvalue estimation}

Kolada and Stefa\'nski test stability through roots of generalized characteristic equations, including those arising from fractional-order systems \cite{KoladaStefanski2025Stability}.
They map the right half-plane to the unit disc and use phase samples on a triangular mesh to identify candidate edges, combining Grover search, quantum counting and local refinement.
Their $O(\sqrt P)$ bound concerns the initial candidate-edge search over $P$ mesh edges \cite[Section~II.B]{KoladaStefanski2025Stability}.
The required mesh size and the cost of implementing its oracle from a matrix block-encoding are not bounded in terms of $n$ and $\nu$; consequently, that search estimate does not give a matrix-query bound under our promise.

For block-encoded matrices, Zhang \emph{et al.} estimate an eigenvalue closest to a prescribed line \cite[Problem~1]{ZhangEtAl2025NonHermitian}.
This also gives a route to stability testing: when $\norm{A}\leq1$, the distance of an eigenvalue $\lambda$ to the line $\operatorname{Re}z=2$ is $2-\operatorname{Re}\lambda$, so a closest eigenvalue has largest real part.
Lin and Shang use ground-energy estimation for Hermitian matrices constructed from $A-\mu I$ and explicitly propose searching the right half-plane to test stability {\cite[Section~5]{LinShang2026NonNormal}}.

Both eigensolver analyses use eigenvector or Jordan-basis conditioning to infer eigenvalue locations from small singular values of shifted matrices.
Their costs also depend on prepared-state overlap and, for defective matrices, Jordan-block sizes; Lin and Shang also state a spectral-gap assumption for the shifted Hermitian problems, which appears only in their eigenvector-preparation bound \cite[Section~4 and Theorem~3]{LinShang2026NonNormal}.
Generic initialization can supply overlap at a dimension-dependent cost, but the conditioning dependence remains.
Even the best-conditioned eigenvector basis can become arbitrarily ill-conditioned at fixed dimension while $\omega_A$ stays bounded away from zero (\cref{sec:omega_eigenvector_conditioning}).
Thus their stated estimation bounds do not directly yield a uniform bound in $n$ and $\nu$ alone.
Our decision bound in \cref{tab:hurwitz-query-complexity} remains uniform on these families, distinguishing the guarantees under a common matrix-access model.

Zhang, Wu and Yuan instead extract eigenvalues from a signal with a small number of spectral components.
Their stability application requires an eigenvalue with largest real part to contribute detectably to that signal \cite[Appendix~E.2]{ZhangWuYuan2025Eigenvalue}.
{Their estimates therefore rest on sparsity, visibility and separation promises about a prepared state; our decision bound requires no such state.}

\subsubsection{Quantum query bounds and spectral counting}

Lin and Shang identify query lower bounds as future work \cite[Section~7]{LinShang2026NonNormal}.
Zhang \emph{et al.} prove BQP-completeness for a guided decision problem distinguishing nearly real from detectably complex spectra \cite[Problem~2]{ZhangEtAl2025NonHermitian}.
Zhang, Wu and Yuan discuss precision optimality through Heisenberg-limited scaling \cite[Appendix~D.4.1]{ZhangWuYuan2025Eigenvalue}.
These results do not provide quantitative quantum query lower bounds for stability under our margin promise.
Our lower bounds in \cref{tab:hurwitz-query-complexity} complement these algorithmic and hardness results by quantifying the joint dimension and margin dependence of the stability decision.
{Because every eigenvalue satisfies $|\Re\lambda|\geq\omega_A\geq\nu$ under the promise, estimating the spectral abscissa to accuracy below $\nu$ decides stability, so the $\Omega(\sqrt n/\nu)$ bound applies equally to that estimation task when no guiding state is supplied.}

For normal $A$, the eigenvalues of $H=(A+A^\dagger)/2$ are the real parts of those of $A$, reducing Hurwitz testing to an extremal Hermitian eigenvalue decision.
Kerzner \emph{et al.} obtain $\widetilde O(\sqrt n/\epsilon)$ queries for extremal-eigenvalue estimation to additive accuracy $\epsilon$ with normalized sparse-matrix access \cite{KerznerEtAl2024SmallestEigenvalue}.
Mande and de Wolf establish upper and lower bounds with the same dimension and inverse-precision dependence, up to logarithmic factors, for extremal eigenphase estimation with unitary access \cite{MandeDeWolf2026PhaseEstimation}.
These bounds require no supplied guiding state and establish the scaling recovered by our normal-matrix bounds.
Mande and de Wolf's lower bounds use fractional search.
Our matrix and idempotent lower bounds instantiate the same obstruction for block-encoding oracles using the hybrid method of \cite{BennettEtAl1997}.
Jeffery and Witteveen remove an overlap-dependent logarithmic overhead for eigenphase-threshold testing \cite[Corollary~4.1]{JefferyWitteveen2026GroundState}; its possible use here is discussed in \cref{rem:normal_eigenphase_extension}.

Futamura, Ye and Sakurai use contour approximations of spectral projectors and measured norms to estimate Hermitian eigenvalue counts \cite{FutamuraYeSakurai2021EigenvalueDensity}.
Zero testing asks only whether the spectral subspace is empty, but our idempotent formulation also permits oblique projectors and approximate block-encodings.
Under the stated approximation promise, \cref{thm:zero_test_idempotent,thm:idempotent-zero-lower-bound} give matching dimension and normalization dependence at fixed failure probability.

\subsubsection{Matrix-function constructions}

Matrix-sign methods for separating invariant subspaces are classical \cite{ROBERTS01101980,higgs09}.
Quantum constructions include  {polynomial eigenvalue transformations \cite{LowSu2024QEVT},} block-encodings of rational sums of shifted inverses \cite{TakahiraOhashiSogabeUsuda2022ContourBlockEncoding}, matrix-sign embeddings \cite{wang2026signembeddingquantumalgorithms}, and Riesz or oblique spectral projectors \cite{rodenasruiz2026quantumsolversnonlinearmatrix,DalzellSu2026ObliqueEigenprojection}.
Our upper-bound proof combines a spectral-projector approximation with amplitude estimation \cite{BrassardGilles2000QAAa}, tracking normalization, approximation error and construction cost under $\nu\leq\omega_A$.
We use the direct Pad\'e realization in \cref{sec:matrix_sign}, whose normalization and query cost are stated in terms of $\nu$, to make this accounting explicit, including circuit costs; we claim neither novelty of the matrix-sign construction nor an improvement over the cited alternatives.

\section{Problem formulation and access model}\label{sec:problem}

We write $\N$ for the positive integers. For vectors, $\norm{\cdot}$
is the Euclidean norm; for matrices, it is the induced operator norm.
For $A\in\C^{n\times n}$, write $A^\dagger$ for its conjugate transpose,
$\opn{spec}(A)$ for its spectrum, $\sigma_{\min}(A)$ and
$\sigma_{\max}(A)$ for its extremal singular values, and
$\norm{A}_F=\sqrt{\opn{tr}(A^\dagger A)}$ for its Frobenius norm.
Quantum states are unit vectors in finite-dimensional complex
inner-product spaces. We use bra--ket notation and tensor products for
joint registers; identities on unaffected registers are implicit.

\begin{definition}[Block-encoding]\label{def:block-encoding}
Let $A\in\C^{n\times n}$, $\alpha>0$, $\epsilon\geq0$, and
$a\in\N\cup\{0\}$. A unitary $U\in\C^{n2^a\times n2^a}$ is an
$(\alpha,a,\epsilon)$-block-encoding of $A$ if
\[
 \norm{A-\alpha(\bra{0^a}\otimes I_n)U(\ket{0^a}\otimes I_n)}
 \leq\epsilon.
\]
The encoding is exact when $\epsilon=0$. 
If $a=0$ we adopt the convention that $\bra{0^a}\otimes I_n=I_n$. 
\end{definition}

This is the block-encoding convention of \cite{QSVT}; in particular,
$\epsilon$ measures error in $A$, rather than in $A/\alpha$.
The input model supplies the indicated block-encoding oracle, its
adjoint, and their controlled versions. Each application counts as
one query. Controlled access is an assumption on the supplied oracle.
The dimension, encoding parameters, and other classical parameters
specified in a result are known. The initial state and non-oracle
operations may depend on these parameters, but not on the unknown
matrix. Ancillary registers and intermediate measurements are allowed.
Query complexity is the maximum number of oracle calls over admissible
inputs and measurement outcomes. Costs of implementing the input
oracle are separate from this count. We interpret logarithmic factors
in asymptotic bounds as being at least one.

A matrix is Hurwitz if every eigenvalue has negative real part.
Equivalently, every solution of $\dot x=Ax$ converges to zero.  { 
We first introduce the general Hurwitz decision problem and then motivate a
modified version in which a lower bound $\nu$ on the imaginary-axis distance
\(\omega_A\) is provided.
}
\begin{problem}[Hurwitz decision problem]\label{prob:Hurwitz_decision_hard}
Given $A\in\C^{n\times n}$ with no eigenvalues on the imaginary axis,
determine whether $A$ is Hurwitz.
\end{problem}
We remark that the assumption of $A$ having no purely imaginary eigenvalues is necessary, provided we subscribe to type-II computability theory. For under this theory when $n=1$ the problem (allowing purely imaginary eigenvalues) is famously undecidable, since determination that a real number equals zero or not is undecidable.

For $A\in\C^{n\times n}$, recall the definition of the imaginary-axis distance
\[
 \omega_A=\min_{t\in\R}\sigma_{\min}(A-\mathrm{i}tI).
\]
The minimum exists because the function being minimized is continuous and is at least $|t|-\norm A$. 
The distance to a singular matrix gives
\[
 \omega_A
 =\inf\bigl\{\norm E:
     \opn{spec}(A+E)\cap\mathrm{i}\R\ne\varnothing\bigr\}.
\]
Thus $\omega_A>0$ exactly when $A$ has no purely imaginary eigenvalues.
When $A$ is Hurwitz, $\omega_A$ is its complex, unstructured stability
radius \cite{HinrichsenPritchard1986}; for a general $A$, it measures
distance to an imaginary-axis eigenvalue, not necessarily distance to
the boundary of the Hurwitz-stable set. Moreover,
\[
 \omega_A\leq\sigma_{\min}(A),\qquad
 \omega_A\leq\min_{\lambda\in\opn{spec}(A)}|\Re\lambda|.
\]
If $A$ is normal, meaning $AA^\dagger=A^\dagger A$, then rightmost quantities  are equal. 
For non-normal matrices, the inequalities can be strict.

The matrices $A_n=-I_n/n$ illustrate why a quantitative promise $\nu$ is
needed for uniform query bounds: they are Hurwitz, but
$\omega_{A_n}=1/n\to0$. We study the following problem with a
supplied lower bound on $\omega_{A}$.
\begin{problem}[Hurwitz decision problem with known gap]\label{prob:Hurwitz_decision}
Given $A\in\C^{n\times n}$ and a known $\nu>0$ satisfying
$\nu\leq\omega_A$, determine whether $A$ is Hurwitz.
\end{problem}

The following proposition allows us to consider $(1,a,0)$-block-encodings of the input matrix $A$ without losing generality for \cref{prob:Hurwitz_decision}.

	\begin{proposition}[Robustness of input block-encodings; see the proof in \cref{sec:Proof_thm:omega_A}]
		\label{thm:omega_A}
		Let \(A\in\C^{n\times n}\) have no purely imaginary eigenvalues.
		Let \(U\) be an \((\alpha,a,\epsilon)\)-block-encoding of \(A\),
		where \(\epsilon\geq0\), and define
		\[
		\widetilde A_U
		=(\bra{0^a}\otimes I_n)U(\ket{0^a}\otimes I_n).
		\]
		Then \(U\) is an exact \((1,a,0)\)-block-encoding of
		\(\widetilde A_U\), and the following hold.
		\begin{enumerate}
			\item
			\[
			\omega_{\widetilde A_U}
			\geq \frac{\omega_A-\epsilon}{\alpha}.
			\]
			
			\item
			If \(\epsilon<\omega_A\), then \(\widetilde A_U\) has no
			purely imaginary eigenvalues, and
			\(\sign(A)\) and \(\sign(\widetilde A_U)\) are similar. Thus, they have the same number of eigenvalues with positive and negative real part. 
			In particular,
			\[
			A\text{ is Hurwitz}
			\quad\Longleftrightarrow\quad
			\widetilde A_U\text{ is Hurwitz}.
			\]
			
%

			\item
			Suppose \(a\geq1\).
			For every \(\omega>0\) with \(\epsilon\geq\omega\), there
			exist matrices \(B,C\in\C^{n\times n}\) and a unitary
			\(V\) such that \(B\) is Hurwitz, \(C\) has an eigenvalue
			with positive real part, \(\omega_B=\omega_C=\omega\),
			and \(V\) is an \((\alpha,a,\epsilon)\)-block-encoding
			of both \(B\) and \(C\).
			Thus, when \(\epsilon\geq\omega\), a block-encoding and
			its parameters need not determine whether the encoded
			matrix is Hurwitz, even when the margin $\omega$ is known.
		\end{enumerate}
	\end{proposition}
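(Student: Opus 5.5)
Exactness of \(U\) as a \((1,a,0)\)-block-encoding of \(\widetilde A_U\) is immediate from Definition~\ref{def:block-encoding}, since \(\widetilde A_U\) is by definition the top-left block. Write \(E:=A-\alpha\widetilde A_U\), so \(\norm{E}\leq\epsilon\). The whole argument rests on the \(1\)-Lipschitz property of singular values, \(\abs{\sigma_{\min}(X)-\sigma_{\min}(Y)}\leq\norm{X-Y}\). For item~1, fix \(t\in\R\) and write
\[
\alpha\widetilde A_U-\mathrm{i}\alpha tI=(A-\mathrm{i}\alpha tI)-E .
\]
This gives \(\alpha\,\sigma_{\min}(\widetilde A_U-\mathrm{i}tI)\geq\sigma_{\min}(A-\mathrm{i}\alpha tI)-\epsilon\geq\omega_A-\epsilon\). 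Minimizing over \(t\) yields the bound.

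For item~2, consider the homotopy \(A_s:=A-sE\) for \(s\in[0,1]\), which connects \(A\) to \(\alpha\widetilde A_U\). The same Lipschitz estimate gives \(\omega_{A_s}\geq\omega_A-s\epsilon>0\), so no \(A_s\) has a purely imaginary eigenvalue. By continuity of eigenvalues (roots of the characteristic polynomial), the number of eigenvalues in the open right half-plane, counted with algebraic multiplicity, is locally constant in \(s\), hence constant on \([0,1]\). Scaling by \(\alpha>0\) does not change real-part signs, so \(A\) and \(\widetilde A_U\) have the same counts \(p\) and \(n-p\). The sign matrix of either is diagonalizable with eigenvalue \(+1\) of multiplicity \(p\) and \(-1\) of multiplicity \(n-p\), so the two sign matrices are similar, both being similar to \(\operatorname{diag}(I_p,-I_{n-p})\). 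The Hurwitz equivalence is the special case \(p=0\).

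Item~3 requires a construction. Since \(a\geq1\), I would take \(V\) to be a fixed unitary whose top-left block is \(0\); the SWAP-like \(V=X\otimes I_n\) on the first ancilla qubit works. Then the encoded matrix is \(0\), and \(V\) is an \((\alpha,a,\epsilon)\)-block-encoding of any \(M\) with \(\norm{M}\leq\epsilon\). Choose \(B=-\omega I_n\) and \(C=\operatorname{diag}(\omega,-\omega,\dots,-\omega)\), or \(C=\omega I_n\) when \(n=1\). Both are normal, so \(\omega_B=\omega_C=\omega\) by the equality \(\omega=\min\abs{\Re\lambda}\) for normal matrices noted above. Both have norm \(\omega\leq\epsilon\), \(B\) is Hurwitz, and \(C\) is not.

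The main obstacle is making item~2's continuity argument clean: one must avoid eigenvalues crossing the axis and state eigenvalue continuity with multiplicities. A clean alternative is the argument principle, writing the right-half-plane count as a winding number of \(\det(A_s-zI)\) along a large D-contour. Everything else is routine.
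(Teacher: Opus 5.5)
Your proposal is correct and follows the paper's proof essentially step for step: the same singular-value perturbation estimate for item~1, the same linear path $A+s(\alpha\widetilde A_U-A)$ with continuity of eigenvalues counted with multiplicity for item~2, and the same unitary $X\otimes I$ with zero encoded block together with $B=-\omega I_n$ for item~3. Your choice $C=\operatorname{diag}(\omega,-\omega,\dots,-\omega)$ in place of the paper's $C=\omega I_n$ makes no difference, since both are normal with norm $\omega$ and imaginary-axis distance $\omega$.
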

	Item 3 of this proposition means that there is a maximum allowable error associated with the block encoding for a quantum algorithm, which takes a block-encoding and its parameters as input, to solve \cref{prob:Hurwitz_decision}. 
	And when that allowable error is achieved, there is no effective difference associated with using $A$ or the approximation $\tilde{A}_U$\textemdash for deciding Hurwitzness. 
	Henceforth, we will assume that $U_A$ is a $(1,a,0)$-block-encoding of $A$.

\section{Lower complexity bounds}\label{sec:worst_complexity}
		The following lemma, whose proof is given in \cref{sec:Proof_thm:search-hybrid}, provides a basis for the complexity lower bounds in this paper.  
		
	\begin{lemma}[Lower bound for search with weak unitary oracles]
		\label{thm:search-hybrid}
		Let $n,d\in\N$, let $V_0,V_1\in\C^{d\times d}$ be unitary,
		and set
		\[
		\eta=\norm{V_1-V_0}>0.
		\]
		Write $\mathbf 0$ for the all-zero bit string in $\{0,1\}^n$
		and $e_j$ for the bit string whose only nonzero entry is
		its $j$-th entry. For $x\in\{\mathbf 0,e_1,\ldots,e_n\}$,
		define
		\[
		U_x=\sum_{k=1}^n V_{x_k}\otimes\ket{k}\bra{k},
		\]
		{ where \(x_k\) denotes the \(k\)-th bit of the string \(x\), so that
\(V_{x_k}\in\{V_0,V_1\}\),} and  $\ket{1},\ldots,\ket{n}$ are the standard basis
		vectors of the index register $\C^n$.
		
		For every fixed $\delta\in(0,1/2)$, any quantum algorithm
		that, given query access to $U_x$ and $U_x^\dagger$,
		including controlled applications, distinguishes
		\[
		x=\mathbf 0
		\qquad\text{from}\qquad
		x\in\{e_1,\ldots,e_n\}
		\]
		with probability at least $1-\delta$ on every allowed
		input requires
		\[
		\Omega\left(\frac{\sqrt n}{\eta}\right)
		\]
		queries in the worst case. The implied constant depends
		only on $\delta$. The algorithm may depend on the parameters defining the oracle
		family and on any other classical data that are the same for
		every $x$ in that family.
	\end{lemma}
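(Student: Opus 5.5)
We use the hybrid argument of \cite{BennettEtAl1997}, adapted to oracles that differ from the base oracle by a unitary of operator-norm distance $\eta$ on a single index block, rather than by a bit flip.

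\textbf{Setup.} Fix an algorithm making $T$ queries. We model it as alternating fixed unitaries $W_0,\dots,W_T$ acting on a workspace tensored with the oracle registers $\C^d\otimes\C^n$, with each query one of $U_x$, $U_x^\dagger$, $\mathrm{c}U_x$, $\mathrm{c}U_x^\dagger$. Intermediate measurements are deferred in the standard way; they do not change the query count, and the worst-case count over outcomes bounds every branch. Write $\ket{\psi^x_t}$ for the state just before the $(t+1)$-st query and $\ket{\psi^x_T}$ for the final state.

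\textbf{Single-query perturbation.} Let $\Pi_k=I\otimes\ket{k}\bra{k}$ act on the index register, with the workspace left unchanged. Since $U_{e_j}-U_{\mathbf 0}=(V_1-V_0)\otimes\ket{j}\bra{j}$, every admissible query type $Q$ satisfies
\[
\norm{(Q_{e_j}-Q_{\mathbf 0})\ket{\phi}}\leq\eta\,\norm{\Pi_j\ket{\phi}}.
\]
For the adjoint this uses $\norm{V_1^\dagger-V_0^\dagger}=\eta$. For the controlled versions the difference acts only on the control-$1$ branch, whose $\Pi_j$-weight is at most $\norm{\Pi_j\ket\phi}$.

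\textbf{Telescoping.} Comparing the run with oracle $e_j$ to the run with $\mathbf 0$ and telescoping over the $T$ queries gives
\[
\norm{\ket{\psi^{e_j}_T}-\ket{\psi^{\mathbf 0}_T}}\leq\eta\sum_{t=0}^{T-1}\norm{\Pi_j\ket{\psi^{\mathbf 0}_t}}.
\]
We sum over $j$ and apply Cauchy--Schwarz twice. Because $\sum_j\norm{\Pi_j\ket{\psi^{\mathbf 0}_t}}^2=1$ for each $t$,
\[
\sum_{j=1}^n\norm{\ket{\psi^{e_j}_T}-\ket{\psi^{\mathbf 0}_T}}\leq\eta\sum_{t}\sqrt n=\eta T\sqrt n.
\]
Hence some $j$ has $\norm{\ket{\psi^{e_j}_T}-\ket{\psi^{\mathbf 0}_T}}\leq\eta T/\sqrt n$.

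\textbf{Contradiction.} Success with probability $1-\delta$ on both inputs requires the two final states to give measurement statistics differing by at least $1-2\delta$ in total variation. This forces the Euclidean distance between the states to be at least a constant $c_\delta>0$; for example, $\sqrt{2}\bigl(1-\sqrt{1-(1-2\delta)^2}\bigr)^{1/2}$ works, using the trace-distance bound $\sqrt{1-|\langle\psi|\phi\rangle|^2}\leq\norm{\psi-\phi}$. Therefore $T\geq c_\delta\sqrt n/\eta$.

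The only step requiring care is the single-query perturbation bound for the controlled and adjoint query types, together with checking that deferring intermediate measurements preserves the count; both are routine. Note that $\eta\leq2$ is automatic, so the bound is never vacuous in an inconsistent way.
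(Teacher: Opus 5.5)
Your proof is correct and is essentially the paper's argument: the hybrid bound of Bennett et al.\ with per-query perturbation $\eta\norm{\Pi_j\ket{\phi}}$, telescoping over queries, and Cauchy--Schwarz over $j$ and $t$. The differences are cosmetic and give the same $\Omega(\sqrt n/\eta)$: you average the sum of distances rather than summing their squares, and you convert success probability to Euclidean distance via trace distance rather than the paper's direct inequality $\abs{\bra{u}M\ket{u}-\bra{v}M\ket{v}}\leq\norm{\ket{u}-\ket{v}}$, which yields $1-2\delta$. The step you call routine is deferring measurements when stored outcomes may coherently choose the query type. The paper handles exactly this with its selector register $\widehat U_x$, and your single-query bound extends verbatim to such block-diagonal combinations.
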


	We argue now that the $\sqrt{n}$ complexity is necessary for \cref{prob:Hurwitz_decision}.

	\begin{theorem}
		\label{thm:diagonal-Hurwitz-complexity}
		Let
		\[
		\mathcal D_{n,\nu}
		=
		\left\{
		A=\operatorname{diag}(a_1,\ldots,a_n)\in\mathbb R^{n\times n}:
		\nu\leq\abs{a_j}\leq1
		\right\},
		\qquad 0<\nu\leq1.
		\]
		
		Any quantum algorithm which, given query access to an exact
		\((1,a,0)\)-block-encoding of an unknown
		\(A\in\mathcal D_{n,\nu}\), decides with constant success probability above 0.5 
		whether \(A\) is Hurwitz has worst-case query
		complexity
		\[
		\Omega\left(\frac{\sqrt n}{\nu}\right).
		\]
		This lower bound already holds for \(a=1\).
	\end{theorem}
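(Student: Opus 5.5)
We reduce to \cref{thm:search-hybrid} by embedding the search problem into block-encodings of diagonal matrices in \(\mathcal D_{n,\nu}\). The base unitaries are single-qubit reflections, \(d=2\), so that \(a=1\):
\[
V_s=\begin{pmatrix} s & \sqrt{1-s^2}\\ \sqrt{1-s^2} & -s\end{pmatrix},\qquad s\in[-1,1],
\]
which is a real symmetric unitary with \(\bra{0}V_s\ket{0}=s\). Take \(V_0:=V_{-\nu}\) and \(V_1:=V_{\nu}\). For \(x\in\{\mathbf 0,e_1,\ldots,e_n\}\), the oracle \(U_x=\sum_k V_{x_k}\otimes\ket{k}\bra{k}\) then satisfies
\[
(\bra{0}\otimes I_n)U_x(\ket{0}\otimes I_n)=\operatorname{diag}(a_1,\ldots,a_n),\qquad a_k=\begin{cases}\nu, & x_k=1,\\ -\nu, & x_k=0.\end{cases}
\]
So \(U_x\) is an exact \((1,1,0)\)-block-encoding of a matrix \(A_x\in\mathcal D_{n,\nu}\), since \(\abs{a_k}=\nu\in[\nu,1]\). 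Moreover, \(A_{\mathbf 0}=-\nu I\) is Hurwitz, while each \(A_{e_j}\) has the eigenvalue \(+\nu>0\) and is not Hurwitz.

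Any Hurwitz-deciding algorithm with success probability \(1-\delta>1/2\) on \(\mathcal D_{n,\nu}\) therefore distinguishes \(x=\mathbf 0\) from \(x\in\{e_1,\ldots,e_n\}\) with the same success probability. The only caveat is the ordering of the register factors: the paper's block-encoding convention puts the ancilla first, while \(U_x\) in \cref{thm:search-hybrid} is written as \(V\otimes\ket{k}\bra{k}\). Conjugating by a fixed swap resolves this and does not change the query count.

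The algorithm is fixed and independent of \(x\), since \(n\), \(\nu\) and \(a=1\) are common classical data. Hence \cref{thm:search-hybrid} gives a lower bound of \(\Omega(\sqrt n/\eta)\) with \(\eta=\norm{V_1-V_0}\). It remains to compute \(\eta\). We have
\[
V_{\nu}-V_{-\nu}=\begin{pmatrix}2\nu&0\\0&-2\nu\end{pmatrix},
\]
because the off-diagonal entries \(\sqrt{1-\nu^2}\) coincide and cancel. Thus \(\eta=2\nu>0\), and the bound becomes \(\Omega(\sqrt n/\nu)\), with the constant depending only on \(\delta\).

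The main obstacle would be keeping \(\eta=\Theta(\nu)\) rather than \(\Theta(\sqrt\nu)\). A naive choice, such as rotations whose off-diagonal parts differ, gives a difference of order \(\sqrt\nu\) when \(\nu\) is small and hence only a weaker bound. The reflection family above avoids this because its off-diagonal entries depend only on \(s^2\). The remaining checks are routine: \(V_s\) is unitary, and the controlled and adjoint queries are covered by the lemma's hypotheses.
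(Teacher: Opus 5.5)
Your proposal is correct and is essentially the paper's proof. It uses the same reflections $V_0=R(-\nu)$ and $V_1=R(\nu)$ with $\eta=2\nu$, the same oracle $U_x=\sum_k V_{x_k}\otimes\ket{k}\bra{k}$ block-encoding $A_x=\nu\operatorname{diag}(2x_1-1,\ldots,2x_n-1)$, and a direct appeal to \cref{thm:search-hybrid}. Your register-ordering caveat is moot: $V_{x_k}$ already acts on the first (ancilla) factor, exactly as in \cref{def:block-encoding}.
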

	
	\begin{proof}
		For \(x\in\set{0,e_1,\ldots,e_n}\), define
		\[
		A_x
		=
		\nu\operatorname{diag}(2x_1-1,\ldots,2x_n-1).
		\]
		Then \(\norm{A_x}=\omega_{A_x}=\nu\), while \(A_0\) is Hurwitz and
		each \(A_{e_j}\) is not Hurwitz. For \(t\in[-1,1]\), let
		\[
		R(t)=
		\begin{pmatrix}
			t&\sqrt{1-t^2}\\
			\sqrt{1-t^2}&-t
		\end{pmatrix}.
		\]
		{ 	Take
		$V_0=R(-\nu)$ and
		$V_1=R(\nu)$
		in \cref{thm:search-hybrid}, where
		$\eta
		=
		\norm{V_{1}-V_{0}}
		=
		2\nu$.  The corresponding oracle is
		$U_x
		=
		\sum_{k=1}^n
		V_{x_{k}}\otimes\ket{k}\bra{k}$,
		}which is an exact \((1,1,0)\)-block-encoding of \(A_x\).  By \cref{thm:search-hybrid}, for every fixed \(\delta\in(0,1/2)\), the worst-case
		query complexity is
		$\Omega\round*{\frac{\sqrt n}{\nu}}$.
		
	\end{proof}

{ 
\section{Determining stability using the block-encoding of $\sign^{+}(A)$}
\label{sec:SignEncodingSupplied}}
	The right-half-plane spectral projector $\sign^+(A)$ is idempotent. 
	A matrix $P$ is idempotent if $P^2=P$. It follows that $P=0$ or $\norm{P}\geq 1$, when $P$ is idempotent.
	As mentioned before, $A$ is Hurwitz iff $\sign^+(A)=0$. 
	So determining if a given idempotent matrix is zero or not is crucial.{  In this section, we analyze the query complexity of determining whether an idempotent matrix is zero, given a block encoding of that matrix. We then extend this result to decide Hurwitz stability from a block encoding of the matrix sign function.}
	
	The supplied block-encoding parameters can themselves determine
	whether an idempotent is zero. 
	The following result characterizes these edge cases and gives a lower bound outside of them.
	
	\begin{theorem}[Zero-testing block-encoded idempotents]
		\label{thm:idempotent-zero-lower-bound}
		Let $n\in\N$, $b\in\N\cup\{0\}$, $\beta>0$, $\epsilon\in [0,0.5)$
		and $\delta\in(0,1/2)$.
		Suppose that an unknown idempotent $P\in\C^{n\times n}$
		is supplied through a $(\beta,b,\epsilon)$-block-encoding
		$U_P$. An idempotent in $\C^{n\times n}$ is called
		admissible if it admits a block-encoding with these same
		parameters. The following hold.
		\begin{enumerate}
			\item \label{item:idempotent-admissibility}
			Both $P=0$ and $P\neq0$ occur among admissible
			idempotents if and only if
			\[
			b\geq1
			\qquad\text{and}\qquad
			\beta+\epsilon\geq1.
			\]
			
			
			\item \label{item:idempotent-query-complexity}
			Let $T$ be the optimal worst-case number of queries
			to $U_P$ and $U_P^\dagger$ required to decide whether
			$P=0$, with success probability at least $1-\delta$
			for every admissible $P$ and every such encoding $U_P$.
			The parameters
			$n,\beta,b,\epsilon,\delta$ are given to the algorithm,
			and controlled queries are allowed.
			For fixed $\delta$,
			\[
			T=
			\begin{cases}
				\Omega(\beta\sqrt n),
				& b\geq1 \text{ and } \beta+\epsilon\geq1,\\[1mm]
				0,
				& \text{otherwise}.
			\end{cases}
			\]
			The implied constant depends only on $\delta$.
			The lower bound already holds for $b=1$ and for
			diagonal orthogonal projectors of rank zero or one.
			
			In the zero-query cases, every admissible $P$ equals
			$0$ if $\beta+\epsilon<1$, and equals $I_n$ if
			$b=0$ and $\beta+\epsilon\geq1$.
		\end{enumerate}
	\end{theorem}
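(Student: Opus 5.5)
Part~\ref{item:idempotent-admissibility} is a case analysis on the block-encoding constraints, and Part~\ref{item:idempotent-query-complexity} then reduces to \cref{thm:search-hybrid}.

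\emph{Admissibility.} Since every block of a unitary has norm at most one, any admissible $P$ satisfies $\norm{P}\leq\beta\norm{\widetilde P}+\epsilon\leq\beta+\epsilon$, where $\widetilde P$ is the encoded block. A nonzero idempotent has $\norm{P}\geq1$, so if $\beta+\epsilon<1$ every admissible $P$ is $0$. If $b=0$, the encoded block is the unitary $U_P$ itself, so $\norm{P-\beta U_P}\leq\epsilon$. The zero matrix would then force $\beta\leq\epsilon<1/2$, so under $\beta+\epsilon\geq1$ it is excluded and $P\ne0$. To show that $P=I_n$ is the only possibility, I would use the Jordan form of $P$: its eigenvalues lie in $\{0,1\}$, and those of $\beta U_P$ lie on the circle of radius $\beta$. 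By Bauer--Fike-type reasoning it is cleaner to argue $\sigma_{\min}(P)\geq\beta-\epsilon>0$ when $\beta>\epsilon$, which makes $P$ invertible and hence $P=I_n$. The complementary case $\beta\le\epsilon$ cannot occur together with $\beta+\epsilon\ge1$ and $\epsilon<1/2$. Conversely, if $b\geq1$ and $\beta+\epsilon\geq1$, the zero matrix is encoded by any unitary whose top block is $0$, for instance $X\otimes I_n$ on one ancilla. For the nonzero case I would take the rank-one projector $P=\ket1\bra1$ and a unitary whose top block is $s\ket1\bra1$ with $s=\min(1,1/\beta)$, built from a $2\times2$ rotation on the ancilla controlled on $\ket1$. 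The error is $\abs{1-\beta s}\le\epsilon$ because $\beta+\epsilon\ge1$. This proves Part~\ref{item:idempotent-admissibility} and the zero-query clauses of Part~\ref{item:idempotent-query-complexity}.

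\emph{Lower bound.} Assume $b\geq1$ and $\beta+\epsilon\geq1$; I would reduce to \cref{thm:search-hybrid} with $b=1$, using diagonal projectors $P_x=\operatorname{diag}(x_1,\ldots,x_n)$ of rank zero or one. Choose scalars $t_0,t_1\in[-1,1]$ with $\abs{\beta t_0}\leq\epsilon$ and $\abs{1-\beta t_1}\leq\epsilon$, and take $V_i=R(t_i)$ with $R$ as in the proof of \cref{thm:diagonal-Hurwitz-complexity}. Then $U_x=\sum_k V_{x_k}\otimes\ket k\bra k$ is a $(\beta,1,\epsilon)$-block-encoding of $P_x$, and $P_{\mathbf 0}=0$ while $P_{e_j}\ne0$. \cref{thm:search-hybrid} gives $\Omega(\sqrt n/\eta)$ with $\eta=\norm{R(t_1)-R(t_0)}$. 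Since $\norm{R(t)-R(t')}=O(\abs{t-t'})$ only when $t,t'$ stay away from $\pm1$, I must pick the $t_i$ to make $\eta=O(1/\beta)$. For $\beta\geq2$, take $t_0=0$ and $t_1=1/\beta\leq1/2$, giving $\eta=\Theta(1/\beta)$ and the bound $\Omega(\beta\sqrt n)$. For $\beta<2$ it suffices to obtain $\Omega(\sqrt n)$, which needs only $\eta\le2$; any valid choice, such as $t_0=0$ and $t_1=\min(1,1/\beta)$, works.

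\emph{Main obstacle.} I expect the hardest step to be the rigorous $b=0$ characterization, namely that an idempotent within $\epsilon<1/2$ of $\beta$ times a unitary with $\beta+\epsilon\ge1$ must equal $I_n$. The argument is $\sigma_{\min}(P)\ge\beta-\epsilon$, so one needs $\beta>\epsilon$. If $\beta\le\epsilon$, then $\norm P\le2\epsilon<1$ forces $P=0$, but then $\beta+\epsilon\ge1$ would require $\epsilon\ge1/2$, contradicting the hypothesis. Hence $P$ is invertible and idempotent, so $P=I_n$.
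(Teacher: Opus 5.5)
Your proposal is correct and follows essentially the paper's proof. It uses the same bound $\norm{P}\leq\beta+\epsilon$ when $\beta+\epsilon<1$, and an invertibility argument when $b=0$; your $\sigma_{\min}(P)\geq\beta-\epsilon$ is equivalent to the paper's kernel-vector argument. It also feeds the same oracle family $V_\ell=R(s\ell)$, with $s=\min\{1,\beta^{-1}\}$, into \cref{thm:search-hybrid}.

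Two small tidy-ups. For general $b\geq1$, pad with $I_{2^{b-1}}$ on the extra ancillas, as the paper does. Your case split on $\beta$ is unnecessary, since $\norm{R(s)-R(0)}^2=2s^2/(1+\sqrt{1-s^2})\leq 2s^2$ holds uniformly for $s\in[0,1]$.
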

	
	\begin{proof}
		Write
		\[
		B=(\langle {0^b} \rvert\otimes I_n)U_P
		(\lvert {0^b} \rangle\otimes I_n).
		\]
		Since $B$ is a compression of a unitary,
		$\norm{B}\leq1$. If $P\neq0$, idempotence implies that
		$P$ has an eigenvalue equal to $1$, so
		\[
		1\leq\norm{P}
		\leq\norm{P-\beta B}+\beta\norm{B}
		\leq\epsilon+\beta.
		\]
		Hence $\beta+\epsilon<1$ forces $P=0$.
		
		Suppose next that $b=0$ and $\beta+\epsilon\geq1$.
		Then $B=U_P$ and since $\epsilon<0.5$, 
		$\beta\geq1-\epsilon>\epsilon$.
		If a unit vector $\ket{v}$ satisfies $P\ket{v}=0$, then
		\[
		\beta
		=\norm{\beta U_P\ket{v}}
		=\norm{(P-\beta U_P)\ket{v}}
		\leq\epsilon,
		\]
		a contradiction. Thus $P$ is invertible, and $P^2=P$
		implies $P=I_n$. 
		This completes the $\implies$ case for item 1, and proves the zero-query assertion in item 2. 

		Now fix $b,\beta$ with $b\geq1$ and $\beta+\epsilon\geq1$.
		Set
		\[
		s=\min\{1,\beta^{-1}\}.
		\]
		Then $0<s\leq1$. The quantity $\abs{1-\beta s}$ is $0$ for $\beta\geq1$, and equals
		$1-\beta\leq\epsilon$ otherwise. Hence, $\abs{1-\beta s}\leq\epsilon$. 
		
		For $0\leq t\leq1$, define the unitary
		\[
		R(t)=
		\begin{pmatrix}
			t & \sqrt{1-t^2}\\
			\sqrt{1-t^2} & -t
		\end{pmatrix},
		\]
		and set
		\[
		V_\ell=I_{2^{b-1}}\otimes R(s\ell),
		\qquad \ell\in\{0,1\}.
		\]
		As in \cref{thm:search-hybrid}, let $\mathbf0$ be the
		all-zero bit string and $e_j$ the bit string whose only
		nonzero entry is its $j$-th entry. For
		$x\in\{\mathbf0,e_1,\ldots,e_n\}$, define
		\[
		P_x=\sum_{k=1}^n x_k\ket{k}\bra{k},
		\qquad
		U_x=\sum_{k=1}^n V_{x_k}\otimes\ket{k}\bra{k}.
		\]
		Each $P_x$ is a diagonal orthogonal projector of rank
		zero or one, so $\norm{P_x}\leq1$. Each $U_x$ is unitary,
		and
		\[
		(\langle {0^b} \rvert\otimes I_n)U_x
		(\lvert {0^b} \rangle\otimes I_n)=sP_x.
		\]
		Consequently,
		\[
		\norm{P_x-\beta sP_x}
		=\abs{1-\beta s}\norm{P_x}
		\leq\epsilon.
		\]
		Thus every $U_x$ is a $(\beta,b,\epsilon)$-block-encoding
		of $P_x$, with the same supplied parameters.
		Since $P_{\mathbf0}=0$ and $P_{e_j}\neq0$, this also proves
		the converse in item~\ref{item:idempotent-admissibility}.
		
		Finally, tensoring with an identity does not change the
		operator norm, so
		\begin{align*}
			\eta^2
			&:=\norm{V_1-V_0}^2\\
			&=\norm{R(s)-R(0)}^2\\
			&=s^2+\bigl(\sqrt{1-s^2}-1\bigr)^2\\
			&=\frac{2s^2}{1+\sqrt{1-s^2}}
			\leq2s^2.
		\end{align*}
		Deciding whether $P_x=0$ distinguishes $x=\mathbf0$
		from $x\in\{e_1,\ldots,e_n\}$. All known parameters are
		fixed across this family, so \cref{thm:search-hybrid}
		applies. Since $0<\eta\leq\sqrt2\,s$, it gives
		\[
		T=\Omega\!\left(\frac{\sqrt n}{s}\right)
		=\Omega\!\left(\max\{1,\beta\}\sqrt n\right)
		=\Omega(\beta\sqrt n).
		\]
		This proves item~\ref{item:idempotent-query-complexity}.
	\end{proof}
	In general, the scaling parameter $\beta$, in the above theorem, can be arbitrarily large even for $n=2$. 
	For example, consider a matrix $P$ similar to $\ket 0\bra 0$, but with nearly colinear eigenvectors: $v_1=\ket{0}$ and $v_2=(1-\gamma)\ket{0} +\gamma\ket{1}$, for $\gamma>0$. 
	Then, for $Pv_1=v_1$ and $Pv_2=0$, 
	\[
		\norm{P}=\frac{\sqrt{\gamma^2+(1-\gamma)^2}}{\gamma}=\Theta\round*{\frac 1 \gamma}
	\] 
	as $\gamma \to 0^+$.

    In the next result, we provide an algorithm which solves the zero testing idempotent problem of \cref{thm:idempotent-zero-lower-bound}, when  a sufficiently accurate block encoding of the idempotent is given. Under this assumption, the algorithm achieves the lower bound from \cref{thm:idempotent-zero-lower-bound}. 
    
    \begin{theorem}\label{thm:zero_test_idempotent}
		Let $\delta \in (0,0.5)$.
		Suppose that $U_{P}$ is a $(\beta,b,\epsilon_+)$-block-encoding of { an idempotent matrix  $P\in \mathbb{C}^{n\times n}$}, where $\epsilon_+ \leq \frac{1}{4\sqrt{n}}$.
		There is a quantum algorithm that determines { whether $P$ is zero} with success probability at least $1-\delta$
    	on every admissible input. Let $Q$ denote its worst-case
    	number of queries to $U_P$ and $U_P^\dagger$. Then
    	\[
    		Q=O\!\left(\beta\sqrt n\log(1/\delta)\right).
    	\]

		Moreover, for fixed $\delta$, this algorithm has optimal
		worst-case query complexity up to constant factors, with
		\[
		Q=
		\begin{cases}
			\Theta(\beta\sqrt n)
			& \beta+\epsilon_+\geq1 \text{ and } b\geq 1 \\
			0
			& \text{otherwise}.
		\end{cases}
		\]
	\end{theorem}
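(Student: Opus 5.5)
The plan is to test $P=0$ by estimating the success amplitude of $U_P$ on a maximally entangled input. This reduces zero testing to a gap between the Frobenius norms of the encoded block in the two cases. Write $B=(\bra{0^b}\otimes I_n)U_P(\ket{0^b}\otimes I_n)$, so that $\norm{P-\beta B}\leq\epsilon_+$.

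We first dispose of the degenerate parameters using item~\ref{item:idempotent-admissibility} of \cref{thm:idempotent-zero-lower-bound}. If $\beta+\epsilon_+<1$, output ``zero'' without queries. If $b=0$ and $\beta+\epsilon_+\geq1$, output ``nonzero'' without queries; here we use $\epsilon_+\leq 1/(4\sqrt n)<1/2$. This gives the zero branch of the case formula.

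In the main case, prepare $\ket{0^b}\otimes\ket{\Phi}$, where $\ket{\Phi}=n^{-1/2}\sum_k\ket{k}\ket{k}$ and a reference copy of the system register is kept. Apply $U_P$ to the first system copy and consider the projector onto ancilla $\ket{0^b}$. The resulting good-branch amplitude is
\[
p=\frac{\norm{B}_F^2}{n}.
\]
We compare the two cases.
\begin{itemize}
\item If $P=0$, then $\norm{B}\leq\epsilon_+/\beta$. Hence $\norm{B}_F\leq\sqrt n\,\epsilon_+/\beta\leq 1/(4\beta)$, and so $\sqrt p\leq 1/(4\beta\sqrt n)$.
\item If $P\neq0$, then $P$ is idempotent with an eigenvalue $1$, so $\norm{P}_F\geq\norm{P}\geq1$. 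Hence $\beta\norm{B}_F\geq\norm{P}_F-\norm{P-\beta B}_F\geq 1-\sqrt n\,\epsilon_+\geq 3/4$, and so $\sqrt p\geq 3/(4\beta\sqrt n)$.
\end{itemize}
Thus the amplitude $\sqrt p$ separates the two cases by an additive gap of order $1/(\beta\sqrt n)$. This is the step that dictates the hypothesis $\epsilon_+\leq 1/(4\sqrt n)$: converting operator-norm error into Frobenius-norm error costs a factor $\sqrt n$. I expect this to be the main point to get right.

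Next, run amplitude estimation \cite{BrassardGilles2000QAAa} with $M=O(\beta\sqrt n)$ applications of the Grover iterate. The iterate uses $U_P$ and $U_P^\dagger$ together with reflections that do not depend on $P$. This estimates $\sqrt p$ to additive error below $1/(4\beta\sqrt n)$ with probability at least $8/\pi^2$. Output ``nonzero'' exactly when the estimate exceeds $1/(2\beta\sqrt n)$; the estimate is in amplitude, and I will work with $\sin\theta=\sqrt p$.

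To reach success probability $1-\delta$, repeat $O(\log(1/\delta))$ times and take the majority vote, using a Chernoff bound. This gives $Q=O(\beta\sqrt n\log(1/\delta))$.

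For optimality at fixed $\delta$, the matching lower bound is item~\ref{item:idempotent-query-complexity} of \cref{thm:idempotent-zero-lower-bound}. That result applies because $\epsilon_+<1/2$, and it gives $\Omega(\beta\sqrt n)$ whenever $b\geq1$ and $\beta+\epsilon_+\geq1$. In that regime $\beta\geq 1-\epsilon_+\geq 1/2$, so the upper bound is not swamped by the additive constant, and together these yield $\Theta(\beta\sqrt n)$.
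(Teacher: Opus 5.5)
Your proposal is correct and follows the paper's proof essentially step for step. It uses the same zero-query handling of the degenerate parameters, the same maximally entangled input giving $p=\norm{B}_F^2/n$, the same amplitude gap $1/(4\beta\sqrt n)$ versus $3/(4\beta\sqrt n)$, amplitude estimation with $M=O(\beta\sqrt n)$, $O(\log(1/\delta))$ repetitions, and the lower bound from \cref{thm:idempotent-zero-lower-bound}. The differences are cosmetic: you threshold the amplitude at $1/(2\beta\sqrt n)$ via the angle guarantee rather than the probability at $(9-4\sqrt2)/(16\beta^2 n)$, you use a majority vote instead of a median, and you use $\norm{P}_F\geq\norm{P}\geq1$ in place of $\norm{P}_F^2\geq\opn{rank}(P)$. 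Strictly, which output is correct in the degenerate cases comes from the final assertion of item~\ref{item:idempotent-query-complexity}, not from item~\ref{item:idempotent-admissibility}.
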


    \begin{proof}
		Let $\tilde{P}= (\langle {0^b} \rvert\otimes I)U_P(\lvert {0^b} \rangle\otimes I)$ and we know that $\norm{P-\beta \tilde{P}}\leq \epsilon_+$. 
		
		The quantum algorithm proceeds like so:
		\begin{enumerate}
			\item
			If $\beta+\epsilon_{+} <1$ output $P$ is zero and halt. 
			If $b=0$ and $\beta+\epsilon_{+} \geq 1$ output $P$ is not zero and halt. 
			\item Prepare
			\[
			\lvert {0^b} \rangle\ket{\phi},
			\qquad
			\ket{\phi}
			=
			\frac1{\sqrt n}\sum_{j=1}^n\ket j\ket j.
			\]
			\item Apply \(U_P\) to the block-encoding ancillas and the first
			system register, obtaining the normalized state
			\[
			\ket{\Psi}
			=
			(U_P\otimes I_{n})\lvert {0^b} \rangle\ket{\phi}.
			\]
			\item Use amplitude estimation with the good subspace defined by the
			block-encoding ancillas being in \(\lvert {0^b} \rangle\), and denote the
			resulting estimate by \(\tilde p\).
			\item Carry out steps 2--4 a total of $r$ independent
			times, with $r$ odd and chosen below. Denote the median
			of the estimates by $\hat p_m$.
			\item Let $T= \frac{9-4\sqrt{2}}{16\beta^2n}$. 
			If $\hat p_m> T$ then, output $P$ is not zero, otherwise output zero. 
		\end{enumerate}
		

		Note $P^2=P$.  
		By the final assertion of item 2 of \cref{thm:idempotent-zero-lower-bound}, block-encoding parameters of: (1) $\beta+\epsilon_{+} <1$ means $P=0$  or (2)  $b=0$ and $\beta+\epsilon_{+} \geq 1$ means $P=I_n$. 
		These cases are caught by the first step of the algorithm and require no queries to $U_P$ and $U_P^{\dagger}$, so $Q=0$ in these cases. 
		Therefore, we assume $\beta+\epsilon_+\geq1 \text{ and } b\geq 1$ for the remainder of the proof. 
		
		One sees $T<\frac{4}{16\beta^2n}$. 
		We may assume that $T<\frac{4}{16\beta^2n} <1$, as otherwise $\beta\leq \frac 1 2$ which implies $\norm{P}\leq \beta + \epsilon_+\leq \frac 3 4<1$. 
		This contradicts $\beta+\epsilon_+\geq1$. 
		Thus, we assume that $T<1$ below. 
		
		Projecting $\ket{\Psi}$ into the good subspace gives
		\[
		\round*{\lvert {0^b} \rangle\langle {0^b} \rvert\otimes I_{n^2}}\ket{\Psi}=\lvert {0^b} \rangle\otimes\round*{\tilde{P}\otimes I_n}\ket{\phi}.
		\]
		The probability of measuring the good subspace, that is ancilla qubits corresponding to $\lvert {0^b} \rangle$, is
		\[
			p=\norm*{\lvert {0^b} \rangle\otimes\round*{\tilde{P}\otimes I_n}\ket{\phi}}^2=\norm*{\round*{\tilde{P}\otimes I_n}\ket{\phi}}^2=\frac{1}{ n}\opn{tr}\round*{\tilde{P}^{\dagger}\tilde{P}}=\frac{1}{{n}}\norm{\tilde{P}}_F^2
		\]
		
		Now we bound the quantity $\norm{P}_F^2$. 
        Since $P$ is idempotent it is diagonalizable with eigenvalues $0$ or $1$. 
        Thus, the trace of $P$ corresponds to its rank. 
        Using the trace inner product for matrices and the fact $P^2=P$ we see that
        \[
        	\opn{rank}(P)= \opn{tr}(P)=\opn{tr}(P^{2})=\opn{tr}((P^\dagger)^{\dagger}P) \leq \norm{P^{\dagger}}_F\norm{P}_F=\norm{P}_F^2
        \]
        
		Now consider the case where $P$ is not zero then, $\opn{rank}(P)\geq 1$.
		Since $\tilde{P}=\frac{1}{\beta}P+\frac{1}{\beta}\round*{\beta \tilde{P} - P}$, we can see that

		\begin{align*}
			\sqrt{p}=\norm{(\tilde{P}\otimes I_n)\ket{\phi}} & \geq \frac 1 \beta \norm{(P\otimes I_n)\ket{\phi}}- \frac{1}{\beta}\norm{\round{(\beta \tilde{P} -P)\otimes I_n}\ket \phi}\\
			&\geq \frac 1 \beta \frac{\norm{P}_F}{\sqrt{n}}- \frac{1}{\beta}\norm{\beta \tilde{P} - P}\norm{\ket \phi}\\
			&\geq \frac 1 \beta \frac{\sqrt{\opn{rank}(P)}}{\sqrt{n}} - \frac 1 \beta \epsilon_+\\
			& \geq \frac{1}{\beta}\round*{\frac{1}{\sqrt{n}}-\frac{1}{4\sqrt{n}}}=\frac{3}{4\beta\sqrt n}.
		\end{align*}

		In the case where $P$ is zero,  and since
		$\tilde{P}=\frac{1}{\beta}P+\frac{1}{\beta}\round*{\beta \tilde{P} - P}$ we have 
		$\norm{\tilde{P}}\leq \frac{\epsilon_+}{\beta}$. Hence,
		\begin{align*}
			\sqrt{p}=\norm{(\tilde{P}\otimes I_n)\ket{\phi}}&\leq \norm{\tilde{P}\otimes I_n}\norm{\ket{\phi}} \\
			&= \norm{\tilde{P}}\leq \frac{\epsilon_+}{\beta} \leq \frac{1}{4\beta\sqrt{n}}. 
		\end{align*}
		
		Thus the amplitude \(\sqrt p\) is either at most
		\[
		a_0:=\frac1{4\beta\sqrt n}
		\]
		or at least
		\[
		a_1:=\frac3{4\beta\sqrt n}=3a_0.
		\]
		
		By using amplitude estimation~\cite[Theorem~12]{BrassardGilles2000QAAa}, we can get an estimate of $ {p}$, say $ {\tilde p}$, which satisfies
		\begin{equation}\label{eq:amp_bound}
			\abs{ {\tilde p}- {p}}\leq 2\pi \frac{ \sqrt{ {p}(1- {p})}}{M}+\frac{\pi^2}{M^2}
		\end{equation}
		with probability greater than $\frac{8}{\pi^2}$ using $O(M)$ applications of \(U_P\)
		and \(U_P^\dagger\). 
		For the following deductions, suppose that \cref{eq:amp_bound} holds.
		Let $\mu = 2(\sqrt{2}-1)a_0$ and pick $M$ to be the least integer with $M>{ \pi\mu^{-1}}$ (so $M=O(\beta \sqrt{n})$) then note $\sqrt{ {p}(1- {p})}\leq \sqrt{p}$. We can see that 
		\begin{align*}
			\abs{ {\tilde p}- {p}}< 2\frac{\pi}{\pi\mu^{-1}}\sqrt p +\round*{\frac{\pi}{\pi\mu^{-1}} }^2 &= - p + \round*{\sqrt p + {\mu}}^2\\
			&=+p-\round*{\sqrt p - {\mu}}^2 +  2 {\mu^2}.
		\end{align*}
		Using these bounds one finds that
		\begin{equation}\label{eq:estimate_bound}
			\round*{\sqrt p - {\mu}}^2 -  2 {\mu^2} < \tilde{p} < \round*{\sqrt p + {\mu}}^2.
		\end{equation}
		
		Let $T=(9-4\sqrt 2)a_0^2$. 

		If $P$ is not zero,  $\sqrt p\geq a_1 \geq a_0>\mu$ and using the lower bound in \cref{eq:estimate_bound} then,
		\[
			T=(9-4\sqrt 2)a_0^2=\round*{a_1 - {\mu}}^2 -  2 {\mu}^2 \leq \round*{\sqrt p - {\mu}}^2 -  2 {\mu^2} <\tilde{p}.
		\]
		Taking the contrapositive, we get $\tilde{p} \leq T$ implies $P$ is zero. 
		
		If $P$ is zero, $\sqrt p\leq a_0$ and using the upper bound yields,
		\[
			 \tilde{p}<  \round*{\sqrt p + {\mu}}^2\leq  \round*{a_0 + {\mu}}^2=(9-4\sqrt 2)a_0^2=T
		\]
		It follows that $\tilde{p} \geq T$ implies $P$ is not zero.
		
		
		Specifically with probability at least $\frac 8 {\pi^2}$, if $\tilde{p} \leq T$ then $P$ must be zero and if $\tilde{p} >T$ then $P$ is not zero. 
		
		Run this estimation procedure independently an odd number
		$r$ of times and take the median of the resulting estimates.
		Return \emph{not zero} if the median exceeds $T$,
		and \emph{zero} otherwise. The decision is correct whenever
		more than half of the estimates satisfy
		\cref{eq:amp_bound}. Hoeffding's inequality
		\cite{Hoeffding1963} therefore gives
		\[
		\Pr(\text{incorrect decision})
		\leq
		\exp\!\left[
		-2r\left(\frac8{\pi^2}-\frac12\right)^2
		\right].
		\]
		Choose the smallest odd $r$ for which this bound is at most
		$\delta$. Then $r=O(\log(1/\delta))$, and the total query
		count is
		\[
		O(Mr)
		=O\!\left(\beta\sqrt n\log(1/\delta)\right).
		\]

		For the lower bound, we apply item 2 of
		\cref{thm:idempotent-zero-lower-bound}.

	\end{proof}
    {  We can now apply the preceding result to decide Hurwitz stability, given a
block encoding of the spectral projector.}
    \begin{corollary}\label{thm:determine_stable_encoding_supplied}
    	Suppose that no eigenvalue of \(A\in\mathbb{C}^{n\times n}\) is purely imaginary. Let \(\delta\in(0,0.5)\). Suppose that \(U_P\) is a
        \((\beta,b,\epsilon_+)\)-block-encoding of \(\operatorname{sign}^+(A)\),
        where \(\epsilon_+\leq \frac{1}{4\sqrt{n}}\).
        
        There is a quantum algorithm that determines whether \(A\) is Hurwitz with
        success probability at least \(1-\delta\). 
        Let \(Q\) denote its worst-case number of queries to \(U_P\) and \(U_P^\dagger\).
        Then
        \[
            Q=O\!\left(\beta\sqrt{n}\log(1/\delta)\right).
        \]
        
        Moreover, for fixed \(\delta\), this algorithm has optimal worst-case query
        complexity up to constant factors, with
        \[
        Q=
        \begin{cases}
        \Theta(\beta\sqrt{n})
            & \beta+\epsilon_+\geq 1 \text{ and } b\geq 1,\\
        0
            & \text{otherwise}.
        \end{cases}
        \]
    \end{corollary}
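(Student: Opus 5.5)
This corollary follows by specializing \cref{thm:zero_test_idempotent} to the idempotent \(P=\sign^+(A)\). The plan has three steps, in this order.

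First, since \(A\) has no purely imaginary eigenvalues, \(\sign(A)\) is well defined through the Jordan decomposition. Because \(\sign(A)^2=I\), the matrix \(P=(I+\sign(A))/2\) satisfies
\[
P^2=\tfrac14\bigl(I+2\sign(A)+\sign(A)^2\bigr)=P,
\]
so it is an idempotent in \(\C^{n\times n}\).

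Second, I will use the equivalence \(A\) Hurwitz \(\iff \sign^+(A)=0\). If every eigenvalue has negative real part, then \(\sign(A)=V(-I)V^{-1}=-I\), so \(P=0\). Conversely, if some \(\lambda_j\) has positive real part, then \(P\) acts as the identity on the corresponding generalized eigenspace, so \(P\neq0\). Consequently, the zero-testing algorithm of \cref{thm:zero_test_idempotent}, run on \(U_P\) with the given \(\epsilon_+\leq 1/(4\sqrt n)\), decides Hurwitz stability. It succeeds with probability at least \(1-\delta\) and uses \(O(\beta\sqrt n\log(1/\delta))\) queries; in the degenerate parameter cases it uses zero queries.

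Third, for the matching lower bound I must check that the hard instances of \cref{thm:idempotent-zero-lower-bound} are realizable as sign projectors. The issue is that the statement quantifies only over encodings of \(\sign^+(A)\), not over arbitrary idempotents. The hard family consists of the diagonal projectors \(P_x=\sum_k x_k\ket{k}\bra{k}\) with \(x\in\{\mathbf0,e_1,\ldots,e_n\}\). Each one equals \(\sign^+(A_x)\) for \(A_x=\operatorname{diag}(2x_1-1,\ldots,2x_n-1)\), which has no imaginary-axis eigenvalues. The unitaries \(U_x\) built in that proof are valid \((\beta,b,\epsilon_+)\)-block-encodings of \(P_x\) whenever \(b\geq1\) and \(\beta+\epsilon_+\geq1\). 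The proof used \(\epsilon<1/2\), which holds here since \(\epsilon_+\leq 1/(4\sqrt n)\leq 1/4\).

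Any Hurwitz-deciding algorithm therefore distinguishes \(x=\mathbf0\) from \(x=e_j\) with the same oracle family. \cref{thm:search-hybrid}, applied as in that proof, then gives \(\Omega(\beta\sqrt n)\). Combined with the upper bound, this yields \(\Theta(\beta\sqrt n)\) for fixed \(\delta\). In the remaining parameter cases, the edge-case analysis of \cref{thm:idempotent-zero-lower-bound} forces \(P=0\) or \(P=I_n\), so zero queries suffice. The only step needing real care is this realizability check for the lower bound; the rest is direct invocation of earlier results.
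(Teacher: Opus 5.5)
Your proposal is correct and follows the paper's route: the paper's proof is a one-liner that runs the zero test of \cref{thm:zero_test_idempotent} on $P=\sign^+(A)$ and outputs Hurwitz if and only if it finds $P=0$. Your additional check that the hard instances $P_x$ are themselves sign projectors (in fact every idempotent $P$ equals $\sign^+(2P-I)$) makes explicit why the lower bound transfers to the Hurwitz setting, a step the paper leaves implicit.
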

    \begin{proof}
         Apply the algorithm in \cref{thm:zero_test_idempotent} to $P=\operatorname{sign}^+(A)$, and output Hurwitz if and only if it determines that $P=0$.

        
    \end{proof}

    We discuss how $\beta$ might depend on $\omega_A$ (or not) in \cref{sec:beta_dependence_on_omega}. 

\section{Determining stability using the block-encoding of $A$} \label{sec:determining_stability}

    \Cref{thm:determine_stable_encoding_supplied} assumes access to the block-encoding of $\sign^+(A)$, rather than of $A$. 
    Our algorithmic pipeline is: $A \to \sign^+(A)\to$ Hurwitz Decision. 
    \Cref{thm:determine_stable_encoding_supplied} provides optimal complexity for the 2nd step, but the first step still needs to be done. 
    In \cref{sec:matrix_sign} we introduce a method of producing an accurate block-encoding of $\sign^+(A)$, which culminates with \cref{thm:sign_complexity}. 
    The results of this section are simple and direct applications of \cref{thm:determine_stable_encoding_supplied} and  \cref{thm:sign_complexity}.

	\begin{theorem}\label{thm:determine_stable_encoding_ours}
		Suppose that $A\in \C^{n\times n}$ has $\norm{A}\leq 1$ and no eigenvalue of $A$ is purely imaginary. 
		Assume that $\nu\in (0,\omega_A]$ is known and let $\delta \in (0,0.5)$. 
		Let $U_A$ be a $(1,a,0)$-block-encoding of $A$.

        There is a quantum algorithm that decides whether $A$ is Hurwitz, with success probability at least $1-\delta$, using
		\[
			O\round*{\frac{\sqrt{n}}{\nu^2}\log\round*{{\frac{\sqrt{n}}{\nu}}}\log\round*{\frac 1 \delta}}
		\]
		calls to $U_A$ and $U_A^\dagger$.
		
	\end{theorem}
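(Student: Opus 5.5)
The proof composes two earlier results. \Cref{thm:sign_complexity} from \cref{sec:matrix_sign} builds a block-encoding of $\sign^+(A)$ from $U_A$. \Cref{thm:determine_stable_encoding_supplied} then decides Hurwitz stability from any such encoding whose error is at most $\epsilon_+\le 1/(4\sqrt n)$.

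I will invoke \cref{thm:sign_complexity} with $\norm{A}\le1$, the promised $\nu\le\omega_A$, and target accuracy $\epsilon_+=1/(4\sqrt n)$. I expect it to return a $(\beta,b,\epsilon_+)$-block-encoding $U_P$ of $\sign^+(A)$ with:
\begin{itemize}
\item normalization $\beta=O(1/\nu)$, coming from the resolvent bound $\norm{(A-\mathrm{i}tI)^{-1}}\le 1/\omega_A\le1/\nu$ that controls the structured inverse realizing the Pad\'e approximant;
\item a cost of $C=O(\nu^{-1}\log(\sqrt n/\nu))$ calls to $U_A$ and $U_A^\dagger$ per use of $U_P$.
\end{itemize}
The product $\beta C$ must match the stated bound, so this split between $\beta$ and $C$ is the one to verify when reading off \cref{thm:sign_complexity}. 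If instead that theorem gives $\beta=O(1)$ and $C=O(\nu^{-2}\log(\sqrt n/\nu))$, the argument below is unchanged, because only the product $\beta C$ enters.

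Next I run the algorithm of \cref{thm:zero_test_idempotent}, in the form of \cref{thm:determine_stable_encoding_supplied}, on $U_P$. It uses $O(\beta\sqrt n\log(1/\delta))$ queries to $U_P$, $U_P^\dagger$ and their controlled versions. I then check that each of these costs $O(C)$ queries to $U_A$:
\begin{itemize}
\item $U_P^\dagger$ is the circuit for $U_P$ reversed, with every call to $U_A$ replaced by $U_A^\dagger$ and vice versa;
\item controlled $U_P$ is obtained by controlling each oracle call, which the access model permits, while the non-oracle gates are controlled for free in query count.
\end{itemize}

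The total is therefore
\[
O(\beta\sqrt n\log(1/\delta))\cdot C=O\!\left(\frac{\sqrt n}{\nu^2}\log\frac{\sqrt n}{\nu}\log\frac1\delta\right).
\]
For correctness, $\sign^+(A)$ is idempotent, and $A$ is Hurwitz iff $\sign^+(A)=0$. The zero-query shortcut cases of \cref{thm:zero_test_idempotent} only decrease the count.

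The step I expect to need the most care is the interface with \cref{thm:sign_complexity}. I must confirm that its error guarantee is measured in the same unnormalized convention ($\norm{\sign^+(A)-\beta\tilde P}\le\epsilon_+$) that \cref{thm:zero_test_idempotent} requires. I must also confirm that the logarithmic factor $\log(1/\epsilon_+)+\log(1/\nu)$ collapses to $O(\log(\sqrt n/\nu))$ when $\epsilon_+=1/(4\sqrt n)$. If that theorem states its accuracy relative to $\beta$, I will rescale the target accuracy to $\epsilon_+/\beta=\Theta(\nu/\sqrt n)$. The logarithm then remains $O(\log(\sqrt n/\nu))$, so the bound is unaffected.
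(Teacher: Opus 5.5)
Your proposal is correct and follows the paper's proof: set $\epsilon_+=1/(4\sqrt n)$ and take the $(2\sqrt5/\nu,\,2a+\ell+3,\,\epsilon_+)$-block-encoding from \cref{thm:sign_complexity}, which costs $O(\nu^{-1}\log(\sqrt n/\nu))$ calls to $U_A,U_A^\dagger$ per use. Then multiply by the $O(\beta\sqrt n\log(1/\delta))$ uses required by \cref{thm:determine_stable_encoding_supplied}. Your expected split ($\beta=O(1/\nu)$) is the one the paper uses, and the error is in the unnormalized convention of \cref{def:block-encoding}, so no rescaling is needed; your remarks on adjoints and controlled calls make explicit what the paper leaves implicit.
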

	\begin{proof}
        Set $\epsilon_+=1/(4\sqrt n)$. 
		If \cref{thm:sign_complexity} is used to produce the block encoding of $P=\sign^+(A)$ with error $\epsilon_+$ and $\nu \leq \omega_A$ is known then, $\beta =O(\nu^{-1})$ and $\sign^+(A)$ uses $O(\nu^{-1}\log(\nu^{-1} \epsilon_+^{-1}))=O(\nu^{-1}\log(\sqrt n / \nu) )$ calls to the block encoding of $A$. 
		Applying the amplitude estimation procedure from \cref{thm:determine_stable_encoding_supplied} would use $O(\beta\sqrt{n}\log(\delta^{-1}))$ calls to $U_P$ and $U_P^\dagger$. 
		Therefore, the total complexity is
		\[
			O\round*{\frac{\sqrt{n}}{\nu}\log\round*{\frac 1 \delta} \times \frac 1 \nu \log\round*{{\frac{\sqrt{n}}{\nu}}}}
			=
			O\round*{\frac{\sqrt{n}}{\nu^2}\log\round*{{\frac{\sqrt{n}}{\nu}}}\log\round*{\frac 1 \delta}}.
		\]
	\end{proof}

	\begin{theorem}[When $A$ is Hermitian or normal] \label{thm:determine_stable_encoding_normal}
		Suppose that $A\in \C^{n\times n}$ has $\norm{A}\leq 1$ and no eigenvalue of $A$ is purely imaginary. 
		Let $\delta \in (0,0.5)$. 
		Assume that $\nu\in (0,\omega_A]$ is known and that $A$ is Hermitian or normal.
		Let $U_A$ be a $(1,a,0)$-block-encoding of $A$.
		
		For every $\epsilon_+\in(0,1]$, there is a $(1,a+3,\epsilon_+)$-block-encoding of $\sign^+(A)$.
		And when $\epsilon_+ = \frac{1}{4\sqrt{n}}$  the algorithm  from \cref{thm:determine_stable_encoding_supplied} uses 
		\[
			O\round*{\frac{\sqrt{n}}{\nu}\log\round*{{n}}\log\round*{\frac 1 \delta}}
		\]
		calls to $U_A$ and $U_A^\dagger$ and succeeds with probability at least $1-\delta$.
	\end{theorem}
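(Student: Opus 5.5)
The plan is to build the block-encoding of $\sign^+(A)$ with QSVT on a Hermitian matrix, with no Padé inverse. Then we plug it into \cref{thm:determine_stable_encoding_supplied} with $\beta=1$.

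\textbf{Step 1: reduction to a Hermitian matrix.} For normal $A$ with eigendecomposition $A=\sum_j\lambda_j u_ju_j^\dagger$, we have $\sign(A)=\sum_j\sign(\Re\lambda_j)u_ju_j^\dagger$. Set $H=(A+A^\dagger)/2$. Normality gives $H=\sum_j\Re(\lambda_j)u_ju_j^\dagger$, so $\sign^+(A)$ equals the Hermitian spectral projector $\sign^+(H)$. Moreover $\sigma_{\min}(A-\mathrm{i}tI)=\min_j|\lambda_j-\mathrm{i}t|$, so $\omega_A=\min_j|\Re\lambda_j|\geq\nu$. Hence every eigenvalue of $H$ lies in $[-1,-\nu]\cup[\nu,1]$. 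A $(1,a+1,0)$-block-encoding of $H$ is obtained by a linear combination of $U_A$ and $U_A^\dagger$ with one extra ancilla (Hadamard, controlled-$U_A$/controlled-$U_A^\dagger$, Hadamard). This uses one query to each per application.

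\textbf{Step 2: polynomial approximation of the sign function.} I would invoke the standard QSVT sign-function approximation of \cite{QSVT}. There is an odd real polynomial $p$ of degree $d=O(\nu^{-1}\log(1/\epsilon_+))$ with $|p(x)|\leq1$ on $[-1,1]$ and $|p(x)-\sign(x)|\leq\epsilon_+$ on $[-1,-\nu]\cup[\nu,1]$. QSVT (eigenvalue transformation for the Hermitian $H$, with one more ancilla for the phase sequence) gives a $(1,a+2,\epsilon_+)$-block-encoding of $\sign(H)$ using $O(d)$ queries to $U_A,U_A^\dagger$.

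\textbf{Step 3: forming $\sign^+$.} Take the linear combination $(I+p(H))/2$ using one more ancilla, a Hadamard, and a controlled QSVT circuit. Controlled calls are allowed, and controlling a circuit built from controlled oracles only needs controlled oracles. The result is a $(1,a+3,\epsilon_+/2)$-block-encoding, hence in particular a $(1,a+3,\epsilon_+)$-block-encoding, of $\sign^+(A)$. This holds for every $\epsilon_+\in(0,1]$, which proves the first claim.

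\textbf{Step 4: complexity.} Take $\epsilon_+=1/(4\sqrt n)$. Then $\log(1/\epsilon_+)=O(\log n)$ for $n\geq2$, so each call to $U_P$ costs $O(\nu^{-1}\log n)$ queries. Applying \cref{thm:determine_stable_encoding_supplied} with $\beta=1$ and $b=a+3\geq1$ uses $O(\sqrt n\log(1/\delta))$ calls to $U_P$ and $U_P^\dagger$. The adjoint $U_P^\dagger$ has the same cost. Multiplying gives the stated $O(\sqrt n\,\nu^{-1}\log n\log(1/\delta))$ bound.

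\textbf{Main obstacle.} The only delicate point is the bookkeeping in Step 2. QSVT applied to a Hermitian block-encoding with an odd polynomial yields $p(H)$ exactly in the top block, and the $\epsilon_+$ error follows from the spectral theorem since $H$ is Hermitian. We must also make sure the ancilla count comes out as exactly $a+3$. If an extra qubit appears, I would absorb the phase ancilla into the LCU ancilla, or simply note that padding ancillas can be adjusted.
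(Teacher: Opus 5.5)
Your proposal is correct and follows essentially the same route as the paper. The steps match: a two-term LCU gives a $(1,a+1,0)$-block-encoding of $(A+A^\dagger)/2$; the QSVT odd sign polynomial of degree $O(\nu^{-1}\log(1/\epsilon_+))$ uses one extra signal qubit; an LCU with $I$ yields a $(1,a+3,\epsilon_+/2)$-block-encoding of $\sign^+(A)$; and \cref{thm:determine_stable_encoding_supplied} is applied with $\beta=1$.

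The differences are minor. The paper first passes through \cref{thm:omega_A} to cover inexact input encodings, which the stated exact-encoding hypothesis makes unnecessary. Your ancilla hedge is also not needed: the single QSVT signal qubit also implements the real-part combination, so the count is exactly $a+3$ (and padding could only increase that count, never reduce it).
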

	\begin{proof}
        As a technical detail, just because we know that $A$ is normal/Hermitian doesn't mean that 
        $\widetilde A_U=(\bra{0^a}\otimes I)U(\ket{0^a}\otimes I)$ is normal/Hermitian. 
        So we first argue that we can WLOG assume that $\widetilde A_U$ is a Hermitian matrix. 
    
        Let $U$ be an $(\alpha,a,\epsilon)$-block-encoding of a normal matrix $A\in\C^{n\times n}$, with a known $\epsilon<\nu\leq\omega_A$. 
        Let $H(A)=0.5(A+A^{\dagger})$ be the Hermitian part of $A$. 
        Using LCU to approximately construct $H(A)$ gives us a $(1,a+1,0)$-block-encoding of
        $H(\widetilde A_U)$, using one controlled query each to $U$ and
        $U^\dagger$. 
        Since $A$ is normal,
        \[
        \omega_{H(A)}=\omega_A,\qquad
        \sign[H(A)]=\sign(A),\qquad
        \norm{H(A)-\alpha H(\widetilde A_U)}\leq\epsilon.
        \]
        Proposition~\ref{thm:omega_A} therefore gives
        \[
        \norm{H(\widetilde A_U)}\leq1,\qquad
        \omega_{H(\widetilde A_U)}\geq\frac{\nu-\epsilon}{\alpha}:=\tilde{\nu},
        \qquad
        A\text{ is Hurwitz}\iff H(\widetilde A_U)\text{ is Hurwitz}.
        \]
        Since $H(\widetilde A_U)$ is Hermitian, we can assume below that we are given access to a $(1,a+1,0)$-block-encoding of a Hermitian matrix. 
        The lower bound becomes $\tilde{\nu}$. 
        When $U_A$ is a $(1,a,0)$-block-encoding of $A$ (as the statement of this theorem assumes) then, $\nu=\tilde{\nu}$. 
        Therefore, even when a normal/Hermitian $A$ is inexactly encoded (but still accurate enough, again see \cref{thm:omega_A}) we can still assume that $A$ is Hermitian and exactly encoded.

		
		We use~\cite[Lemma 25 \& Theorem 26]{QSVT} to form a $(1,(a+1)+1,\epsilon_+)$-block-encoding of $\sign[A]$. 
		We can do this by picking their $\delta=\nu$.
		Applying the two-term LCU construction~\cite[Lemma~52]{QSVT} to $I$ and this block-encoding of $\sign(A)$, with coefficients $1/2,1/2$, gives a $(1,a+3,\epsilon_+/2)$-block-encoding of $\sign^+(A)$. It uses one controlled application of the sign block-encoding and two Hadamard gates. 
		By \cref{thm:determine_stable_encoding_supplied} we get
		\[
			O\round*{\sqrt{n} \log\round*{\frac 1 \delta} \times \frac 1 \nu \log\round*{{\sqrt{n}}}}
			=O\round*{\frac{\sqrt{n}}{\nu}\log\round*{{n}}\log\round*{\frac 1 \delta}}
		\]
		calls to $U_A$ and $U_A^\dagger$.

    \end{proof}

    The improvement in complexity for \cref{thm:determine_stable_encoding_normal} (normal/Hermitian $A$) over \cref{thm:determine_stable_encoding_ours} (non-normal $A$) is partly due to mathematics, not {implementation}. 
    Suppose that $U_P$ is a $(\beta,b,\epsilon_+)$-block-encoding of $\sign^+(A)$. 
	In the general case $\beta=1$ is not always possible since $\sign^{+}(A)$ might be an oblique projector; so, for an exact encoding, $1<\norm{\sign^{+}(A)}\leq \beta$ is possible.
	In contrast, when $B$ is Hermitian and $\sign^+(B)\neq0$, one has $\norm{\sign^{+}(B)}=1$, because the eigenvectors are orthogonal. 
    Worst case scaling for $\norm{\sign^{+}(A)}$ in terms of $\omega_A$ is given in \cref{sec:beta_dependence_on_omega}. 

    \begin{remark}[Possible improvement for normal matrices] \label{rem:normal_eigenphase_extension}
        The factor $\log n$ in \cref{thm:determine_stable_encoding_normal} may be removable using the eigenphase-threshold algorithm of \cite[Corollary~4.1]{JefferyWitteveen2026GroundState}.
        For normal $A$, the Hermitian matrix $H=(A+A^\dagger)/2$ has eigenvalues equal to the real parts of those of $A$, so $A$ is Hurwitz if and only if $\lambda_{\max}(H)<0$.
        At constant failure probability, their algorithm uses $O((\gamma\Delta)^{-1})$ unitary queries, where $\Delta$ is the promised phase separation and $\gamma$ is the amplitude overlap with the relevant spectral subspace.
        A maximally entangled input supplies amplitude at least $n^{-1/2}$ in every nonzero eigenspace of $H\otimes I_n$.
        A suitable reduction preserving this overlap up to a constant factor, with phase separation $\Theta(\nu)$ and $O(1)$ queries to $U_A$ per unitary query, would therefore yield an $O(\sqrt n/\nu)$ decision bound, matching \cref{thm:diagonal-Hurwitz-complexity}.
        Constructing this reduction in our access model, including satisfying their eigenphase-window promise, requires a separate algorithmic development, which we do not pursue here.
    \end{remark}

\section{Parameter dependence and extensions}
\label{sec:Discussion}

\subsection{$\omega_A$ and eigenvector conditioning}\label{sec:omega_eigenvector_conditioning}
For a diagonalizable matrix $A=VDV^{-1}$, the columns of $V$ form an eigenvector basis and the diagonal entries of $D$ are the corresponding eigenvalues.
The condition number $\kappa_2(V)=\norm{V}\norm{V^{-1}}$ measures the sensitivity of this change of basis to perturbations and enters the eigenvalue-estimation query bounds of \cite{ZhangEtAl2025NonHermitian,LinShang2026NonNormal}.

To see why $\omega_A$ does not control this condition number, consider, for $0<\varepsilon\leq1$,
\[
    A_\varepsilon=\frac14
    \begin{pmatrix}
        -2&1\\
        0&-2-\varepsilon
    \end{pmatrix},
    \qquad
    \norm{A_\varepsilon}\leq\norm{A_\varepsilon}_F\leq\frac{\sqrt{14}}4<1.
\]
Its Hermitian part satisfies $(A_\varepsilon+A_\varepsilon^\dagger)/2\preceq-3I/8$, which gives $\sigma_{\min}(A_\varepsilon-\mathrm{i}tI)\geq3/8$ for every $t\in\R$ and hence $\omega_{A_\varepsilon}\geq3/8$.
The eigenvalues $-1/2$ and $-(2+\varepsilon)/4$ have respective eigendirections $(1,0)^\top$ and $(1,-\varepsilon)^\top$, which approach one another as $\varepsilon\downarrow0$.
Let $V_\varepsilon$ have any nonzero eigenvectors in these directions as its columns, in that order.
The projector onto the first eigenspace along the second is
\[
    V_\varepsilon
    \begin{pmatrix}1&0\\0&0\end{pmatrix}
    V_\varepsilon^{-1}
    =\begin{pmatrix}1&\varepsilon^{-1}\\0&0\end{pmatrix}.
\]
Its norm is at most $\kappa_2(V_\varepsilon)$, so
\[
    \kappa_2(V_\varepsilon)\geq\sqrt{1+\varepsilon^{-2}}\longrightarrow\infty
    \qquad(\varepsilon\downarrow0).
\]
Thus every eigenvector basis becomes ill-conditioned, regardless of column scaling, while $n=2$ and the margin lower bound $\nu=3/8$ remain fixed.

\subsection{Dependence of $\omega_{A}$ on dimension $n$}
	From the above results we can see that quantum algorithms still see dependence on $\omega_{A}^{-1}$. 
	This motivates some discussion on $\omega_{A}^{-1}$. 
	A natural question to ask is: does $\omega_{A}^{-1}$ scale with the dimension of the matrix $A$? 
	This question is too imprecise to answer, but by examining specific classes of matrices with a probability measure we can shed light on the topic. 
	\begin{proposition}(See the proof in \cref{sec:Proof_thm:rand_diag_matrix}) \label{thm:rand_diag_matrix}
		Let $n\in \N$ have $n\geq 2$. 
		Consider real diagonal matrices $A=\opn{diag}(a_j)_{j=1}^n$ where each $a_j$ is chosen independently and uniformly at random from the interval $[-1,1]$. 
		Then, 
		    \[
		\Pr\left(
		\frac{1}{5n}\leq\omega_A\leq\frac2n
		\right)
		=
		\left(1-\frac{1}{5n}\right)^n
		-
		\left(1-\frac2n\right)^n
		\geq
		e^{-0.2}-e^{-2}
		>0.68.
		\]
	\end{proposition}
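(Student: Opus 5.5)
The plan is to reduce $\omega_A$ to a minimum of independent uniform variables and then compute the probability in closed form. For a real diagonal (hence normal) matrix $A=\opn{diag}(a_j)$, the paper already notes that $\omega_A=\min_{\lambda\in\opn{spec}(A)}\abs{\Re\lambda}$. Explicitly, $A-\mathrm{i}tI$ is diagonal with entries $a_j-\mathrm{i}t$, so
\[
\sigma_{\min}(A-\mathrm{i}tI)=\min_j\sqrt{a_j^2+t^2},
\]
and this is minimized at $t=0$. Hence $\omega_A=\min_j\abs{a_j}$.

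If $a_j$ is uniform on $[-1,1]$, then $X_j:=\abs{a_j}$ is uniform on $[0,1]$, and the $X_j$ are independent. Thus $\omega_A=\min_j X_j$, and for $0\le x\le1$,
\[
\Pr(\omega_A> x)=(1-x)^n.
\]
Since the distribution is continuous, endpoints carry no mass. It follows that
\[
\Pr\left(\tfrac1{5n}\le\omega_A\le\tfrac2n\right)=\left(1-\tfrac1{5n}\right)^n-\left(1-\tfrac2n\right)^n,
\]
which is valid because $2/n\le1$ for $n\ge2$. This gives the stated equality.

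For the inequality, I will bound the two terms separately.

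\textbf{Second term.} Using $1-y\le e^{-y}$ and $1-2/n\ge0$, we get $(1-2/n)^n\le e^{-2}$.

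\textbf{First term.} I need $(1-\tfrac1{5n})^n\ge e^{-0.2}$, which is the delicate direction. The map $n\mapsto(1-c/n)^n$ is increasing toward its limit $e^{-c}$, so it approaches $e^{-0.2}$ from below. Therefore the literal inequality $(1-\tfrac1{5n})^n\ge e^{-0.2}$ fails for finite $n$, and I expect this to be the main obstacle. I would handle it as follows:
\begin{enumerate}
\item Use $\ln(1-y)\ge -y-y^2$ for $0\le y\le 1/2$. With $y=1/(5n)$ this gives
\[
\left(1-\tfrac1{5n}\right)^n\ge \exp\!\left(-0.2-\tfrac{1}{25n}\right)\ge e^{-0.22}\qquad(n\ge2).
\]
\item In addition, $(1-2/n)^n$ increases in $n$ and equals $0$ at $n=2$. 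So for small $n$ the second term is far below $e^{-2}$, which absorbs the deficit in the first term.
\end{enumerate}

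The cleanest route to the final numeric bound $>0.68$ is then a case split:
\begin{itemize}
\item For $n\ge N$, with $N$ a modest threshold, use $e^{-0.22}-e^{-2}\approx 0.8025-0.1353>0.66$. If necessary, refine step 1 so that $(1-\tfrac1{5n})^n\ge e^{-0.2}(1-\tfrac{0.02}{n})$, which pushes the bound above $0.68$ once $n\ge N$.
\item For $2\le n<N$, check the difference directly. For example, at $n=2$ the value is $0.81$.
\end{itemize}

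Since the bound in the statement is written as $\ge e^{-0.2}-e^{-2}$, I would present the asymptotic form and then state the uniform bound $>0.68$ via this monotonicity-plus-finite-check argument. If the exact intermediate inequality $\ge e^{-0.2}-e^{-2}$ turns out to be literally false for some $n$, I would rephrase that step as a limit statement while keeping the final $>0.68$ claim.
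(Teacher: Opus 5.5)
\textbf{Gap: the middle inequality is never proved.} Your reduction $\omega_A=\min_j\lvert a_j\rvert$ and the closed form $(1-\frac1{5n})^n-(1-\frac2n)^n$ are correct and agree with the paper. The problem is the inequality. The statement asserts $y_n:=(1-\frac1{5n})^n-(1-\frac2n)^n\ge e^{-0.2}-e^{-2}$ for \emph{every} $n\ge2$, and your plan never establishes this.

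Bounding the two terms separately cannot give it, because, as you note, $(1-\frac1{5n})^n<e^{-0.2}$ for all $n$. You then retreat to a case split aimed only at $>0.68$, and even propose replacing the middle inequality by a limit statement. That inequality is true, so weakening it leaves part of the proposition unproved. The paper obtains it by asserting that $y_n$ decreases to its limit $e^{-0.2}-e^{-2}$.

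The missing idea is to compare the two deficits jointly. The slack $e^{-2}-(1-\frac2n)^n\approx 2e^{-2}/n$ in the second term dominates the deficit $e^{-0.2}-(1-\frac1{5n})^n\approx e^{-0.2}/(50n)$ in the first. Concretely, $\ln(1-x)\le -x-x^2/2$ gives $(1-\frac2n)^n\le e^{-2-2/n}$ (trivially also at $n=2$). Your step 1 gives $(1-\frac1{5n})^n\ge e^{-0.2-1/(25n)}$. Now set $u=1/n\le\frac12$ and use $1-e^{-x}\le x$ and $1-e^{-x}\ge x-x^2/2$ for $x\ge0$:
\[
y_n-\bigl(e^{-0.2}-e^{-2}\bigr)\ge e^{-2}\bigl(1-e^{-2u}\bigr)-e^{-0.2}\bigl(1-e^{-u/25}\bigr)\ge u\Bigl(e^{-2}(2-2u)-\frac{e^{-0.2}}{25}\Bigr)\ge u\Bigl(e^{-2}-\frac{e^{-0.2}}{25}\Bigr)>0 .
\]
This proves $y_n>e^{-0.2}-e^{-2}\approx 0.6834>0.68$ for all $n\ge2$ at once. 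No case split is needed, and you do not need the paper's monotonicity claim.

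\textbf{Secondary slip.} The refinement $(1-\frac1{5n})^n\ge e^{-0.2}(1-\frac{0.02}{n})$ that you propose is false. At $n=2$ it would require $0.81\ge 0.8105\ldots$, and the negative next-order term in $n\ln(1-\frac1{5n})$ makes it fail for every $n$. What your step 1 actually yields is $e^{-0.2}(1-\frac{0.04}{n})$. Your $>0.68$ route survives with that constant: it gives $y_n>0.68$ for $n\ge10$, leaving $2\le n\le 9$ to direct evaluation. Even so, it still does not deliver the stated bound $y_n\ge e^{-0.2}-e^{-2}$.
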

	
	Thus, at least 68\% of matrices in this random family satisfy
	$1/(5n)\leq\omega_A\leq2/n$. Moreover,
	$\mathbb {E}[1/\omega_A]=\infty$, so an average-case analysis based
	directly on this reciprocal is not informative.
	Taking $\nu=\Theta(n^{-1})$ in the oracle family of, e.g., 
	\cref{thm:diagonal-Hurwitz-complexity} gives a worst-case lower bound
	of $\Omega(n^{3/2})$ queries. 
    This is notably worse than classically determining if the matrix is Hurwitz for the same class. Which in the worst case is $\Theta(n)$ sign bit checks. 
	This apparent contradiction is an artifact of the access model in \cref{thm:diagonal-Hurwitz-complexity}; 
	oracle access to the block-encoding of $U_A$ and $U^\dagger_A$ is not classically analogous to having access to the sign bits. 
\subsection{Dependence of the normalization factor $\beta$}\label{sec:beta_dependence_on_omega}
  In \cref{thm:zero_test_idempotent} and \cref{thm:determine_stable_encoding_supplied}, the query complexities depend on the normalization or scaling parameter $\beta$. 
  The scaling parameter $\beta$ (which is often proportional to $\norm{\sign^+(A)}$)  may depend poorly on $\omega_A$, even for \(n=2\) and \(\norm{A}=1\). 
    For example, let \(0<\gamma\leq1\) and consider
    \[
        A=\begin{pmatrix}
        0&1\\
        \gamma&0
        \end{pmatrix}.
    \]
    Then \(\norm{A}=1\), and \(A\) has eigenvalues \(\pm\sqrt{\gamma}\), so
    \[
        \sign^+(A)=\frac12\left(I+\frac{A}{\sqrt{\gamma}}\right)
        =\frac12
        \begin{pmatrix}
            1&\gamma^{-1/2}\\
            \gamma^{1/2}&1
        \end{pmatrix}.
    \]
    One can show that $\omega_A=\gamma$. 
    One can also see that
    \[
        \norm{\sign^+(A)}
        =\frac{1+\gamma}{2\sqrt{\gamma}}
        =\frac{1+\omega_A}{2\sqrt{\omega_A}}
        =\Theta\!\left(\omega_A^{-1/2}\right)
        \qquad\text{as }\gamma\to0^+.
    \]

    On the other hand, $\beta$ (and $\norm{\sign^+(A)}$) can also be independent of $\omega_A$. 
    When $-A$ is Hurwitz, $\sign^+(A)=I$. 
    So a $(1,0,0)$-block-encoding of $\sign^+(A)$ is available when $A=\omega I$ for all $\omega>0$. 

    {  In the next result, whose proof is given in \cref{sec:Proof_thm:fixed_dimensional_normalization}, we provide worst-case bounds on the norm of the projector, in terms of a fixed $\omega_A$. We then explore the ramifications of such bounds on the previously derived complexity bounds. }
		\begin{theorem}\label{thm:fixed_dimensional_normalization}
		
		Fix \(n\geq2\). 
		For every matrix $A\in \C^{n\times n}$ with $\norm{A}\leq 1$ and no purely imaginary eigenvalues we have 

		\[
		\norm{\sign^+(A)}
		\leq 
		O\left(\omega_A^{-(n-1)/n}\right). 
		\]
		Moreover, there are such matrices with $\norm{\sign^+(A)}= \Theta\left(\omega_A^{-(n-1)/n}\right)$.

	\end{theorem}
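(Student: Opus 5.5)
We will bound $\norm{\sign^+(A)}$ through the Riesz contour representation of the spectral projector. Since $A$ has no purely imaginary eigenvalues and $\norm{A}\leq1$, all eigenvalues in the right half-plane lie in the half-disc $\{z:\Re z>0,\ \abs{z}\leq1\}$. Let $\Gamma$ be the boundary of the rectangle $[0,2]\times[-2,2]$, traversed positively. Its left edge lies on the imaginary axis, where the resolvent is controlled by $\omega_A$. We will write
\[
\sign^+(A)=\frac{1}{2\pi\mathrm{i}}\oint_\Gamma (zI-A)^{-1}\dd z .
\]
On the three edges with $\abs{z}\geq2$ we have $\norm{(zI-A)^{-1}}\leq 1/(\abs z-1)\leq1$, so they contribute $O(1)$. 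The remaining task is the imaginary-axis edge. There $\norm{(\mathrm{i}t-A)^{-1}}=1/\sigma_{\min}(A-\mathrm{i}tI)\leq\omega_A^{-1}$, and this pointwise bound alone gives only $O(\omega_A^{-1})$. We must therefore exploit the fact that the resolvent norm can be near $\omega_A^{-1}$ only on short $t$-intervals.

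The key step, and the main obstacle, is the integral bound
\[
\int_{-2}^{2}\norm{(\mathrm{i}t-A)^{-1}}\dd t=O(\omega_A^{-(n-1)/n}).
\]
To prove it we will use the determinant: $\sigma_{\min}(A-\mathrm{i}tI)\geq \abs{\det(A-\mathrm{i}tI)}/\norm{A-\mathrm{i}tI}^{n-1}\geq c_n\prod_j\abs{\lambda_j-\mathrm{i}t}$ for $\abs t\leq2$, since $\norm{A-\mathrm{i}tI}\leq3$. This gives
\[
\norm{(\mathrm{i}t-A)^{-1}}\leq\min\Bigl\{\omega_A^{-1},\ C_n\prod_{j=1}^n\abs{\lambda_j-\mathrm{i}t}^{-1}\Bigr\}.
\]
By AM--GM, or simply by taking the nearest eigenvalue, the product is at least $\prod_j\abs{\lambda_j-\mathrm{i}t}\geq d(t)^n$, where $d(t)=\min_j\abs{\lambda_j-\mathrm{i}t}$. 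Also $d(t)\geq\abs{\Im\lambda_j-t}$ for the minimizing $j$. Hence $\norm{(\mathrm{i}t-A)^{-1}}\leq\min\{\omega_A^{-1},C_n\,\mathrm{dist}(t,S)^{-n}\}$, where $S$ is the set of at most $n$ imaginary parts. Near each point of $S$ we integrate $\min\{\omega_A^{-1},C_ns^{-n}\}$ over $s\in[0,4]$. The crossover occurs at $s_*\sim\omega_A^{1/n}$, so the integral is about $\omega_A^{-1}s_*+s_*^{1-n}\sim\omega_A^{-(n-1)/n}$. Summing over at most $n$ points, with $n$ fixed, gives the claimed upper bound. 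In the write-up we will check the constant $c_n$ in the determinant bound, which follows from $\sigma_{\min}\cdot\sigma_{\max}^{n-1}\geq\prod\sigma_i=\abs{\det}$.

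For sharpness, we will generalize the $2\times2$ example of \cref{sec:beta_dependence_on_omega}. Take the weighted cyclic shift $A$ with $Ae_{k+1}=e_k$ for $k<n$ and $Ae_1=\gamma e_n$. Then $\norm A=1$ and $A^n=\gamma I$, so the eigenvalues are $\gamma^{1/n}\zeta^k$ with $\zeta$ an $n$-th root of unity. We will rotate by a phase, if necessary, so that none of these eigenvalues lie on $\mathrm{i}\R$; for odd $n$ this is automatic.

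We then need two facts about this family. First, $\omega_A=\Theta(\gamma)$. The lower bound uses $\sigma_{\min}(A-\mathrm{i}tI)\geq\abs{\det}/\norm{\cdot}^{n-1}$ together with $\abs{\det(A-\mathrm{i}tI)}=\abs{(\mathrm{i}t)^n-\gamma}$, which is bounded below by a constant times $\gamma$ after the rotation. The upper bound uses $t=0$: since $\abs{\det A}=\gamma$ and $n-1$ singular values equal $1$, we get $\sigma_{\min}(A)=\gamma$. Second, $\norm{\sign^+(A)}=\Theta(\gamma^{-(n-1)/n})$. We will compute $\sign^+(A)$ as a polynomial in $A$: it equals $\sum_k c_kA^k$, where the coefficients are read off from the discrete Fourier transform on the roots. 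The $(1,n)$ entry equals $c_{n-1}$ times a constant, and $c_{n-1}\sim\gamma^{-(n-1)/n}\cdot(\#\text{right-half-plane roots})/n$ times a nonzero root-of-unity sum. We will verify that this sum is nonzero for the chosen phase. Combining the two facts gives $\norm{\sign^+(A)}=\Theta(\omega_A^{-(n-1)/n})$.
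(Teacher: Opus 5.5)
Your proposal is correct and follows essentially the same route as the paper. For the upper bound you use a Riesz contour whose left edge lies on the imaginary axis, and the determinant (equivalently adjugate) estimate gives the pointwise bound $\min\{\omega_A^{-1},3^{n-1}\operatorname{dist}(t,S)^{-n}\}$, which you integrate. For sharpness you use the weighted cyclic shift, whose $(1,n)$ entry of $\sign^+(A)$ is a geometric sum over the indices of the right-half-plane roots. The differences are cosmetic: a rectangle instead of a semicircle, an interpolating polynomial instead of the similarity $D^{-1}AD=s^{-1}S$, and a determinant bound instead of the explicit resolvent formula for $\omega_A$ of the example. The root sum you defer is indeed nonzero: it has modulus $\sin(\pi m/n)/\sin(\pi/n)$ with $1\le m\le n-1$.
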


	Now, let us fix $n\geq2$. Let $A\in \C^{n\times n}$ have $\norm{A}\leq 1$, no purely imaginary eigenvalues and let $\nu \in (0,\omega_{A}]$. 
	Let $\delta \in (0,1/2)$. 
	Let $U_P$ be a $(\beta,b,0)$-block encoding of $\sign^+(A)$, with $\beta=\max\{1,\norm{\sign^{+}(A)}\}$. 
    It follows that $\beta=\norm{\sign^+(A)}$ whenever $\sign^+(A)\neq0$.
	
	A consequence of \cref{thm:fixed_dimensional_normalization} is that the query bound supplied by \cref{thm:determine_stable_encoding_supplied} is at worst
	\[
    	O\left(
    	\sqrt n\,\nu^{-(n-1)/n}
    	\log\left(\frac1\delta\right)
    	\right)
	\]
	queries to $U_P$ (the block encoding of $\sign^{+}(A)$) and $U_P^\dagger$. Note that,  herein, we do not account for the query complexity associated with computing the projector $\sign^{+}(A)$.
    That being said, \Cref{thm:fixed_dimensional_normalization} does suggest that $(O(\nu^{-(n-1)/n}),b,\epsilon_{+})$-block-encodings of $\sign^+(A)$ are possible as opposed to the $(O(\nu^{-1}),b,\epsilon_{+})$-block-encoding supplied by \cref{thm:sign_complexity}.

\section{Conclusions and future work}\label{sec:conclusion}

We have established upper and lower query bounds for deciding
Hurwitz stability from a block-encoding and a known lower bound
$0<\nu\leq\omega_A$. For constant success probability, the
worst-case lower bound is $\Omega(\sqrt n/\nu)$, already for
real diagonal matrices. Our algorithm matches this bound up to
logarithmic factors for normal matrices and uses
$\widetilde O(\sqrt n/\nu^2)$ queries for general matrices.
For a supplied encoding of $\sign^+(A)$ with normalization
$\beta$, zero testing has optimal query complexity
$\Theta(\beta\sqrt n)$ under the stated accuracy condition,
whenever both decision outcomes are admissible.

The main open question is whether the additional factor $1/\nu$
in the general-matrix upper bound is necessary. Our sharp
fixed-dimensional bounds on $\norm{\sign^+(A)}$ quantify the
normalization required by exact projector encodings. Whether
these bounds can be exploited in a more efficient construction,
or whether a direct stability test can improve the query
complexity, remains open. Stronger lower bounds for non-normal
inputs would also help determine the optimal dependence on
the stability margin.

\section*{Author contributions}

 {M.F. was responsible for the writing, the main premise, and
all mathematical details. 
M.S.\ verified and refined several mathematical details, helped organize the presentation of the results, 
and contributed to the writing and revision of the manuscript.
M.K. contributed to the conception and proposal of the project as co-Principal Investigator, provided supervision, and reviewed the manuscript. 
J.L. conceptualized the project, developed preliminary versions of
\cref{thm:determine_stable_encoding_supplied} and
\cref{thm:diagonal-Hurwitz-complexity}, and contributed to the
writing of the manuscript. All authors reviewed and approved the
final manuscript.}

ChatGPT (OpenAI)  {and Claude (Anthropic)}  {were} used to assist with drafting and revising
text, LaTeX formatting, literature searches, and the development
and checking of mathematical arguments. The authors take full
responsibility for the final manuscript, including its references
and mathematical results.

\section*{Acknowledgments}
 {This work was supported by the National Research
Council of Canada (NRC) through the Applied Quantum Computing
(AQC) Challenge program under project \mbox{AQC-318}. J.L. also acknowledges
the support of the Natural Sciences and Engineering Research
Council of Canada (NSERC) and the Canada Research Chairs
program.}

\section*{Data availability}
No new data were created or analysed in this study.
\bibliographystyle{quantum}
\bibliography{biblio_quantum}

@article{KerznerEtAl2024SmallestEigenvalue,
  author  = {Kerzner, Alex and Gheorghiu, Vlad and Mosca, Michele and Guilbaud, Thomas and Carminati, Federico and Fracas, Fabio and Dellantonio, Luca},
  title   = {A Square-Root Speedup for Finding the Smallest Eigenvalue},
  journal = {Quantum Science and Technology},
  volume  = {9},
  number  = {4},
  pages   = {045025},
  year    = {2024},
  doi     = {10.1088/2058-9565/ad6a36},
  url     = {https://doi.org/10.1088/2058-9565/ad6a36}
}

@book{AstromMurray2021,
  author    = {{\AA}str{\"o}m, Karl J. and Murray, Richard M.},
  title     = {Feedback Systems: An Introduction for Scientists and Engineers},
  edition   = {2nd},
  publisher = {Princeton University Press},
  address   = {Princeton, NJ},
  year      = {2021},
  isbn      = {9780691193984},
  url       = {https://press.princeton.edu/books/hardcover/9780691193984/feedback-systems}
}

@article{HinrichsenPritchard1986,
  author  = {Hinrichsen, Diederich and Pritchard, Anthony J.},
  title   = {Stability Radii of Linear Systems},
  journal = {Systems \& Control Letters},
  volume  = {7},
  number  = {1},
  pages   = {1--10},
  year    = {1986},
  doi     = {10.1016/0167-6911(86)90094-0},
  url     = {https://doi.org/10.1016/0167-6911(86)90094-0}
}

@article{TrefethenEtAl1993,
  author  = {Trefethen, Lloyd N. and Trefethen, Anne E. and Reddy, Satish C. and Driscoll, Tobin A.},
  title   = {Hydrodynamic Stability Without Eigenvalues},
  journal = {Science},
  volume  = {261},
  number  = {5121},
  pages   = {578--584},
  year    = {1993},
  doi     = {10.1126/science.261.5121.578},
  url     = {https://doi.org/10.1126/science.261.5121.578}
}

@article{JenningsEtAl2024,
  author  = {Jennings, David and Lostaglio, Matteo and Lowrie, Robert B. and Pallister, Sam and Sornborger, Andrew T.},
  title   = {The Cost of Solving Linear Differential Equations on a Quantum Computer: Fast-Forwarding to Explicit Resource Counts},
  journal = {Quantum},
  volume  = {8},
  pages   = {1553},
  year    = {2024},
  doi     = {10.22331/q-2024-12-10-1553},
  url     = {https://doi.org/10.22331/q-2024-12-10-1553}
}

@article{LowSu2024QEVT,
  author  = {Low, Guang Hao and Su, Yuan},
  title   = {Quantum Eigenvalue Processing},
  journal = {SIAM Journal on Computing},
  volume  = {55},
  number  = {1},
  pages   = {135--215},
  year    = {2026},
  doi     = {10.1137/24M1689363},
  url     = {https://doi.org/10.1137/24M1689363}
}

@book{higgs09,
author = {Higham, Nicholas J.},
title = {Functions of Matrices: Theory and Computation},
publisher = {Society for Industrial and Applied Mathematics},
year = {2008},
doi = {10.1137/1.9780898717778},
address = {},
edition   = {},
URL = {https://epubs.siam.org/doi/abs/10.1137/1.9780898717778}
}

@article{Kenny91,
author = {Kenney, Charles and Laub, Alan J.},
title = {Rational Iterative Methods for the Matrix Sign Function},
journal = {SIAM Journal on Matrix Analysis and Applications},
volume = {12},
number = {2},
pages = {273-291},
year = {1991},
doi = {10.1137/0612020},
URL = {https://doi.org/10.1137/0612020}
}

@article{TakahiraOhashiSogabeUsuda2022ContourBlockEncoding,
  author  = {Takahira, Souichi and Ohashi, Asuka and Sogabe, Tomohiro and Usuda, Tsuyoshi S.},
  title   = {Quantum Algorithms Based on the Block-Encoding Framework for Matrix Functions by Contour Integrals},
  journal = {Quantum Information and Computation},
  volume  = {22},
  number  = {11\&12},
  pages   = {965--979},
  year    = {2022},
  doi     = {10.26421/QIC22.11-12-4},
  url     = {https://doi.org/10.26421/QIC22.11-12-4}
}

@incollection{BrassardGilles2000QAAa,
  author    = {Brassard, Gilles and H{\o}yer, Peter and
               Mosca, Michele and Tapp, Alain},
  title     = {Quantum Amplitude Amplification and Estimation},
  booktitle = {Quantum Computation and Information},
  series    = {Contemporary Mathematics},
  volume    = {305},
  pages     = {53--74},
  publisher = {American Mathematical Society},
  year      = {2002},
  doi       = {10.1090/conm/305/05215}
}

@article{BennettEtAl1997,
  author  = {Bennett, Charles H. and Bernstein, Ethan and
             Brassard, Gilles and Vazirani, Umesh},
  title   = {Strengths and Weaknesses of Quantum Computing},
  journal = {SIAM Journal on Computing},
  volume  = {26},
  number  = {5},
  pages   = {1510--1523},
  year    = {1997},
  doi     = {10.1137/S0097539796300933}
}

@article{ROBERTS01101980,
author = {Roberts, J. D.},
title = {Linear model reduction and solution of the algebraic {Riccati} equation by use of the sign function},
journal = {International Journal of Control},
volume = {32},
number = {4},
pages = {677--687},
year = {1980},
publisher = {Taylor \& Francis},
doi = {10.1080/00207178008922881},
URL = {https://doi.org/10.1080/00207178008922881}
}

@misc{wang2026signembeddingquantumalgorithms,
      title={Sign Embedding Quantum Algorithms for Matrix Equations and Matrix Functions},
      author={Yanqiao Wang and Jin-Peng Liu},
      year={2026},
      eprint={2604.25333},
      archivePrefix={arXiv},
      primaryClass={quant-ph},
      url={https://arxiv.org/abs/2604.25333},
}

@inproceedings{QSVT,
  eprint = {1806.01838},
  archivePrefix = {arXiv},
  note = {Theorem numbering refers to the full arXiv version},
author = {Gily{\'e}n, Andr{\'a}s and Su, Yuan and Low, Guang Hao and Wiebe, Nathan},
title = {Quantum singular value transformation and beyond: exponential improvements for quantum matrix arithmetics},
year = {2019},
isbn = {9781450367059},
publisher = {Association for Computing Machinery},
address = {New York, NY, USA},
url = {https://doi.org/10.1145/3313276.3316366},
doi = {10.1145/3313276.3316366},
booktitle = {Proceedings of the 51st Annual ACM SIGACT Symposium on Theory of Computing},
pages = {193--204},
numpages = {12},
location = {Phoenix, AZ, USA}
}

@misc{Vandaele26,
      title={Asymptotically Optimal Quantum Circuits for Comparators and Incrementers},
      author={Vivien Vandaele},
      year={2026},
      eprint={2603.12917},
      archivePrefix={arXiv},
      primaryClass={quant-ph},
      url={https://arxiv.org/abs/2603.12917},
}

@misc{frommer2021matrixfunctionslinearsystems,
      title={Matrix functions via linear systems built from continued fractions},
      author={Andreas Frommer and Karsten Kahl and Manuel Tsolakis},
      year={2021},
      eprint={2109.03527},
      archivePrefix={arXiv},
      primaryClass={math.NA},
      url={https://arxiv.org/abs/2109.03527},
}

@article{perez2022linearizations,
title = {Linearizations of rational matrices from general representations},
journal = {Linear Algebra and its Applications},
volume = {647},
pages = {89-126},
year = {2022},
issn = {0024-3795},
doi = {10.1016/j.laa.2022.04.007},
url = {https://www.sciencedirect.com/science/article/pii/S0024379522001525},
author = {Javier Pérez and María C. Quintana}
}

@misc{rodenasruiz2026quantumsolversnonlinearmatrix,
      title={Quantum Solvers for Nonlinear Matrix Equations in Quantum Chemistry},
      author={Pablo Rodenas-Ruiz and Andrew Zhao and Joonho Lee},
      year={2026},
      eprint={2605.16189},
      archivePrefix={arXiv},
      primaryClass={quant-ph},
      url={https://arxiv.org/abs/2605.16189},
}

@misc{nie2024quantumcircuitmultiqubittoffoli,
      title={Quantum circuit for multi-qubit {Toffoli} gate with optimal resource},
      author={Junhong Nie and Wei Zi and Xiaoming Sun},
      year={2024},
      eprint={2402.05053},
      archivePrefix={arXiv},
      primaryClass={quant-ph},
      url={https://arxiv.org/abs/2402.05053},
}

@misc{deWolf2026,
  author        = {de Wolf, Ronald},
  title         = {Quantum Computing: Lecture Notes},
  year          = {2026},
  eprint        = {1907.09415},
  archivePrefix = {arXiv},
  primaryClass  = {quant-ph},
  note          = {Version 6, August 2026},
  url           = {https://arxiv.org/abs/1907.09415v6}
}

@misc{Belovs2015,
  author        = {Belovs, Aleksandrs},
  title         = {Variations on Quantum Adversary},
  year          = {2015},
  eprint        = {1504.06943},
  archivePrefix = {arXiv},
  primaryClass  = {quant-ph},
  url           = {https://arxiv.org/abs/1504.06943}
}

@article{Hoeffding1963,
  author  = {Hoeffding, Wassily},
  title   = {Probability Inequalities for Sums of Bounded Random Variables},
  journal = {Journal of the American Statistical Association},
  volume  = {58},
  number  = {301},
  pages   = {13--30},
  year    = {1963},
  doi     = {10.1080/01621459.1963.10500830}
}

@misc{DalzellSu2026ObliqueEigenprojection,
  author = {Dalzell, Alexander M. and Su, Yuan},
  title = {Quantum oblique eigenprojection},
  year = {2026},
  eprint = {2609.40254},
  archivePrefix = {arXiv},
  primaryClass = {quant-ph},
  url = {https://arxiv.org/abs/2609.40254}
}

@inproceedings{KoladaStefanski2025Stability,
  author = {Kolada, Micha{\l} and Stefa{\'n}ski, Tomasz P.},
  title = {Test of System Stability on Quantum Computer},
  booktitle = {2025 29th International Conference on Methods and Models in Automation and Robotics (MMAR)},
  year = {2025},
  pages = {278--283},
  doi = {10.1109/MMAR65820.2025.11150962},
  url = {https://doi.org/10.1109/MMAR65820.2025.11150962}
}

@misc{LinShang2026NonNormal,
  author = {Lin, Honghong and Shang, Yun},
  title = {Quantum Eigensolver for Non-Normal Matrices via Ground State Energy Estimation},
  year = {2026},
  eprint = {2502.18119},
  archivePrefix = {arXiv},
  primaryClass = {quant-ph},
  note = {Version 3},
  doi = {10.48550/arXiv.2502.18119},
  url = {https://arxiv.org/abs/2502.18119}
}

@misc{ZhangWuYuan2025Eigenvalue,
  author = {Zhang, Yukun and Wu, Yusen and Yuan, Xiao},
  title = {Heisenberg-Limited Quantum Eigenvalue Estimation for Non-normal Matrices},
  year = {2025},
  eprint = {2510.19651},
  archivePrefix = {arXiv},
  primaryClass = {quant-ph},
  doi = {10.48550/arXiv.2510.19651},
  url = {https://arxiv.org/abs/2510.19651}
}

@article{MandeDeWolf2026PhaseEstimation,
  author = {Mande, Nikhil S. and de Wolf, Ronald},
  title = {Tight Bounds for Quantum Phase Estimation and Related Problems},
  journal = {Quantum},
  volume = {10},
  pages = {2140},
  year = {2026},
  doi = {10.22331/q-2026-06-15-2140},
  url = {https://doi.org/10.22331/q-2026-06-15-2140}
}

@article{ZhangEtAl2025NonHermitian,
  author = {Zhang, Xiao-Ming and Zhang, Yukun and He, Wenhao and Yuan, Xiao},
  title = {Exponential Quantum Advantages for Practical Non-{Hermitian} Eigenproblems},
  journal = {Physical Review Letters},
  volume = {135},
  number = {14},
  pages = {140601},
  year = {2025},
  doi = {10.1103/3n8f-k8pl},
  url = {https://doi.org/10.1103/3n8f-k8pl}
}

@misc{JefferyWitteveen2026GroundState,
  author = {Jeffery, Stacey and Witteveen, Freek},
  title = {Optimal Quantum Algorithm for Ground-State Energy Estimation with a Guiding State},
  year = {2026},
  eprint = {2608.24494},
  archivePrefix = {arXiv},
  primaryClass = {quant-ph},
  doi = {10.48550/arXiv.2608.24494},
  url = {https://arxiv.org/abs/2608.24494}
}

@misc{FutamuraYeSakurai2021EigenvalueDensity,
  author = {Futamura, Yasunori and Ye, Xiucai and Sakurai, Tetsuya},
  title = {Contour Integral-based Quantum Algorithm for Estimating Matrix Eigenvalue Density},
  year = {2021},
  eprint = {2112.05395},
  archivePrefix = {arXiv},
  primaryClass = {quant-ph},
  doi = {10.48550/arXiv.2112.05395},
  url = {https://arxiv.org/abs/2112.05395}
}
\appendix
\crefalias{section}{appendix}

\section{Proof of \cref{thm:omega_A}}
\label{sec:Proof_thm:omega_A}
		The exact block-encoding assertion follows from the definition.
		
		For item 1, for every \(t\in\R\),
		\begin{align*}
			\alpha\,\sigma_{\min}(\widetilde A_U-itI_n)
			&=\sigma_{\min}(\alpha\widetilde A_U-i\alpha tI_n)\\
			&\geq \sigma_{\min}(A-i\alpha tI_n)
			-\norm{A-\alpha\widetilde A_U}\\
			&\geq \omega_A-\epsilon.
		\end{align*}
		Taking the infimum over \(t\) proves the claim.
		
		For item 2, first recall that \(\omega_B=0\) if and only if
		\(B\) has a purely imaginary eigenvalue. Indeed, the minimum
		defining \(\omega_B\) is attained, since
		\(t\mapsto\sigma_{\min}(B-itI_n)\) is continuous and
		\[
		\sigma_{\min}(B-itI_n)\geq |t|-\norm{B}\longrightarrow\infty
		\qquad\text{as }|t|\longrightarrow\infty.
		\]
		The minimum is zero precisely when \(B-itI_n\) is singular
		for some \(t\in\R\).
		
		Suppose \(\epsilon<\omega_A\), and consider
		\[
		B_s=A+s(\alpha\widetilde A_U-A),\qquad s\in[0,1].
		\]
		The same singular-value estimate gives
		\[
		\omega_{B_s}\geq\omega_A-s\epsilon>0
		\qquad(s\in[0,1]).
		\]
		Thus no \(B_s\) has a purely imaginary eigenvalue. 
        Since \(B_s\) depends continuously on \(s\), its eigenvalues can be followed continuously along this interval, counting algebraic multiplicities. 
        If an eigenvalue moved from one open half-plane to the other, its real part would vanish at some intermediate parameter value. 
        This would give a purely imaginary eigenvalue of \(B_s\), a contradiction.
		Therefore the numbers of eigenvalues in the two open half-planes remain constant along the path.
		
		As \(B_0=A\), \(B_1=\alpha\widetilde A_U\), and \(\alpha>0\), these counts agree for \(A\) and \(\widetilde A_U\).
		Their sign matrices are diagonalizable with the same multiplicities of \(+1\) and \(-1\), so they are similar.
		The Hurwitz equivalence follows.
		
		For item 3, set
		\[
		B=-\omega I_n,\qquad C=\omega I_n,
		\qquad
		V=
		\begin{pmatrix}0&1\\1&0\end{pmatrix}
		\otimes I_{n2^{a-1}}.
		\]
		The matrix \(V\) is unitary and its encoded block is zero.
		Hence
		\[
		\norm{B-\alpha\,0}
		=\norm{C-\alpha\,0}
		=\omega\leq\epsilon.
		\]
		Thus \(V\) is an \((\alpha,a,\epsilon)\)-block-encoding of
		both matrices. Moreover,
		\(\omega_B=\omega_C=\omega\), while \(B\) is Hurwitz and
		every eigenvalue of \(C\) has positive real part.
		The same block-encoding and parameters are therefore
		compatible with both Hurwitz and non-Hurwitz matrices
		having the same positive gap.
	
\section{Quantum algorithm for computing matrix sign}\label{sec:matrix_sign}
	The matrix sign function has many formulae, such as:
	\begin{align}
		\opn{sign}(A) = A(A^2)^{-\frac 1 2} 
		&= \frac{2}{\pi} \int_{0}^{\infty} \frac{A}{A^2+t^2I}\dd{t} \label{eq:sign_intgral_std}\\ 
		&=\frac{2}{\pi} \int_{0}^{\frac \pi 2} \frac{A} {\cos^2(\theta)A^2+\sin^2(\theta)I} \dd{\theta}. \label{eq:sign_intgral_arctan}
	\end{align}
	See \cite[Chapter 5]{higgs09} for more information about $\sign[A]$ and how its computation is done classically.

	We approximate $\sign[A]$ via its principal Pad{\'e} iteration function,
	\begin{align}
		g_r(x)&= \frac{(1+x)^r-(1-x)^r}{(1+x)^r+(1-x)^r} \label{eq:pade_alg} \\
		&= \begin{cases}
			\frac 2 {2k} \sum_{j=0}^{k-1} \frac x {\sin^2\round*{\frac{2j+1}{2k}\frac{\pi}{2}}+\cos^2\round*{\frac{2j+1}{2k}\frac{\pi}{2}} x^2 } & r=2k \\
				\frac 2 {2k+1} \sum_{j=0}^{k-1} \frac x {\sin^2\round*{\frac{2j+1}{2k+1}\frac{\pi}{2}}+\cos^2\round*{\frac{2j+1}{2k+1}\frac{\pi}{2}} x^2 } + \frac x {2k+1} & r=2k+1
		\end{cases} \label{eq:pade_fraction}
	\end{align}	
	where $r\in\N$, see \cite[Theorem~5.9]{higgs09} and \cite{Kenny91}. 
	One can see that \cref{eq:pade_fraction} looks like a discretization of \cref{eq:sign_intgral_arctan}. 
	Strictly speaking we are actually interested in $\sign^+(A)=0.5(I+\sign[A])$. 
	Define 
	\begin{equation}\label{eq:pade_alg+} 
		g_r^+(x)=\frac 1 2(1+g_r(x))=\frac{(1+x)^r}{(1+x)^r+(1-x)^r}
	\end{equation}
	as an approximation of $\sign^+(A)$. 
	The roots of the denominator,
	\(
	p_r(z)=(1+z)^r+(1-z)^r
	\)
	lie on the imaginary axis. Indeed, if
	\(p_r(z)=0\), then
	\(
	|1+z|=|1-z|,
	\)
	which implies \(\Re(z)=0\).  It follows that, if \(A\) has no purely
	imaginary eigenvalues, then $g_r(A),g^+_r(A)$ are well defined. 
	
	For normal matrices, scalar approximation bounds give matrix norm error bounds directly.
	We next establish bounds that also apply to non-normal matrices.
	
	\begin{theorem}\label{thm:pade_appox}
		Suppose that $\norm{A}\leq 1$ and no eigenvalue of $A$ is purely imaginary. Let $\omega= \min\set{\sigma_{\min}\round{A-itI}: t\in \R}$. 
		Let  $r\in \N$. Let $\eta \in (0,\omega)$ and $R>1$. 
		
		The following hold:
		\begin{enumerate}
			\item For $z\in \C$ with $|z|\leq R$ and $|\Re\round{z}|\geq  \eta$, let $s_{\eta,R}= \round*{\frac{1-2\eta+R^2}{1+2\eta+R^2}}^{0.5}$.
            Note $s_{\eta,R}<1$. 
			Then,
			\[
				\abs{\sign(z)- g_r(z)} \leq \frac{2s_{\eta,R}^r}{1-s_{\eta,R}^r}
			\]
			\item 
			We have,
			\[
				\norm{\sign(A)- g_r(A)} \leq \round*{\frac{2\sqrt{R^2-\eta^2}}{\pi(\omega -\eta)} +\frac{R}{R-1}} \frac{2s_{\eta,R}^r}{1-s_{\eta,R}^r}
			\]
			\item Let $\epsilon \in (0,1]$. If we pick $\eta=\frac{\omega}{2}$, $R=1+\frac{\omega}{2}$ then, we can pick
			\[
				r= O\round*{\frac{1}{\omega}\log\round*{\frac{1}{\omega\epsilon}} }
			\]
			so that $\norm{\sign(A)- g_r(A)}  \leq \epsilon$.
		\end{enumerate} 
	\end{theorem}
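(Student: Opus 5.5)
The plan is to prove item~1 by a Möbius change of variables, item~2 by a Cauchy--Riesz contour integral whose resolvent is controlled through $\omega$, and item~3 by substituting the stated $\eta,R$ into item~2.

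\emph{Item 1.} Set $q(z)=(1-z)/(1+z)$. Dividing the numerator and denominator of \cref{eq:pade_alg} by $(1+z)^r$ gives
\[
g_r(z)=\frac{1-q(z)^r}{1+q(z)^r}.
\]
For $\Re z\geq\eta>0$ we have $\sign(z)=1$, so
\[
\abs{\sign(z)-g_r(z)}=\frac{2\abs{q}^r}{\abs{1+q^r}}\leq\frac{2\abs{q}^r}{1-\abs{q}^r}.
\]
Next, $\abs{q}^2=(1-2\Re z+\abs z^2)/(1+2\Re z+\abs z^2)$. This is decreasing in $\Re z$, and increasing in $\abs z^2$ because it has the form $(a+x)/(b+x)$ with $a<b$. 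So on the region $\{\abs{z}\leq R,\ \Re z\geq\eta\}$ it is at most $s_{\eta,R}^2<1$. The map $x\mapsto 2x/(1-x)$ is increasing on $[0,1)$, which gives the bound. The left half-plane follows from oddness: $g_r(-z)=-g_r(z)$ and $\sign(-z)=-\sign(z)$.

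\emph{Item 2.} I will use the holomorphic functional calculus. Let $\Gamma_+$ be the positively oriented boundary of $D_+=\{\abs z\leq R,\ \Re z\geq\eta\}$, and let $\Gamma_-$ be its mirror image bounding $D_-=-D_+$.
\begin{itemize}
\item Every eigenvalue $\lambda$ satisfies $\abs\lambda\leq\norm A\leq1<R$ and $\abs{\Re\lambda}\geq\omega>\eta$, using $\omega\leq\min\abs{\Re\lambda}$. Hence the spectrum lies in the interior of $D_+\cup D_-$.
\item The function $\sign(z)-g_r(z)$ is holomorphic on a neighbourhood of $D_+\cup D_-$, since the poles of $g_r$ lie on $\mathrm i\R$.
\end{itemize}
Therefore
\[
\sign(A)-g_r(A)=\frac1{2\pi\mathrm i}\oint_{\Gamma_+\cup\Gamma_-}\bigl(\sign(z)-g_r(z)\bigr)(zI-A)^{-1}\dd z,
\]
and the norm is at most $\frac{1}{2\pi}\sup_{\Gamma}\abs{\sign-g_r}\int_\Gamma\norm{(zI-A)^{-1}}\abs{\dd z}$. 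The supremum of $\abs{\sign-g_r}$ on $\Gamma$ is bounded by item~1, so it remains to bound the resolvent on the two kinds of pieces of $\Gamma$.
\begin{itemize}
\item \emph{Vertical segments.} On $z=\pm\eta+\mathrm it$, Weyl's inequality gives $\sigma_{\min}(zI-A)\geq\sigma_{\min}(A-\mathrm itI)-\eta\geq\omega-\eta$. Each of the two segments has length $2\sqrt{R^2-\eta^2}$, so together they contribute $\frac{1}{2\pi}\cdot\frac{4\sqrt{R^2-\eta^2}}{\omega-\eta}$.
\item \emph{Arcs.} On $\abs z=R$, a Neumann series gives $\norm{(zI-A)^{-1}}\leq1/(R-1)$. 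The total arc length is at most $2\pi R$, so the arcs contribute at most $R/(R-1)$.
\end{itemize}
Adding the two contributions gives exactly the prefactor in the statement. The step that needs the most care is this one: the resolvent on the vertical lines must be controlled by $\omega$, which is the non-normal quantity, rather than by the eigenvalue gap. The Weyl perturbation bound together with the choice $\eta<\omega$ is what makes this work.

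\emph{Item 3.} First note $\omega\leq\sigma_{\min}(A)\leq\norm A\leq1$. With $\eta=\omega/2$ and $R=1+\omega/2$, the prefactor from item~2 is at most $C/\omega$ for an absolute constant $C$. For the contraction factor,
\[
s^2=1-\frac{4\eta}{1+2\eta+R^2}\leq1-c\,\omega
\]
for an absolute constant $c>0$, using $R\leq 3/2$. Hence $s^r\leq e^{-c\omega r/2}$. Choose $r$ so that $s^r\leq1/2$ and $4Cs^r/\omega\leq\epsilon$. This holds once
\[
r\geq\frac{2}{c\,\omega}\log\frac{8C}{\omega\epsilon},
\]
which is $O\bigl(\omega^{-1}\log(1/(\omega\epsilon))\bigr)$, as claimed.
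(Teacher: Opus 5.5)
Your proposal is correct and follows essentially the same route as the paper. Item 1 uses the Möbius substitution $q(z)=(1-z)/(1+z)$ with the same monotonicity bound on $\abs{q}^2$. Item 2 uses the contour integral over the boundaries of $D_\pm$, bounding the resolvent by $1/(\omega-\eta)$ on the vertical segments and by the Neumann series on the arcs. Item 3 substitutes $\eta=\omega/2$, $R=1+\omega/2$ into item 2. The only difference is in item 3: you bound $s^r$ via $1-x\le e^{-x}$, whereas the paper writes $s^r=\exp\bigl(-r\operatorname{arctanh}(2\eta/(1+R^2))\bigr)$ and uses $\operatorname{arctanh}(u)\ge u$, which is an inessential difference.
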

	\begin{proof}
		Item 1: Since $\sign(z)$ and $g_r(z)$ are odd we need only show the result for $\Re\round{z}\geq  \eta$. In this case $\sign(z)=1$. Let $q(z)=\frac{1-z}{1+z}$, we can see that
		\[
			\abs{q(z)}^2=\frac{1-2\Re(z) +|z|^2}{1+2\Re(z) +|z|^2} \leq \frac{1-2\eta+R^2}{1+2\eta+R^2} =s_{\eta,R}^2,
		\]
		since the second expression is decreasing in $\Re(z)$ and increasing in $\abs{z}$.

		One can show that
		\[
			\abs{1-g_r(z)} = \abs*{\frac{2q(z)^r}{1+q(z)^r}} \leq \frac{2\abs{q(z)}^r}{1-\abs{q(z)}^r} \leq \frac{2s_{\eta,R}^r}{1-s_{\eta,R}^r}.
		\]

		Item 2: Let $\Gamma_+= \Gamma_v \cup \Gamma_R$ and $\Gamma_{-}={-\Gamma_+}$ be positively oriented where $\Gamma_v=\set{z: |z|\leq R, \Re(z)=\eta }$, \[\Gamma_R = \set{z: |z|= R, \Re(z)\geq \eta }\] and for a set $B\subseteq \C$ $-B=\set{-z: z\in B}$. 
		Since $\norm{A}\leq 1$, $\eta \in (0,\omega)$ and $R>1$, we know that all eigenvalues of $A$ lie in the interior of $\Gamma=\Gamma_{+}\cup \Gamma_{-}$. 

		It follows that
		\[
			\sign[A]- g_r(A)= -\frac{1}{2\pi i} \round*{\int_{\Gamma_+}(g_r(z)-1)(zI-A)^{-1} \dd{z} + \int_{\Gamma_-}(g_r(z)+1)(zI-A)^{-1} \dd{z}}
		\]
		and taking norms and using Item 1 yields,
		\[
			\norm{\sign[A]- g_r(A)} \leq \frac{2s_{\eta,R}^r}{1-s_{\eta,R}^r} \frac{1}{2\pi} \round*{\int_{\Gamma_+}\norm{(zI-A)^{-1}} \abs{\dd{z}} + \int_{\Gamma_-}\norm{(zI-A)^{-1}} \abs{\dd{z}}}.
		\]
		So we need only bound these integrals for the result. 
		Due to symmetry a bound for the $\Gamma_{+}$ integral is also a bound for the $\Gamma_{-}$ one. 
		So we only need to bound the former: 
		\begin{equation}\label{eq:Gamma_+}
			\int_{\Gamma_+}\norm{(zI-A)^{-1}} \abs{\dd{z}} = \int_{\Gamma_v}\norm{(zI-A)^{-1}} \abs{\dd{z}}  +\int_{\Gamma_R}\norm{(zI-A)^{-1}} \abs{\dd{z}}. 
		\end{equation}
		
		To bound the integrands we first note that
		\[
			\sigma_{\min}(B+C) \geq \sigma_{\min}(B)- \norm{C}
		\]
		and 
		\[
			\norm{B^{-1}}= \frac{1}{\sigma_{\min}(B)}.
		\]
		
		On $\Gamma_v$, $z= \eta +ti$, recall that $\omega=\min\set{\sigma_{\min}\round{A-itI}: t\in \R}$ and we can see
		\[
			\sigma_{\min}(zI-A)= \sigma_{\min}(A-itI -\eta I) \geq \omega -\eta.
		\]
		Therefore, 
		\[
			\int_{\Gamma_v}\norm{(zI-A)^{-1}} \abs{\dd{z}}  \leq \int_{\Gamma_v}\frac{1}{\omega -\eta} \abs{\dd{z}} =\frac{2\sqrt{R^2-\eta^2}} {{\omega -\eta} }
		\]
		
		On $\Gamma_R$, $\abs{z}=R>1\geq \norm{A}$, so we can use the Neumann-series bound:
		\[
			\norm{(zI-A)^{-1}} \leq \frac{1}{R-\norm{A}}\leq \frac{1}{R-1}
		\]
		The arc-length of $\Gamma_R$ is $2R\arccos(\frac \eta R )\leq \pi R$, hence
		\[
			\int_{\Gamma_R}\norm{(zI-A)^{-1}} \abs{\dd{z}}  \leq \int_{\Gamma_R}\frac{1}{R-1}\abs{\dd{z}} \leq \frac{\pi R}{R-1}. 
		\]
		
		Therefore, 
		\[
			\int_{\Gamma_+}\norm{(zI-A)^{-1}} \abs{\dd{z}} + \int_{\Gamma_-}\norm{(zI-A)^{-1}} \abs{\dd{z}} \leq 2\round*{ \frac{2\sqrt{R^2-\eta^2}} {{\omega -\eta} } +\frac{\pi R}{R-1}}.
		\]
		And so,
		\[
			\norm{\sign[A]- g_r(A)} \leq \round*{\frac{2\sqrt{R^2-\eta^2}}{\pi(\omega -\eta)} +\frac{R}{R-1}} \frac{2s_{\eta,R}^r}{1-s_{\eta,R}^r}
		\]
		as required. 
		
		Item 3: Note that since $\norm{A}\leq 1 $ we must also have $\omega \leq 1$. 
		
		Let 
		\begin{align*}
			C&=\frac{2\sqrt{R^2-\eta^2}}{\pi(\omega -\eta)} +\frac{R}{R-1} 
			 \leq \frac{2R}{\pi(\omega -\eta)} +\frac{R}{R-1}.
		\end{align*}
		Then, picking $\eta=\frac{\omega}{2}$, $R=1+\frac{\omega}{2}$ and noting $\omega \leq 1$ we get
		\begin{align*}
			C&\leq 2\frac{2+\omega}{\pi\omega} +1+\frac 2 \omega= \frac{4+2\pi + \pi \omega +2\omega}{\pi\omega}\leq \frac{6+3 \pi}{\pi}\frac{1}{\omega}
		\end{align*}
		Let $M_1=\frac{6+3 \pi}{\pi}$ and we can see that $C\leq \frac{M_1}{\omega}$.
		
		One can compute that 
		\[
			s_{\eta,R}^r=\exp\round*{-r\opn{arctanh}\round*{\frac{2\eta}{1+R^2}}}
		\]

		

		When $\eta=\frac{\omega}{2}$, $R=1+\frac{\omega}{2}$
		\[
		\opn{arctanh}\round*{\frac{2\eta}{1+R^2}} =\opn{arctanh}\round*{2\frac{\round{\frac \omega 2}} {1+(1+\round{\frac \omega 2})^2}}=b(\omega).
		\]
		Note $\operatorname{arctanh}(u)\geq u$ for $u\in[0,1)$. Then,
		\[
			b(\omega)=\opn{arctanh}\round*{\frac{\omega}{1+(1+{\frac \omega 2})^2}}
			\geq \frac{\omega}{1+(1+{\frac \omega 2})^2} 
			\geq \frac{\omega}{1+(1+{\frac 1 2})^2}=\frac 4 {13} \omega. 
		\]
		Let $M_2=\frac{4}{13}$.
		
		It follows that the error bound has
		\[
		\round*{\frac{2\sqrt{R^2-\eta^2}}{\pi(\omega -\eta)} +\frac{R}{R-1}} \frac{2s_{\eta,R}^r}{1-s_{\eta,R}^r} 
		\leq  \frac{M_1}{\omega} \frac{2}{\exp\round*{rM_2\omega}-1},
		\]
		setting the RHS $\leq \epsilon$ and rearranging yields 
		\begin{equation}\label{eq:r_complexity}
			r \geq \frac{1}{M_2\omega}\log\round*{1+\frac{2M_1}{\omega\epsilon}}.
		\end{equation}
		For $\epsilon \leq 1$, the lower bound is $O\round*{\frac{1}{\omega}\log\round*{\frac{1}{\omega\epsilon}} }$. 
		
	\end{proof}
	
	Note that
	\[
		\norm{\sign[A]- g_r(A)} = 2\norm{\sign^+(A)- g_r^+(A)}. 
	\]
	So the lower bound on $r$, for $\epsilon$ accuracy, in item 3 of \cref{thm:pade_appox}  applies to approximating $\sign^+(A)$ with $g_r^+(A)$. 
	

	For the remainder of the section we develop a (to our knowledge) novel method of approximating $g_r^+(A)$ and thus $\sign^+(A)$. 
	Other works, such as \cite{wang2026signembeddingquantumalgorithms}, work by approximating an integral form such as \cref{eq:sign_intgral_std}. Alternatively, in \cite{rodenasruiz2026quantumsolversnonlinearmatrix} they approximate $\sign^+(A)$ by approximating Cauchy's integral formula. 
	The typical approach to approximating these kinds of integrals for this (see for example \cite{TakahiraOhashiSogabeUsuda2022ContourBlockEncoding}) is:
	\begin{enumerate*}
		\item Invert one or more matrices, to form the matrix inverses in the integrand. 
		\item Do LCU, to get a sum which approximates the integral. 
	\end{enumerate*}
	If the second step is done with little care, the scaling parameter of the resulting block-encoding might depend poorly on the number of quadrature points, and become unacceptably large. 
	
	Our idea was to attempt to avoid the second step entirely\textemdash by careful construction of the matrix which is inverted. 
	This actually has some basis in the classical literature, see \cite{frommer2021matrixfunctionslinearsystems,perez2022linearizations}. 
	Given a non-linear almost everywhere holomorphic function $f:\C \to \C$, one can often find a continued fraction representation of $f$. 
	Truncating the continued fraction gives a rational approximation, say $p_N(z)$, to $f$. 
	In turn $p_N(A)$ can be read off the inverse of
	\[
		L=\sum_{j=0}^{\ell} T^{(j)} \otimes A^j
	\]
	where the $T^{(j)}$ are tridiagonal and found from the continued fraction. Again see \cite{frommer2021matrixfunctionslinearsystems} for the details. 
	Assuming the condition number of $L$ and $\ell$ are small, such an approach seems amenable to block-encoding. 
	
	Although $g_r$ is based off of continued fractions for $\sign$, our particular construction for $L$ isn't. 
	It's based off geometric series identities. 
	
	
	We remark that we were unable to completely eliminate the need for doing LCU. 
    However, when it is done outside of the quantum singular value transformation (QSVT), it is done $O(1)$ times.
	 
	In the following proposition, we define a useful and well conditioned $L$.

	\begin{proposition}\label{thm:L_matrix}
		Let $\ell\in \N$ and $r=2^\ell$, define $S_r\in \R^{r \times r}$ by 
		\begin{equation}\label{eq:shift_unitary}
			S_r= \sum_{j=0}^{r-2} \ket{j+1}\bra{j} -\ket{0}\bra{r-1}.
		\end{equation}
		Let $A\in \C^{n\times n}$ have no purely imaginary eigenvalues and define $A_{-}=0.5(I-A)$ and $A_+=0.5(I+A)$. 
		
		Define
		\[
			L_r=I_r\otimes A_+ + S_r\otimes A_-.
		\]
		
		The following hold:
		\begin{enumerate}
			\item 
			\[
				L_r^{-1}= \sum_{k=0}^{r-1} (-1)^kS_r^k \otimes A_{+}^{r-1-k}A_{-}^k(A_+^r + A_-^r)^{-1}.
			\]
			\item Consequently,
			\[
				g^+_r(A)= \round*{\bra{0^\ell}\otimes I_n} L_r^{-1} \round*{\ket{0^\ell}\otimes A_+}.
			\]
			\item $\norm{L_r}\leq \sqrt{1+\norm{A}^2}$ and $\sigma_{\min}(L_r)\geq\frac{\omega_A}{\sqrt{1+(\omega_{A}+\norm{A})^2}}$.
		\end{enumerate}
		
	\end{proposition}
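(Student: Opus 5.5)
The plan is to exploit two structural facts. First, \(A_+\) and \(A_-\) are both polynomials in \(A\), so they commute with each other and with \((A_+^r+A_-^r)^{-1}\). Second, \(S_r\) is a unitary negacyclic shift with \(S_r^r=-I_r\). With these, the three items reduce to a telescoping sum, reading off one block, and a blockwise singular-value estimate.

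\emph{Item 1.} First check that \(D:=A_+^r+A_-^r\) is invertible. Its eigenvalues are \(2^{-r}\bigl((1+\lambda)^r+(1-\lambda)^r\bigr)=2^{-r}p_r(\lambda)\) for \(\lambda\in\opn{spec}(A)\). The roots of \(p_r\) lie on the imaginary axis, as noted after \cref{eq:pade_alg+}, and \(A\) has no purely imaginary eigenvalues, so these are nonzero. Next check \(S_r^r=-I_r\): \(S_r\) sends \(\ket{j}\mapsto\ket{j+1}\) for \(j<r-1\) and \(\ket{r-1}\mapsto-\ket{0}\), so \(r\) applications return each basis vector with a single minus sign. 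Then multiply \(L_r\) by the proposed inverse \(X=\sum_{k=0}^{r-1}(-1)^kS_r^k\otimes A_+^{r-1-k}A_-^kD^{-1}\). Using commutativity, \(L_rX\) equals
\[
\sum_{k=0}^{r-1}(-1)^kS_r^k\otimes A_+^{r-k}A_-^kD^{-1}+\sum_{k=0}^{r-1}(-1)^kS_r^{k+1}\otimes A_+^{r-1-k}A_-^{k+1}D^{-1}.
\]
After reindexing the second sum, the middle terms cancel. What remains is \(I\otimes A_+^rD^{-1}+(-1)^{r-1}S_r^r\otimes A_-^rD^{-1}\). Since \(S_r^r=-I_r\), this is \(I\otimes(A_+^r+A_-^r)D^{-1}=I\). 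Because the matrices are square, \(X=L_r^{-1}\).

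\emph{Item 2.} For \(0\le k\le r-1\), \(S_r^k\ket{0}=\ket{k}\), so \(\bra{0}S_r^k\ket{0}=\delta_{k0}\). Hence only the \(k=0\) term survives the compression, giving
\[
A_+^{r-1}D^{-1}A_+=A_+^r(A_+^r+A_-^r)^{-1}.
\]
The factors \(2^{-r}\) cancel, and this is \(g_r^+(A)\) by \cref{eq:pade_alg+}.

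\emph{Item 3.} This is the main step. Since \(S_r\) is unitary, diagonalize it as \(S_r=W\Theta W^\dagger\) with \(\Theta\) diagonal. Its eigenvalues are the \(r\)-th roots of \(-1\), namely \(e^{\mathrm{i}\theta}\) with \(\theta\) an odd multiple of \(\pi/r\). Then \(L_r\) is unitarily equivalent to \(\bigoplus_\theta\bigl(A_++e^{\mathrm{i}\theta}A_-\bigr)\). Using
\[
\tfrac12(1+e^{\mathrm{i}\theta})=e^{\mathrm{i}\theta/2}\cos(\theta/2),\qquad \tfrac12(1-e^{\mathrm{i}\theta})=-\mathrm{i}e^{\mathrm{i}\theta/2}\sin(\theta/2),
\]
each block equals \(e^{\mathrm{i}\theta/2}(cI-\mathrm{i}sA)\) with \(c=\cos(\theta/2)\) and \(s=\sin(\theta/2)\). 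The norm bound is then immediate: \(\norm{cI-\mathrm{i}sA}\le|c|+|s|\norm{A}\le\sqrt{1+\norm{A}^2}\) by Cauchy--Schwarz. For the smallest singular value I will use two lower bounds. First, \(cI-\mathrm{i}sA=-\mathrm{i}s\bigl(A-\mathrm{i}(-c/s)I\bigr)\), and \(s\ne0\) because \(\theta\) is an odd multiple of \(\pi/r\); this gives \(\sigma_{\min}\ge|s|\,\omega_A\). Second, the perturbation inequality gives \(\sigma_{\min}\ge|c|-|s|\norm{A}\). So
\[
\sigma_{\min}\ge\max\{|s|\omega_A,\;|c|-|s|\norm{A}\}\quad\text{with } c^2+s^2=1 .
\]
The first bound increases and the second decreases in \(|s|\), so the maximum is smallest when they are equal. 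That happens at \(|c|=|s|(\omega_A+\norm{A})\), i.e.\ \(|s|=1/\sqrt{1+(\omega_A+\norm{A})^2}\), which yields the claimed bound \(\omega_A/\sqrt{1+(\omega_A+\norm{A})^2}\). The only delicate point is this minimax step.
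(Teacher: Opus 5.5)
Your proof is correct and follows the paper's argument essentially step for step: direct multiplication using $S_r^r=-I_r$ and commutativity for item~1, keeping only the $k=0$ block for item~2, and for item~3 block-diagonalizing $S_r$ and combining the lower bounds $|s|\,\omega_A$ and $|c|-|s|\norm{A}$. Your minimax in $|s|$ is exactly the paper's case split at $|\cot\theta|=\omega_A+\norm{A}$, written in different coordinates. One small point to state explicitly: the surviving term in item~1 has sign $(-1)^{r-1}\cdot(-1)=(-1)^r$, so the cancellation to $A_+^r+A_-^r$ uses that $r=2^\ell$ is even.
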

	\begin{proof}
		Item 1 holds from direct multiplication, after one notes that $S^r_r=-I$ and that $A_+,A_-$ commute.   
		
		For item 2, one sees,
		\begin{align*}
			\round*{\bra{0^\ell}\otimes I_n} L_r^{-1} \round*{\ket{0^\ell}\otimes I_n}
			&= \round*{\bra{0^\ell}\otimes I_n} \sum_{k=0}^{r-1} (-1)^k\ket{k} \otimes A_{+}^{r-1-k}A_{-}^k(A_+^r + A_-^r)^{-1} \\
			&=A_+^{r-1}(A_+^r + A_-^r)^{-1}. 
		\end{align*}
		Again noting that $A_+,A_-$ and $(A_+^r + A_-^r)^{-1}$ commute; it follows that 
		\[
			\round*{\bra{0^\ell}\otimes I_n} L_r^{-1} \round*{\ket{0^\ell}\otimes A_+}=A_+^{r}(A_+^r + A_-^r)^{-1}=g^+_r(A).
		\]
		
		Item 3: 
		The matrix \(S_r\) is unitary. Its eigenvalues and corresponding
		normalized orthogonal eigenvectors are
		\[
		\theta_k = \frac{(2k+1)}{r}\frac{\pi}{2} 
		\qquad
		\lambda_k=e^{i2\theta_k},
		\qquad
		v_k
		=
		\frac1{\sqrt r}
		\begin{pmatrix}
			1&\lambda_k^{-1}&\cdots&\lambda_k^{-(r-1)}
		\end{pmatrix}^{\mathsf T},
		\]
		for \(k=0,\ldots,r-1\). 
		One can also tell that
		\[
			2L_r=\round*{I_r+S_r}\otimes I_n + \round*{I_r-S_r}\otimes A.
		\]
		It follows that $2L_r$ is unitarily equivalent to a block diagonal matrix with blocks
		\[
			 (1+\lambda_k)I_n + (1-\lambda_k)A
		\]
		for $k=0,\dots, r-1$.
		Thus, we see that
\begin{align*}
 \norm{L_r}=\sigma_{\max}(L_r)
 &=0.5\max\set{\sigma_{\max}((1+\lambda_k)I_n+(1-\lambda_k)A):k=0,\dots,r-1}
\end{align*}
and
\[
 \sigma_{\min}(L_r)
 =0.5\min\set{\sigma_{\min}((1+\lambda_k)I_n+(1-\lambda_k)A):k=0,\dots,r-1}.
\] 
		
		Note that
		\[
			1+e^{i2\theta}= 2e^{i\theta}\cos(\theta)
			\qquad
			1-e^{i2\theta}= -2ie^{i\theta}\sin(\theta)
			\qquad
			\frac{1+e^{i2\theta}}{1-e^{i2\theta}}=i\cot(\theta).
		\]
		
		We now bound $\norm{L_r}$. 
		Define $f(\theta)={\abs{\cos(\theta)} + \norm{A}\abs{\sin(\theta)}}$ and observe that
		\[
			 \norm{L_r}\leq 0.5\max\set{\abs{1+e^{i2\theta}}+\norm{A}\abs{1-e^{i2\theta}}:\theta \in [0,\pi) }=\max\set{f(\theta):\theta \in [0,\pi) }.
		\]
		We can see that 
		\[
		f(\theta)= (\abs{\cos{\theta}},\abs{\sin{\theta}} )\cdot (1,\norm{A})\leq 1 \sqrt{\norm{A}^2+1}; 
		\]
		which is the claimed bound.

		To bound $\sigma_{\min}(L_r)$, first note that $\theta_k\in [\frac{\pi}{2r},\pi-\frac{\pi}{2r}]:=\mathcal I_r$ for all $k=0,\dots, r-1$; so $\sin(\theta_k)\neq 0$.  
		It follows that
		\begin{align*}
			\sigma_{\min}(L_r)&\geq 0.5\min\set{\sigma_{\min}((1+e^{i2\theta})I+(1-e^{i2\theta})A): \theta \in \mathcal I_r } \\
			&= \min\set{\abs{\sin(\theta)}\sigma_{\min}(A+i\cot(\theta)I): \theta \in \mathcal I_r }.
		\end{align*}
		Now let $t=\cot(\theta)$, so $\abs{\sin(\theta)}=\frac{1}{\sqrt{1+t^2}}$. 
		We seek to bound 
		\[
			f_2(t)=\sigma_{\min}(A+itI)\frac{1}{\sqrt{1+t^2}}
		\]
		from below, for $t\in\R $. 
		When $\abs{t}\leq \omega_{A} +\norm{A}$ we have 
		\[
			f_2(t)\geq \omega_A\frac{1}{\sqrt{1+t^2}} 
			\geq \frac{\omega_A}{\sqrt{1+(\omega_{A}+\norm{A})^2}}.
		\]
		
		Note the function $h(x)=\frac{x-\norm{A}}{\sqrt{1+x^2}}$ is increasing for $x\geq 0$. 
		Thus, when $\abs{t}> \omega_{A}+\norm{A}$ we have 
		\[
			f_2(t)
			\geq \frac{\round{\abs{t}-\norm{A}}}{\sqrt{1+t^2}} 
			> \frac{\omega_{A}+\norm{A}-\norm{A}}{\sqrt{1+(\omega_{A}+\norm{A})^2}}
			=\frac{\omega_A}{\sqrt{1+(\omega_{A}+\norm{A})^2}}.
		\]
		In all cases we have the claimed bound.
		
	\end{proof}
	
	Now that we have established that $L_r$ is useful, we can explore how to form a block-encoding of it. 

	\begin{lemma}[Block-encoding of \(L_r\)]\label{thm:form_L_quantum}
		Let \(U_A\) be a \((1,a,0)\)-block-encoding of
		\(A\in\C^{n\times n}\), and let \(r=2^\ell\), where
		\(\ell\in\N\). Define
		\[
		A_\pm=\frac{I_n\pm A}{2},
		\qquad
		L_r=I_r\otimes A_+ + S_r\otimes A_-.
		\]
		Then,
		\begin{enumerate}
			\item Define
			\[
			D_S
			=
			\ket0\bra0\otimes I_r
			+
			\ket1\bra1\otimes S_r.
			\]
			The unitaries \(D_S\) and \(D_S^\dagger\) can each be
			implemented exactly using \(O(\ell)\) elementary gates and
			one reusable work qubit.
			
			\item One can construct a
			\((\sqrt2,a+1,0)\)-block-encoding \(U_L\) of \(L_r\); its adjoint \(U_L^\dagger\) block-encodes \(L_r^\dagger\).
			Each application of \(U_L\) ( \(U_L^\dagger\)) uses one application of \(D_S\) (\(D_S^\dagger\)), one controlled application of \(U_A\) (\(U_A^\dagger\)), and three Hadamard gates. Note $\norm{A}\leq1$, so $\norm{L_r}\leq \sqrt{2}$, by \cref{thm:L_matrix}.
		\end{enumerate}
	\end{lemma}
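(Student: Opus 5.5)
The plan has two parts: build \(D_S\) from an exact modular incrementer, then obtain \(U_L\) as a two-term linear combination of unitaries.

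\emph{Item 1.} On the \(\ell\)-qubit index register, \(S_r\) acts as \(\ket{j}\mapsto\ket{j+1}\) for \(j<r-1\) and \(\ket{r-1}\mapsto-\ket{0}\). So \(S_r=Z_{\mathrm{wrap}}\cdot\mathrm{INC}\). Here \(\mathrm{INC}\) is the cyclic increment \(\ket j\mapsto\ket{j+1 \bmod r}\), and \(Z_{\mathrm{wrap}}\) applies a phase \(-1\) to the basis state \(\ket0\) of the index register, i.e.\ after the wraparound. Equivalently, one may apply the phase \(-1\) to \(\ket{1^\ell}\) before incrementing.

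Controlling on the extra qubit, \(D_S\) is a controlled incrementer followed by a phase \(-1\) on \(\ket{1}\ket{0^\ell}\). That phase is an \((\ell+1)\)-qubit multi-controlled \(Z\), after \(X\) gates on the index register. A controlled incrementer on \(\ell\) qubits is an \(\ell+1\)-bit incrementer restricted to a control, equivalently an incrementer on the register augmented by the control.

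I would cite the linear-size exact incrementer construction with one borrowed work qubit \cite{Vandaele26}. For the multi-controlled \(Z\), I would cite the linear-size multi-controlled Toffoli construction using one ancilla \cite{nie2024quantumcircuitmultiqubittoffoli}. Together these give \(O(\ell)\) elementary gates and one reusable work qubit. \(D_S^\dagger\) is obtained by reversing the circuit and inverting each gate, which has the same cost. \textbf{This is the main obstacle:} it is the only place where gate-level resource claims must be justified, and it relies entirely on citing an incrementer with linear gate count and constant ancilla.

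\emph{Item 2.} Write
\[
L_r=\tfrac12\bigl[(I_r+S_r)\otimes I_n+(I_r-S_r)\otimes A\bigr]
=\tfrac12\bigl[I_r\otimes I_n+S_r\otimes I_n\bigr]+\tfrac12\bigl[I_r\otimes A-S_r\otimes A\bigr].
\]
A cleaner grouping is \(L_r=\tfrac12(I+S_r)\otimes I_n\,+\,\tfrac12(I-S_r)\otimes A\). I would, however, use the LCU over a single control qubit \(c\), in the form
\[
L_r=\tfrac12\bigl[\ket{+}\text{-weighted}\bigr],
\]
built as follows. Apply a Hadamard to \(c\). Apply \(D_S\) controlled by \(c\). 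Apply controlled \(U_A\), where the control is arranged so that the \(A\) term appears with the correct sign relative to the \(S_r\) term. Finally apply a Hadamard to \(c\) and project onto \(\ket0\).

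To check this concretely, the \(\ket0\!\bra0\) block in \(c\) gives \(\tfrac12(W_0+W_1)\), where \(W_0,W_1\) are the branch unitaries. I want \(\tfrac12(W_0+W_1)\) to equal \(L_r/\sqrt2\) times some unitary, and this does not directly give \(\sqrt2\). So instead I would factor. Note that \(L_r=(I_r\otimes A_+)+(S_r\otimes I)(I_r\otimes A_-)\). A \((1,a+1,0)\)-block-encoding of \(A_\pm=(I\pm A)/2\) comes from a Hadamard-sandwiched controlled \(U_A\), using one query and two Hadamards. Then \(L_r\) is the \(\ket0\) block of
\[
\bigl(\ket0\!\bra0\otimes I+\ket1\!\bra1\otimes S_r\bigr)
\]
applied after a single selector encoding \(\ket0\mapsto A_+\), \(\ket1\mapsto A_-\), with an outer Hadamard pair on the selector. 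This is one LCU of two unitary-composed blocks with weights \((1,1)\). Its normalization is the \(\ell_1\)-type factor, which I would bound and tighten by sharing the Hadamard that forms \(A_\pm\) with the selector. The result should be a scale \(\sqrt2\) with three Hadamards in total, \(a+1\) ancillas (the selector doubles as the \(A_\pm\) control qubit), one \(D_S\), and one controlled \(U_A\).

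Finally, I would verify by direct block multiplication that the \(\ket{0^{a+1}}\) block equals \(L_r/\sqrt2\). For the adjoint, taking the adjoint of the circuit gives \(U_L^\dagger\), whose corresponding block is \(L_r^\dagger/\sqrt2\); it uses \(D_S^\dagger\) and controlled \(U_A^\dagger\). The consistency remark \(\norm{L_r}\le\sqrt2\) follows from \cref{thm:L_matrix} item 3 with \(\norm{A}\le1\).
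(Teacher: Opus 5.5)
Your item 1 is the paper's argument. Your factorization $S_r=Z_{\mathrm{wrap}}\,\mathrm{INC}$, or equivalently $\mathrm{INC}\,R_r$, is the paper's $S_r=C_rR_r$ with $R_r=I_r-2\ket{r-1}\bra{r-1}$. The paper likewise controls the incrementer using the linear-size construction with one work qubit, and implements the controlled phase as a multi-controlled $X$ reusing that qubit. For item 2, the resources you list are exactly those of the paper's circuit $U_L=H_cD_SH_cC_AH_c$: three Hadamards, the selector doubling as the qubit that forms $A_\pm$, one $D_S$, one controlled $U_A$, and $a+1$ ancillas. So the approach is the same.

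The gap is that the one non-routine point of item 2, the constant $\sqrt2$, is asserted rather than derived ("tighten by sharing", "the result should be"). The mechanism you describe does not produce it. A select-type encoding $\ket0\mapsto A_+$, $\ket1\mapsto A_-$ inside an outer Hadamard pair is an ordinary weight-$(1,1)$ LCU with block $L_r/2$, i.e.\ normalization $2$. You never fix the gate order that removes this extra factor.

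What you need to say is that there is no opening LCU Hadamard at all. Apply the sandwich $H_cC_AH_c$ to $\ket0_c$: first $H_c$, then controlled $U_A$, then $H_c$. This gives $\ket0_c\otimes\tfrac{I+U_A}{2}+\ket1_c\otimes\tfrac{I-U_A}{2}$. After projecting the $a$ ancillas, the two selector branches already carry $A_+$ and $A_-$, with no $1/\sqrt2$ amplitude loss. Then $D_S$ applies $S_r$ on the $\ket1_c$ branch. The closing $H_c$ followed by $\bra0_c$ contributes the only factor $1/\sqrt2$:
\[
(\bra0_c\bra{0^a}_B\otimes I_{JX})\,U_L\,(\ket0_c\ket{0^a}_B\otimes I_{JX})
=\tfrac{1}{2\sqrt2}\bigl[(I_r+S_r)\otimes I_n+(I_r-S_r)\otimes A\bigr]
=\tfrac{1}{\sqrt2}L_r .
\]
Once you write out this computation, your proposal becomes the paper's proof; the statement about the adjoint is then immediate.
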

	\begin{proof}
		For item \(1\), define the modular incrementer
		\[
		C_r
		=
		\sum_{j=0}^{r-2}\ket{j+1}\bra j
		+\ket0\bra{r-1}
		\]
		and the reflection
		\[
		R_r=I_r-2\ket{r-1}\bra{r-1}.
		\]
		It follows that,
		\[
		S_r=C_rR_r.
		\]
		
		Let
		\[
		D_C
		=
		\ket0\bra0\otimes I_r
		+
		\ket1\bra1\otimes C_r,
		\qquad
		D_R
		=
		\ket0\bra0\otimes I_r
		+
		\ket1\bra1\otimes R_r.
		\]
		It follows that
		\[
		D_S=D_CD_R.
		\]
		The operator \(D_C\) is the singly controlled incrementer on an
		\(\ell\)-qubit register. By
		\cite[Corollary~7]{Vandaele26}, it has an exact
		\(O(\ell)\)-gate implementation using one work qubit whose
		initial state may be arbitrary and which is restored at the end.
		
		Since \(r=2^\ell\), we have
		\[
		R_r=I_r-2\ket{1^\ell}\bra{1^\ell},
		\]
		and hence
		\[
		D_R
		=
		I_{2r}
		-
		2\ket{1}\bra{1}\otimes
		\ket{1^\ell}\bra{1^\ell}.
		\]
		Thus \(D_R\) is the phase gate conditioned on all \(\ell+1\)
		qubits being in the state \(\ket1\). Conjugating one of these
		qubits by Hadamard gates reduces its implementation to an
		\(\ell\)-controlled \(X\) gate. By
		\cite[Corollary~3]{nie2024quantumcircuitmultiqubittoffoli}, this also has an exact
		\(O(\ell)\)-gate implementation using one reusable work qubit.
		The same work qubit can be used successively for \(D_R\) and
		\(D_C\). Therefore \(D_S=D_CD_R\) uses \(O(\ell)\) gates.
		Reversing this circuit gives an implementation of
		\(D_S^\dagger\) with the same cost.
		
		For item \(2\), let \(c\) be a selector qubit, let \(J\) denote
		the \(\ell\)-qubit index register, let \(B\) denote the
		\(a\) block-encoding ancillas of \(U_A\), and let \(X\) denote
		the system register on which \(A\) acts. Define
		\[
		H=\frac{1}{\sqrt2}
		\begin{pmatrix}
			1&1\\
			1&-1
		\end{pmatrix},
		\qquad
		H_c=H\otimes I_{JBX},
		\]
		so that \(H_c\) applies a Hadamard gate to \(c\) and acts as the
		identity on every other register. Also define
		\[
		C_A
		=
		\round*{\ket0\bra0}_c\otimes I_{JBX}
		+
		\round*{\ket1\bra1}_c\otimes(I_r\otimes U_A).
		\]
		This is the controlled application of \(U_A\), with the index
		register left unchanged. When used on all four registers,
		\(D_S\) is implicitly tensored with \(I_{BX}\).
		
		Set
		\[
		U_L=H_cD_SH_cC_AH_c.
		\]
		A high level circuit diagram of $U_L$ is given in \cref{fig:block-encoding-Lr}.
		\begin{figure}[ht]
			\centering
			\begin{quantikz}[
				row sep={0.55cm,between origins},
				column sep=0.45cm
				]
				\lstick{$\ket{0}_c$}
				& \gate{H}
				& \ctrl{2}
				& \gate{H}
				& \ctrl{1}
				& \gate{H}
				& \rstick{$\bra{0}_c$} \qw \\
				\lstick{$J$}
				& \qw
				& \qw
				& \qw
				& \gate{S_r}
				& \qw
				& \rstick{$J$} \qw \\
				\lstick{$\ket{0^a}_B$}
				& \qw
				& \gate[wires=2]{U_A}
				& \qw
				& \qw
				& \qw
				& \rstick{$\bra{0^a}_B$} \qw \\
				\lstick{$X$}
				& \qw
				& \ghost{U_A}
				& \qw
				& \qw
				& \qw
				& \rstick{$X$} \qw
			\end{quantikz}
			\caption{
				Circuit for
				\(U_L=H_cD_SH_cC_AH_c\).
				Here \(c\) is the selector qubit, \(J\) is the
				\(\ell\)-qubit index register, \(B\) contains the \(a\)
				block-encoding ancillas of \(U_A\), and \(X\) is the system
				register. Taking the
				\(\ket{0}_c\ket{0^a}_B\) matrix element gives
				\(L_r/\sqrt{2}\).
			}
			\label{fig:block-encoding-Lr}
		\end{figure}
		
		Relative to the computational basis of \(c\), the three
		relevant unitaries have the operator-block representations
		\[
		H_c
		=
		\frac1{\sqrt2}
		\begin{pmatrix}
			I&I\\
			I&-I
		\end{pmatrix},
		\qquad
		D_S
		=
		\begin{pmatrix}
			I&0\\
			0&S_r\otimes I_{BX}
		\end{pmatrix},
		\]
		and
		\[
		C_A
		=
		\begin{pmatrix}
			I&0\\
			0&I_r\otimes U_A
		\end{pmatrix}.
		\]
		Multiplying these blocks gives
		\begin{align*}
			\bra0_cU_L\ket0_c
			&=
			\frac{1}{2\sqrt2}
			\left[
			(I_r+S_r)\otimes I_{BX}
			+
			(I_r-S_r)\otimes U_A
			\right].
		\end{align*}
		Because \(U_A\) is a \((1,a,0)\)-block-encoding of \(A\),
		\[
		(\bra{0^a}_B\otimes I_X)
		U_A
		(\ket{0^a}_B\otimes I_X)
		=A.
		\]
		Consequently,
		\begin{align*}
			&(\bra0_c\bra{0^a}_B\otimes I_{JX})
			U_L
			(\ket0_c\ket{0^a}_B\otimes I_{JX})\\
			&\qquad=
			\frac{1}{2\sqrt2}
			\left[
			(I_r+S_r)\otimes I_n
			+
			(I_r-S_r)\otimes A
			\right]\\
			&\qquad=
			\frac{1}{\sqrt2}
			\left[
			I_r\otimes\frac{I_n+A}{2}
			+
			S_r\otimes\frac{I_n-A}{2}
			\right]
			=
			\frac{L_r}{\sqrt2}.
		\end{align*}
		Hence \(U_L\) is a
		\((\sqrt2,a+1,0)\)-block-encoding of \(L_r\).
		
		The displayed construction contains one \(D_S\), one controlled
		application of \(U_A\), and three Hadamard gates. Reversing the
		circuit gives the corresponding resource count for
		\(U_L^\dagger\).
	\end{proof}
	
\subsection{QSVT inversion}

	\begin{proposition}[QSVT inversion]
		\label{thm:QSVT_inv}
		Let \(U_B\) be an exact \((\beta,b,0)\)-block-encoding of an
		invertible matrix \(B\). Suppose that a number \(\eta>0\) is known
		such that
		\[
		\eta\leq \sigma_{\min}(B)
		\qquad\text{and}\qquad
		\eta\leq \frac{\beta}{2}.
		\]
		Then, for every
		\[
		0<\epsilon\leq \min\{1,2\beta^{-1}\},
		\]
		one can construct a
		\[
		\left(\frac{2}{\eta},\,b+1,\,\epsilon\right)
		\]
		block-encoding of \(B^{-1}\). The construction uses
		\[
		O\left(
		\frac{\beta}{\eta}
		\log\left(\frac{2}{\eta\epsilon}\right)
		\right)
		\]
		total queries to \(U_B\), \(U_B^\dagger\).
		
		Let $d$ denote this query count. In addition, the QSVT circuit uses \(O(d)\) one-qubit phase gates and \(O(d)\) projector-controlled NOT gates. For a standard block-encoding, these controlled gates can be decomposed into
		$$ O\bigl(d(b+1)\bigr) $$
		bounded-arity elementary gates using one reusable dirty work qubit.
	\end{proposition}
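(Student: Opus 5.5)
The plan is to reduce to the standard QSVT matrix-inversion result of \cite{QSVT} (Lemma~9 / Corollary~69 of the arXiv version, or the equivalent statement via Theorem~41 and the odd polynomial approximation of $1/x$).

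\textbf{Normalization.} First rescale so that the encoded block $B/\beta$ has singular values in $[\kappa^{-1},1]$, where $\kappa:=\beta/\eta\geq 2$. This uses $\eta\leq\sigma_{\min}(B)$ together with $\norm{B/\beta}\leq1$, since the block is a compression of a unitary. The hypothesis $\eta\leq\beta/2$ is exactly what guarantees $\kappa\geq2$, which is the regime where the cited polynomial lemma applies.

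\textbf{Polynomial.} Next, invoke the odd real polynomial $P$ of degree $d=O(\kappa\log(\kappa/\epsilon'))$ with $\abs{P(x)}\leq1$ on $[-1,1]$ and
\[
\abs{P(x)-\tfrac{1}{2\kappa x}}\leq\epsilon' \quad\text{on } [-1,-\kappa^{-1}]\cup[\kappa^{-1},1].
\]
Apply QSVT for odd polynomials to $U_B$. This uses one extra ancilla qubit for the phase-controlled rotations, giving $b+1$ ancillas. It costs $d$ alternating applications of $U_B$ and $U_B^\dagger$, $O(d)$ single-qubit phase rotations, and $O(d)$ projector-controlled NOTs controlled on $\ket{0^b}$. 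The resulting block is $P^{(SV)}(B/\beta)$, and the singular-value transformation of an odd function gives, on the right singular vectors, $P^{(SV)}(B/\beta)\approx \frac{\beta}{2\kappa}(B^\dagger)^{+}$-type terms. More precisely, $P^{(SV)}(B/\beta)=\sum_i P(\sigma_i/\beta)\ket{w_i}\bra{v_i}$, and the target $\frac{\eta}{2}B^{-1}=\sum_i \frac{\eta}{2\sigma_i}\ket{v_i}\bra{w_i}$. I therefore need to be careful to apply QSVT to $U_B^\dagger$ (or equivalently use the adjoint convention) so that the output block approximates $\frac{\eta}{2}B^{-1}$ rather than its adjoint. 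Since every singular value satisfies $\sigma_i/\beta\geq\kappa^{-1}$, the spectral-norm error in this block is at most $\epsilon'$.

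\textbf{Accuracy.} Multiplying by $2/\eta$ gives a $(2/\eta,b+1,2\epsilon'/\eta)$-block-encoding of $B^{-1}$. Choose $\epsilon'=\eta\epsilon/2$. Then
\[
d=O\!\left(\frac{\beta}{\eta}\log\frac{2\beta}{\eta^2\epsilon}\right).
\]
To get the stated $\log(2/(\eta\epsilon))$ I will use the hypothesis $\epsilon\leq 2\beta^{-1}$, i.e.\ $\beta\leq 2/\epsilon$. This gives $\beta/(\eta^2\epsilon)\leq 2/(\eta\epsilon)^2$, so the logarithm is $O(\log(2/(\eta\epsilon)))$. The constraint $\epsilon\leq1$ together with $\eta\leq \beta/2$ keeps $\epsilon'\in(0,1/2)$, which is the range required by the cited lemma.

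\textbf{Gate count and main obstacle.} Each projector-controlled NOT $\Pi$-NOT with $\Pi=\ket{0^b}\bra{0^b}$ is a $b$-controlled $X$ conjugated by $X$ gates. By \cite[Corollary~3]{nie2024quantumcircuitmultiqubittoffoli}, used as in \cref{thm:form_L_quantum}, this costs $O(b+1)$ bounded-arity gates with one reusable dirty work qubit. This yields $O(d(b+1))$ elementary gates in total. I expect the main obstacle to be bookkeeping rather than mathematics: pinning down the correct convention of the cited QSVT inversion lemma, in particular its factor $1/(2\kappa)$, its error range, and whether it produces $B^{-1}$ or $(B^\dagger)^{-1}$. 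I also need to check that the adjoint and phase choices make the block exactly the top-left one, without an extra sign or complex phase.
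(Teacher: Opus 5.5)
Your proposal is correct and follows essentially the same route as the paper: rescale to $T=B/\beta$ with $\delta=\eta/\beta=\kappa^{-1}\le 1/2$, use QSVT inversion to obtain a block within $\epsilon'=\eta\epsilon/2$ of $\frac{\eta}{2}B^{-1}$, rescale by $2/\eta$, and decompose the projector-controlled NOTs via \cite[Corollary~3]{nie2024quantumcircuitmultiqubittoffoli}. The paper simply cites \cite[Theorem~41]{QSVT}, converting the $\ket{+}$-projected signal qubit to a $\ket{0}$-projected one with Hadamards, instead of unpacking the polynomial and adjoint bookkeeping as you do; one small correction is that $\epsilon'\le 1/2$ comes from $\epsilon\le 2\beta^{-1}$ and $\eta\le\beta/2$, giving $\epsilon'\le\eta/\beta\le 1/2$ (the paper's $\epsilon_{\mathrm{QSVT}}\le\delta$), not from $\epsilon\le 1$, which is insufficient when $\beta>2$.
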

	
	\begin{proof}
		Set
		\[
		T:=\frac{B}{\beta},
		\qquad
		\delta:=\frac{\eta}{\beta}.
		\]
		Thus \(U_B\) is an exact \((1,b,0)\)-block-encoding of \(T\), and
		\[
		\sigma_{\min}(T)\geq\delta,
		\qquad
		0<\delta\leq\frac12.
		\]
		Define
		\[
		\epsilon_{\mathrm{QSVT}}
		:=\frac{\eta\epsilon}{2}.
		\]
		Since \(\epsilon\leq2\beta^{-1}\),
		\[
		\epsilon_{\mathrm{QSVT}}
		=\frac{\eta\epsilon}{2}
		\leq\frac{\eta}{\beta}
		=\delta.
		\]
		
		By the singular-value inversion result of
		\cite[Theorem~41]{QSVT}, QSVT applied to \(T\) produces, using one
		additional signal qubit, a unitary whose encoded block \(W\)
		satisfies
		\[
		\left\|
		W-\frac{\delta}{2}T^{-1}
		\right\|
		\leq\epsilon_{\mathrm{QSVT}}.
		\]
		Since \(T^{-1}=\beta B^{-1}\),
		\[
		\frac{\delta}{2}T^{-1}
		=
		\frac{\eta}{2\beta}\,\beta B^{-1}
		=
		\frac{\eta}{2}B^{-1}.
		\]
		Consequently,
		\[
		\left\|
		B^{-1}-\frac{2}{\eta}W
		\right\|
		\leq
		\frac{2}{\eta}\epsilon_{\mathrm{QSVT}}
		=\epsilon.
		\]
		Conjugating the additional signal qubit by Hadamard gates
		converts the \(\ket{+}\)-projected block in
		\cite[Theorem~41]{QSVT} into a standard
		\(\ket{0}\)-projected block.
		Hence the resulting unitary is a
		\((2/\eta,b+1,\epsilon)\)-block-encoding of \(B^{-1}\).
		
		The polynomial degree, and therefore the number of applications of
		\(U_B\) and \(U_B^\dagger\), is
		\[
		O\left(
		\delta^{-1}
		\log\left(\epsilon_{\mathrm{QSVT}}^{-1}\right)
		\right)
		=
		O\left(
		\frac{\beta}{\eta}
		\log\left(\frac{2}{\eta\epsilon}\right)
		\right).
		\]
		Let \(d\) denote the above query count. In addition,
		for the gate count, \cite[Theorem~41]{QSVT} uses \(O(d)\) controlled operations associated with the input and output projectors. These are \(b\)-controlled NOTs, up to negative-control \(X\) gates. They cost \(O(b)\) elementary gates each using one dirty ancilla by \cite[Corollary~3]{nie2024quantumcircuitmultiqubittoffoli}. 
		
	\end{proof}

\subsection{Projector construction complexity}
	Combining the previous two results gives the following complexity
	bounds for block-encoding an approximation to $\sign^+(A)$.
	\begin{theorem}\label{thm:sign_complexity}
		Suppose that $\norm{A}\leq 1$ and no eigenvalue of $A$ is purely imaginary.
		Let $\omega= \min\set{\sigma_{\min}\round{A-itI}: t\in \R}$, $\nu \in (0,\omega]$, and $U_A$ be a $(1,a,0)$-block-encoding of $A$. 
		Let $r=2^\ell$, $\ell\in \N$.
		The following hold:
		\begin{enumerate}
			\item For all $\epsilon_{inv} \in (0,1]$, we can construct a $\round*{\frac{2\sqrt{5}}{\nu},2a+\ell+3,\epsilon_{inv}}$-block-encoding of $g^+_{r}(A)$ which uses $d=O\round*{\frac{1}{\nu}\log\round*{\frac 1 {\nu\epsilon_{inv}}}}$ queries of $U_A$ and $U_A^{\dagger}$, and uses an additional $O(d(\ell+a))$ elementary gates.
			
			\item For all $\epsilon_{+} \in (0,1]$, we can pick
                \[r=2^\ell=O\round*{ \frac{1}{\nu}\log\round*{\frac 1 {\nu\epsilon_{+}}} }\] 
                to construct a  $\round*{\frac{2\sqrt{5}}{\nu},2a+\ell+3,\epsilon_{+}}$-block-encoding of $\sign^+(A)$ which uses
                \[d=O\round*{\frac{1}{\nu}\log\round*{\frac {1} {\nu\epsilon_{+}}}}\]
                queries of $U_A$ and $U_A^{\dagger}$, and uses an additional $O(d(\log(d)+a))$ elementary gates.
			

		\end{enumerate}
	\end{theorem}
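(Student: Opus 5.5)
The construction composes three earlier results: \cref{thm:form_L_quantum} to block-encode \(L_r\), \cref{thm:QSVT_inv} to invert it, and \cref{thm:L_matrix} item 2 to read off \(g_r^+(A)\) as a block of \(L_r^{-1}\). For item 2 we then add the Pad\'e error bound of \cref{thm:pade_appox}.

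\emph{Item 1.} By \cref{thm:form_L_quantum}, \(U_L\) is a \((\sqrt2,a+1,0)\)-block-encoding of \(L_r\) using one query to \(U_A\) (or \(U_A^\dagger\)) per application. Since \(\norm{A}\le1\) and \(\omega_A\le1\), \cref{thm:L_matrix} item 3 gives
\[
\sigma_{\min}(L_r)\ge \frac{\omega_A}{\sqrt{1+(\omega_A+1)^2}}\ge\frac{\omega_A}{\sqrt5}\ge\frac{\nu}{\sqrt5}=:\eta .
\]
We also have \(\eta\le\sqrt2/2=\beta/2\). Apply \cref{thm:QSVT_inv} with \(\beta=\sqrt2\) and this \(\eta\), at accuracy \(\epsilon_{inv}\) (note \(\epsilon_{inv}\le 1\le 2/\sqrt2\)). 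This gives a \((2\sqrt5/\nu,\,a+2,\,\epsilon_{inv})\)-block-encoding \(W\) of \(L_r^{-1}\), using \(O(\nu^{-1}\log(1/(\nu\epsilon_{inv})))\) queries and \(O(d(a+2))\) projector gates. To this we add \(O(d\,\ell)\) gates for the \(D_S\) calls inside each \(U_L\).

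Next we must sandwich \(W\): project the index register \(J\) onto \(\ket{0^\ell}\) on the left, and on the right apply \(\ket{0^\ell}\otimes A_+\). The projection of \(J\) onto \(\ket{0^\ell}\) on both sides is free: we simply enlarge the ancilla set by the \(\ell\) index qubits. For the right factor, multiply by a block-encoding of \(I_r\otimes A_+\). This comes from a one-ancilla LCU of \(I\) and \(U_A\) with weights \(1/2,1/2\), giving a \((1,a+1,0)\)-encoding at one extra query.

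The product rule for block-encodings \cite[Lemma~30]{QSVT} multiplies the normalizations and adds the errors weighted by the other normalization. Since \(\norm{A_+}\le1\), the error of the product is at most \(1\cdot\epsilon_{inv}\). Counting ancillas gives \((a+2)+(a+1)+\ell=2a+\ell+3\), matching the claim. This bookkeeping, ensuring the ancilla count comes out exactly as stated and that the index projection does not inflate \(\beta\), is the step I expect to need the most care.

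\emph{Item 2.} Choose \(\ell\) minimal such that \(r=2^\ell\) satisfies \cref{eq:r_complexity} with \(\omega\) replaced by \(\nu\) and \(\epsilon\) replaced by \(\epsilon_+\). Replacing \(\omega\) by \(\nu\le\omega\) only strengthens the requirement: the bound's derivation uses \(\eta=\omega/2\), but it is monotone, so one may rerun \cref{thm:pade_appox} with \(\eta=\nu/2\) and \(R=1+\nu/2\). Then \(\norm{\sign(A)-g_r(A)}\le\epsilon_+\), hence \(\norm{\sign^+(A)-g_r^+(A)}\le\epsilon_+/2\). Rounding up to a power of two at most doubles \(r\), so \(r=O(\nu^{-1}\log(1/(\nu\epsilon_+)))\).

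Take item 1 with \(\epsilon_{inv}=\epsilon_+/2\). The triangle inequality gives total error at most \(\epsilon_+\) with normalization \(2\sqrt5/\nu\). The query count is unchanged, \(d=O(\nu^{-1}\log(1/(\nu\epsilon_+)))\). Since \(\ell=O(\log r)=O(\log d)\), the gate overhead becomes \(O(d(\log d+a))\).
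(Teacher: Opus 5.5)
Your proposal is correct and follows the paper's proof essentially step for step. You block-encode $L_r$ via \cref{thm:form_L_quantum}, invert it with \cref{thm:QSVT_inv} using $\eta=\nu/\sqrt5$ and $\beta=\sqrt2$, multiply by the exact $(1,a+1,0)$ LCU encoding of $I_r\otimes A_+$, absorb the index register into the ancillas, and for item 2 combine item 1 at $\epsilon_{inv}=\epsilon_+/2$ with \cref{thm:pade_appox} and the triangle inequality; your suggestion of rerunning \cref{thm:pade_appox} with $\eta=\nu/2$, $R=1+\nu/2$ is an equally valid alternative to the paper's monotonicity remark.
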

	\begin{proof}
		Item 1: \Cref{thm:form_L_quantum} gives us unitary $U_L$ which is a $(\sqrt{2},a+1,0)$-block-encoding of $L_r$. 
		Using $\nu \in (0,\omega]$, recalling that $\omega,\norm{A}\leq 1$ and item 3 of \cref{thm:L_matrix} we get 
		\[
			\sigma_{\min}(L_r)\geq  \frac{\omega}{\sqrt{1+(\omega+\norm{A})^2}} \geq \frac{\omega}{\sqrt{1+(1+1)^2}}=\frac{\omega}{\sqrt{5}}\geq \frac{\nu}{\sqrt{5}}.
		\] 
		 
		Using this lower bound, we can apply \cref{thm:QSVT_inv}  to $L_r$.
		Here \(\eta=\nu/\sqrt5\), \(\beta=\sqrt2\), and $\eta \leq 5^{-0.5}\leq \beta/2$. 
		Since
		\(\epsilon_{inv}\leq1\leq2/\beta\), \cref{thm:QSVT_inv}
		gives a
		$(2\sqrt 5 /\nu, a+2,\epsilon_{inv})$-block-encoding of $L_r^{-1}$. 
		This costs $d=O\round*{\frac{1}{\nu}\log\round*{\frac 1 {\nu\epsilon_{inv}}}}$ applications of $U_L,U_L^{\dagger}$ and $O(d(a+1))$ elementary gates. 
		By \cref{thm:form_L_quantum}, applications of $U_L$ ($U_L^\dagger$) cost 1 application of $U_A$ ($U_A^\dagger$) for a total of $d$ calls to $U_A,U_A^{\dagger}$. 
		By the same result, applying $U_L$ or $U_L^\dagger$ also costs $O(\ell)$ elementary gates, so the total elementary gate cost is $O(d(\ell+a))$.
		
		To get a block encoding of $g_r^+(A)$ we seek to apply item 2 of \cref{thm:L_matrix}:
		\[
			g^+_r(A)= \round*{\bra{0^\ell}\otimes I_n} L_r^{-1} \round*{\ket{0^\ell}\otimes A_+},
		\] 
		where $A_+=0.5(I+A)$. We therefore, want to multiply $L_r^{-1}$ by $C=I_r\otimes A_+$. 
		Since
		\[
		C=\frac12(I_r\otimes I_n)+\frac12(I_r\otimes A),
		\]
		\cite[Lemma~52]{QSVT}, applied with coefficient vector
		\((1/2,1/2)\), gives an exact \((1,a+1,0)\)-block-encoding
		of \(C\). It uses one controlled application of \(U_A\) and two
		Hadamard gates.
		Forming the product $L_r^{-1}C$, results in a $((1)2\sqrt 5/\nu, a+2+a+1,\epsilon_{inv})$-block-encoding thereof, \cite[Lemma~53]{QSVT}. 
		This does not use any additional gate resources. 
		
		Finally, regarding the \(\ell\)-qubit index register as an
		additional block-encoding register and projecting it onto
		\(\ket{0^\ell}\) gives a $(2\sqrt 5/\nu, 2a+3+\ell,\epsilon_{inv})$-block-encoding of $g^+_r(A)$. 
		

		Item 2 follows from item 1 of this theorem with $\epsilon_{inv}=0.5\epsilon_{+}$ and item 3 of \cref{thm:pade_appox} with $\epsilon=\epsilon_{+}$. 
		Since \(\nu\leq\omega\), the sufficient lower bound in
		\eqref{eq:r_complexity} with \(\omega\) replaced by \(\nu\)
		is no smaller. Choosing \(r\) to be the least power of two above
		this bound changes it by at most a factor of two.
		Thus we can pick $r=2^\ell=O\round*{\frac 1 \nu\log\round*{ {\frac{1}{\nu\epsilon_{+}}}}}$ so that
		\[
			\epsilon_{+} \geq \norm{\sign[A]-g_r(A)} =2\norm{\sign^+(A)-g^+_r(A)}
		\] 
		Hence, if $V$ is the $(\beta,b,\epsilon_{inv})$-block-encoding of $g^+_r(A)$ and $\tilde{g}=\beta\round*{\langle {0^b} \rvert\otimes I}V\round*{\lvert {0^b} \rangle\otimes I}$, we have
		\begin{align*}
			\norm*{\sign^+(A)-\tilde g}
			&\leq
			\norm*{\sign^+(A)-g_r^+(A)}
			+\norm*{g_r^+(A)-\tilde g} \\
			&\leq \frac{\epsilon_{+}}2
			+\epsilon_{inv}
			=\epsilon_{+}.
		\end{align*}
		
		The gate costs remain the same as in item 1, but here $\ell = O(\log(d))$, since we have chosen $r=O(d)$. 
		
	\end{proof}
	The parameter $r$ is a direct proxy to the number of quadrature points of the integral \cref{eq:sign_intgral_arctan} (via \cref{eq:pade_fraction}). 
	We see that our approximate block-encoding scaling parameter is independent of $r$. 
	One may expect such dependence with naive LCU approaches to computing the sum \cref{eq:pade_fraction}. 
	Practically this means higher accuracy requires no trade off in measurement probabilities. 
	For fixed inversion accuracy, the query bound in item 1 of \cref{thm:sign_complexity} is independent of $r$; the additional gate bound depends on $r$ through $\ell=\log_2 r$. 
	Higher target accuracy also increases the query bound through the required inversion accuracy. 

\section{Proof of \cref{thm:search-hybrid}}
	\label{sec:Proof_thm:search-hybrid}
    This proof is a step-by-step form of the hybrid argument of
	Bennett et al.\ \cite[proof of Theorem~3.3]{BennettEtAl1997},
	retaining the size $\eta$ of the local oracle difference.
	The usual phase-oracle case is discussed in
	\cite[Exercise~11.12]{deWolf2026}. For a general framework
	of lower bounds for unitary oracles, including inverse and
	controlled access, see
	\cite[Section~2 and Theorem~37]{Belovs2015}.
	
		Let $T$ be the worst-case number of queries. We prove the
		lower bound even when one query may coherently choose
		between doing nothing, applying $U_x$, and applying
		$U_x^\dagger$. Specifically, use the state space
		\[
		\mathcal H
		=\C^3\otimes\C^d\otimes\C^n\otimes\mathcal A,
		\]
		whose factors represent the query selector, the local
		state acted on by $V_0$ or $V_1$, the index register,
		and an arbitrary ancillary space $\mathcal A$, respectively.
		The enlarged query operator on $\mathcal H$ is
		\[
		\widehat U_x
		=
		\left(
		\ket{0}\bra{0}\otimes I_{dn}
		+\ket{1}\bra{1}\otimes U_x
		+\ket{2}\bra{2}\otimes U_x^\dagger
		\right)\otimes I_{\mathcal A}.
		\]
		Here $\ket{0},\ket{1},\ket{2}$ belong to the selector
		register $\C^3$. 
		By preparing the selector register appropriately, one application
		of $\widehat U_x$ can implement the identity, $U_x$, or
		$U_x^\dagger$, including their controlled forms. 
		Therefore, counting each application of 
		$\widehat U_x$ as one query gives at least as much
		oracle access as the model in the statement.
		
		By keeping random choices and intermediate measurement
		outcomes in ancillary registers, all measurements can
		be deferred until the end
		\cite[Exercise~2.8]{deWolf2026}. Branches that finish
		early select the identity block in subsequent queries.
		The computation therefore has the form
		\begin{equation}
			\label{eq:search-hybrid-evolution}
			\ket{\phi_t(x)}
			=\widehat U_x W_t\ket{\phi_{t-1}(x)},
			\qquad t=1,\ldots,T,
		\end{equation}
		where $W_1,\ldots,W_T$ are unitaries on $\mathcal H$
		independent of the unknown input $x$, and
		$\ket{\phi_0(x)}=\ket{\phi_0}$ is the same initial
		unit vector for every $x$. Thus $\ket{\phi_t(x)}$
		denotes the state immediately after query $t$.
		Any unitary after the last query is absorbed into
		the final measurement.
		
		For $j=1,\ldots,n$, define the orthogonal projection
		\[
		\Pi_j
		=I_3\otimes I_d\otimes\ket{j}\bra{j}
		\otimes I_{\mathcal A}.
		\]
		It retains exactly the components whose index register
		equals $\ket{j}$, without restricting any other register.
		Set
		\[
		p_{t,j}
		=\norm{\Pi_j W_t\ket{\phi_{t-1}(\mathbf 0)}}^2.
		\]
		This is the probability of obtaining index $j$ if the
		index register were measured immediately before query
		$t$ in the execution with input $\mathbf 0$; no such
		measurement is performed. Since $\sum_{j=1}^n\Pi_j=I$
		and the projections are mutually orthogonal,
		\[
		\sum_{j=1}^n p_{t,j}=1.
		\]
		
		The inputs $\mathbf 0$ and $e_j$ differ only at
		coordinate $j$, so
		\[
		U_{e_j}-U_{\mathbf 0}
		=(V_1-V_0)\otimes\ket{j}\bra{j}.
		\]
		The same identity holds for their adjoints after
		replacing $V_1-V_0$ by $V_1^\dagger-V_0^\dagger$.
		Since both differences have norm $\eta$, the block
		structure of $\widehat U_x$ gives
		\[
		\widehat U_{e_j}-\widehat U_{\mathbf 0}
		=(\widehat U_{e_j}-\widehat U_{\mathbf 0})\Pi_j,
		\qquad
		\norm{\widehat U_{e_j}-\widehat U_{\mathbf 0}}=\eta.
		\]
		Consequently,
		\begin{equation}
			\label{eq:search-hybrid-one-query}
			\norm{
				(\widehat U_{e_j}-\widehat U_{\mathbf 0})
				W_t\ket{\phi_{t-1}(\mathbf 0)}
			}
			\leq\eta\sqrt{p_{t,j}}.
		\end{equation}
		
		We now compare the two executions with inputs $e_j$
		and $\mathbf 0$, each using its own fixed oracle.
		In \eqref{eq:search-hybrid-evolution}, add and subtract
		$\widehat U_{e_j}W_t\ket{\phi_{t-1}(\mathbf 0)}$.
		The triangle inequality and unitarity give
		\begin{align*}
			\norm{\ket{\phi_t(e_j)}-\ket{\phi_t(\mathbf 0)}}
			&\leq
			\norm{\ket{\phi_{t-1}(e_j)}
				-\ket{\phi_{t-1}(\mathbf 0)}}\\
			&\quad+
			\norm{
				(\widehat U_{e_j}-\widehat U_{\mathbf 0})
				W_t\ket{\phi_{t-1}(\mathbf 0)}
			}\\
			&\leq
			\norm{\ket{\phi_{t-1}(e_j)}
				-\ket{\phi_{t-1}(\mathbf 0)}}
			+\eta\sqrt{p_{t,j}}.
		\end{align*}
		The initial states coincide. Iterating this estimate
		therefore yields
		\[
		\norm{\ket{\phi_T(e_j)}-\ket{\phi_T(\mathbf 0)}}
		\leq\eta\sum_{t=1}^T\sqrt{p_{t,j}}.
		\]
		By Cauchy--Schwarz,
		\begin{align*}
			\sum_{j=1}^n
			\norm{\ket{\phi_T(e_j)}-\ket{\phi_T(\mathbf 0)}}^2
			&\leq
			\eta^2\sum_{j=1}^n
			\left(\sum_{t=1}^T\sqrt{p_{t,j}}\right)^2\\
			&\leq
			\eta^2 T\sum_{t=1}^T\sum_{j=1}^n p_{t,j}
			=\eta^2T^2.
		\end{align*}
		
		Let $M$, with $0\leq M\leq I$, correspond to the final
		output $x=\mathbf 0$. For any unit vectors
		$\ket{u}$ and $\ket{v}$, the inequality
		$\norm{M-\tfrac12 I}\leq\tfrac12$ gives
		\begin{align*}
			\left|\bra{u}M\ket{u}-\bra{v}M\ket{v}\right|
			&=
			\left|\operatorname{Re}\left(
			(\bra{u}-\bra{v})(M-\tfrac12 I)
			(\ket{u}+\ket{v})
			\right)\right|\\
			&\leq
			\tfrac12\norm{\ket{u}-\ket{v}}
			\norm{\ket{u}+\ket{v}}\\
			&\leq\norm{\ket{u}-\ket{v}}.
		\end{align*}
		The success assumption gives
		\[
		\bra{\phi_T(\mathbf 0)}M\ket{\phi_T(\mathbf 0)}
		\geq1-\delta,
		\qquad
		\bra{\phi_T(e_j)}M\ket{\phi_T(e_j)}
		\leq\delta.
		\]
		Applying the preceding inequality to these two
		final states shows that
		\[
		\norm{\ket{\phi_T(e_j)}-\ket{\phi_T(\mathbf 0)}}
		\geq1-2\delta
		\qquad (j=1,\ldots,n).
		\]
		Thus $n(1-2\delta)^2\leq\eta^2T^2$, and therefore
		\begin{equation}
			\label{eq:search-hybrid-explicit}
			T\geq\frac{(1-2\delta)\sqrt n}{\eta}.
		\end{equation}
		This proves the claimed asymptotic bound.
\section{Proof of \cref{thm:rand_diag_matrix}}
\label{sec:Proof_thm:rand_diag_matrix}
        Since the matrices are real diagonal, $\omega_{A}=\min_{j}\abs{a_j}$. 
		Since each $\abs{a_j}$ is independent and uniformly distributed on $[0,1]$, $j=1,\dots n$, the minimum statistic of the sample (which is $\omega_{A}$) is known to have cumulative  {distribution} function: $\Pr\left(\omega_A\le x\right)=1-(1-x)^n$, where $x\in [0,1]$.  
		So we can see that 
			$\Pr\left(
			l\leq\omega_A\leq u
			\right)
			=\left(1-l\right)^n
			-
			\left(1-u\right)^n$.  Let $y_n=\Pr\left(0.2/n\leq\omega_A\leq2/n\right)$, for $n\geq 2$. 
		The sequence $\set{y_n}_{n=2}^\infty$ is decreasing and converges to $e^{-0.2}-e^{-2}$. 
		The result follows.  
\section{Proof of \cref{thm:fixed_dimensional_normalization} \label{sec:Proof_thm:fixed_dimensional_normalization}}
	We first prove the upper bound. 
	Use the contour \(\Gamma_+\) from \cref{eq:Gamma_+}, with
	\(\eta=0\) and \(R=2\), as in the proof of
	\cref{thm:pade_appox}. Thus
	\(\Gamma_+=\Gamma_v\cup\Gamma_R\), where \(\Gamma_v\) is the
	imaginary-axis segment joining \(-2i\) to \(2i\), and
	\(\Gamma_R\) is the right semicircle of radius \(2\).
	Since \(\omega_A>0\), the imaginary axis contains no eigenvalue of
	\(A\), and the Riesz projection formula gives
	\[
	\sign^+(A)
	=
	\frac{1}{2\pi i}
	\int_{\Gamma_+}(zI-A)^{-1}\,\dd z.
	\]
	Consequently,
	\begin{equation}
		\label{eq:fixed_dim_contour_bound}
		\norm{\sign^+(A)}
		\leq
		\frac{1}{2\pi}
		\left(
		\int_{\Gamma_v}\norm{(zI-A)^{-1}}\,\abs{\dd z}
		+
		\int_{\Gamma_R}\norm{(zI-A)^{-1}}\,\abs{\dd z}
		\right).
	\end{equation}
	
	On \(\Gamma_R\), the same Neumann-series estimate used in the proof
	of \cref{thm:pade_appox} gives
	\[
	\norm{(zI-A)^{-1}}
	\leq
	\frac{1}{\abs z-\norm A}
	\leq1.
	\]
	Since \(\Gamma_R\) has length \(2\pi\), its contribution to
	\eqref{eq:fixed_dim_contour_bound} is at most \(1\).
	
	It remains to estimate the integral over \(\Gamma_v\). Write the
	eigenvalues of \(A\), with algebraic multiplicity, as
	\[
	\lambda_j=a_j+ib_j,
	\qquad j=1,\ldots,n,
	\]
	and put \(M=\omega_A^{-1}\). By the definition of \(\omega_A\),
	\begin{equation}
		\label{eq:fixed_dim_gap_bound}
		\norm{(itI-A)^{-1}}\leq M
		\qquad(t\in\R).
	\end{equation}
	
	For \(\abs t\leq2\),
	\[
	\norm{itI-A}\leq\abs t+\norm A\leq3.
	\]
	If the singular values of an invertible matrix \(B\) are ordered
	as
	\[
	\sigma_1(B)\geq\cdots\geq\sigma_n(B),
	\]
	then
	\[
	\norm{\opn{adj}(B)}
	=
	\prod_{j=1}^{n-1}\sigma_j(B)
	\leq
	\norm{B}^{n-1}.
	\]
	Applying this to \(B=itI-A\), we obtain
	\begin{align*}
		\norm{(itI-A)^{-1}}
		&=
		\frac{\norm{\opn{adj}(itI-A)}}
		{\abs{\det(itI-A)}}\\
		&\leq
		\frac{3^{n-1}}
		{\displaystyle\prod_{j=1}^n\abs{it-\lambda_j}}\\
		&\leq
		\frac{3^{n-1}}
		{\displaystyle
			\min_{1\leq j\leq n}\abs{t-b_j}^{\,n}}.
	\end{align*}
	Combining this with \eqref{eq:fixed_dim_gap_bound} gives
	\begin{equation}
		\label{eq:fixed_dim_resolvent_sum}
		\norm{(itI-A)^{-1}}
		\leq
		\sum_{j=1}^n
		\min\left\{
		M,\frac{3^{n-1}}{\abs{t-b_j}^{\,n}}
		\right\}.
	\end{equation}
	
	For \(M>0\), splitting the integral at
	\(\abs{t-b}=(3^{n-1}/M)^{1/n}\) gives
	\[
	\int_{\R}
	\min\left\{
	M,\frac{3^{n-1}}{\abs{t-b}^{\,n}}
	\right\}\,\dd t
	=
	\frac{2n}{n-1}
	3^{(n-1)/n}M^{(n-1)/n}.
	\]
	Integrating \eqref{eq:fixed_dim_resolvent_sum} yields
	\[
	\int_{-2}^{2}
	\norm{(itI-A)^{-1}}\,\dd t
	\leq
	\frac{2n^2}{n-1}
	3^{(n-1)/n}
	\omega_A^{-(n-1)/n}.
	\]
	It follows from \eqref{eq:fixed_dim_contour_bound} that
	\[
	\norm{\sign^+(A)}
	\leq
	1+
	\frac{n^2}{\pi(n-1)}
	3^{(n-1)/n}
	\omega_A^{-(n-1)/n}.
	\]
	Since
	\[
	\omega_A
	\leq\sigma_{\min}(A)
	\leq\norm A
	\leq1,
	\]
	we conclude that
	\begin{equation}
		\label{eq:fixed_dim_upper_final}
		\norm{\sign^+(A)}
		\leq
		\left(
		1+
		\frac{n^2}{\pi(n-1)}
		3^{(n-1)/n}
		\right)
		\omega_A^{-(n-1)/n}.
	\end{equation}
	
	We next construct matrices attaining this exponent. For \(s\geq2\),
	put
	\[
	\theta
	=
	\begin{cases}
		0,&n\text{ odd},\\
		\pi/2,&n\text{ even},
	\end{cases}
	\]
	and define
	\[
	A=A_{n,s}
	=
	\begin{pmatrix}
		0&1&&0\\
		&0&\ddots&\\
		&&\ddots&1\\
		e^{i\theta}s^{-n}&0&\cdots&0
	\end{pmatrix},
	\qquad
	D
	=
	\opn{diag}
	\left(1,s^{-1},\ldots,s^{-(n-1)}\right),
	\]
	and
	\[
	S
	=
	\begin{pmatrix}
		0&1&&0\\
		&0&\ddots&\\
		&&\ddots&1\\
		e^{i\theta}&0&\cdots&0
	\end{pmatrix}.
	\]
	The matrix \(S\) is unitary. Its eigenvalues and corresponding
	normalized eigenvectors are
	\[
	\lambda_k
	=
	\exp\left(
	\frac{i(\theta+2\pi k)}{n}
	\right),
	\qquad
	v_k
	=
	\frac1{\sqrt n}
	\begin{pmatrix}
		1&\lambda_k&\cdots&\lambda_k^{n-1}
	\end{pmatrix}^{\mathsf T},
	\]
	for \(k=0,\ldots,n-1\). Indeed,
	\(\lambda_k^n=e^{i\theta}\) and \(Sv_k=\lambda_kv_k\).
	Since \(S\) is unitary and the \(\lambda_k\) are distinct, the
	vectors \(v_k\) form an orthonormal eigenbasis.
	
	Direct multiplication gives
	\begin{equation}
		\label{eq:fixed_dim_similarity}
		D^{-1}AD=s^{-1}S,
		\qquad
		A^n=e^{i\theta}s^{-n}I.
	\end{equation}
	Moreover,
	\[
	A^\dagger A
	=
	\opn{diag}(s^{-2n},1,\ldots,1),
	\]
	so
	\[
	\norm A=1.
	\]
	
	None of the eigenvalues of \(S\) is purely imaginary. If \(n\) is
	odd, then \((\pm i)^n\) is purely imaginary and hence cannot equal
	\(e^{i\theta}=1\). If \(n\) is even, then \((\pm i)^n\) is real and
	hence cannot equal \(e^{i\theta}=i\).
	
	Let
	\[
	K
	=
	\left\{
	k\in\{0,\ldots,n-1\}:
	\Re(\lambda_k)>0
	\right\},
	\]
	and define
	\[
	Q
	=
	\sum_{k\in K}v_kv_k^\dagger.
	\]
	Then \(Q=\sign^+(S)\). By
	\eqref{eq:fixed_dim_similarity}, and because multiplication by the
	positive scalar \(s^{-1}\) does not change the right-half-plane
	spectral subspace,
	\begin{equation}
		\label{eq:fixed_dim_projection_similarity}
		\sign^+(A)
		=
		DQD^{-1}
	\end{equation}
	
	Let $Q=(q_{ij})_{i,j=1}^n$. 
    We now estimate \(q_{1n}\). 
    Since \(\lambda_k^n=e^{i\theta}\),
	\[
	q_{1n}
	=
	\frac1n
	\sum_{k\in K}\overline{\lambda_k^{n-1}}
	=
	\frac{e^{-i\theta}}n
	\sum_{k\in K}\lambda_k.
	\]
	The indices in \(K\) are consecutive modulo \(n\). If
	\(m=\abs K\), the geometric-series formula gives
	\[
	\abs*{\sum_{k\in K}\lambda_k}
	=
	\frac{
		\abs{1-e^{2\pi im/n}}
	}{
		\abs{1-e^{2\pi i/n}}
	}
	=
	\frac{\sin(\pi m/n)}{\sin(\pi/n)}.
	\]
	If \(n\) is even, the roots occur in antipodal pairs, so \(m=n/2\).
	If \(n\) is odd, the right half-plane contains either
	\((n-1)/2\) or \((n+1)/2\) roots. Consequently,
	\[
	\abs{q_{1n}}
	=
	\begin{cases}
		\displaystyle
		\frac{1}{n\sin(\pi/n)},
		&n\text{ even},\\[2mm]
		\displaystyle
		\frac{1}{2n\sin(\pi/(2n))},
		&n\text{ odd}.
	\end{cases}
	\]
	In either case, \(\sin x\leq x\) implies
	\begin{equation}
		\label{eq:fixed_dim_Q_entry}
		\abs{q_{1n}}\geq\frac1\pi.
	\end{equation}
	
	Let $\sign^+(A)=(p_{ij})_{i,j=1}^n$. 
    By \eqref{eq:fixed_dim_projection_similarity},
	\[
	p_{1n}=s^{n-1}q_{1n}.
	\]
	Therefore
	\[
	\norm{\sign^+(A)}
	\geq
	\abs{p_{1n}}
	\geq
	\frac{s^{n-1}}{\pi}.
	\]
	On the other hand, since \(Q\) is an orthogonal projection,
	\[
	\norm{\sign^+(A)}
	\leq
	\norm D\norm Q\norm{D^{-1}}
	=
	s^{n-1}.
	\]
	Thus
	\begin{equation}
		\label{eq:fixed_dim_projection_bounds}
		\frac{s^{n-1}}{\pi}
		\leq
		\norm{\sign^+(A)}
		\leq
		s^{n-1}.
	\end{equation}
	
	It remains to estimate \(\omega_A\). From
	\(A^n=e^{i\theta}s^{-n}I\), one can verify that
	\[
	(A-itI)^{-1}
	=
	\frac{
		A^{n-1}
		+itA^{n-2}
		+\cdots
		+(it)^{n-1}I
	}{
		e^{i\theta}s^{-n}-(it)^n
	}.
	\]
	If \(n\) is odd, then \(e^{i\theta}s^{-n}=s^{-n}\) is real and
	\((it)^n\) is purely imaginary. If \(n\) is even, then
	\(e^{i\theta}s^{-n}=is^{-n}\) is purely imaginary and
	\((it)^n\) is real. Hence, setting \(u=\abs t\),
	\[
	\abs{e^{i\theta}s^{-n}-(it)^n}
	=
	\sqrt{s^{-2n}+u^{2n}}.
	\]
	Since \(\norm A=1\), it follows that
	\begin{equation}
		\label{eq:fixed_dim_resolvent_family}
		\norm{(A-itI)^{-1}}
		\leq
		\frac{ \sum_{k=0}^{n-1}u^k}
		{\sqrt{s^{-2n}+u^{2n}}}.
	\end{equation}
	
	Recall that \(s\geq2\), so \(2^n\leq s^n\). If \(u\leq1/2\),
	then
	\[
	\frac{ \sum_{k=0}^{n-1}u^k}
	{\sqrt{s^{-2n}+u^{2n}}}
	\leq
	2s^n.
	\]
	If \(1/2\leq u\leq1\), then the denominator is at least \(u^n\),
	and
	\[
	\frac{ \sum_{k=0}^{n-1}u^k}{u^n}
	=
	\sum_{j=1}^n u^{-j}
	<
	2^{n+1}
	\leq
	2s^n.
	\]
	Finally, if \(u\geq1\), then
	\[
	\frac{ \sum_{k=0}^{n-1}u^k}{u^n}
	=
	\sum_{j=1}^n u^{-j}
	\leq
	n
	\leq
	2^n
	\leq
	s^n.
	\]
	Thus \eqref{eq:fixed_dim_resolvent_family} gives
	\(
	\sup_{t\in\R}\norm{(A-itI)^{-1}}
	\leq
	2s^n,
	\)
	and hence
	\[
	\omega_A\geq\frac12s^{-n}.
	\]
	Taking \(t=0\) gives
	\[
	\omega_A
	\leq
	\sigma_{\min}(A)
	=
	s^{-n}.
	\]
	Together with \eqref{eq:fixed_dim_projection_bounds}, this shows
	that
	\[
	\frac{1}{\pi2^{(n-1)/n}}
	\omega_A^{-(n-1)/n}
	\leq
	\norm{\sign^+(A)}
	\leq
	\omega_A^{-(n-1)/n}.
	\]
	
	%

\end{document}